\newcommand*\dif{\mathop{}\!\mathrm{d}}
\def\anonymizeymize{0}
\newcommand{\anonymize}[1]{}
\newcommand{\anonymize}[1]{#1}
\def\comments{1}
\newcommand{\pras}[1]{\textcolor{BrickRed}{\sf{[#1 --PR]}}}
\newcommand{\kangning}[1]{\textcolor{orange}{\sf{[#1 --KW]}}}
\newcommand{\moses}[1]{\textcolor{green}{\sf{[#1 --MC]}}}
\newcommand{\pras}[1]{}
\newcommand{\kangning}[1]{}
\newcommand{\moses}[1]{}
\def\colorful{0}
\DeclareMathOperator{\plu}{plu}
\DeclareMathOperator{\SC}{SC}
\DeclareMathOperator{\topC}{top}
\newcommand{\cg}{\succ}
\newcommand{\cgeq}{\succeq}
\renewcommand{\hat}[1]{\widehat{#1}}
\renewcommand{\sf}[1]{\textsf{#1}}
\newtheorem*{rep@theorem}{\rep@title}
\newcommand{\newreptheorem}[2]{
\newenvironment{rep#1}[1]{
 \def\rep@title{#2 \ref{##1}}
 \begin{rep@theorem}\itshape}
 {\end{rep@theorem}}}
\theoremstyle{plain}
\newtheorem*{rep@claim}{\rep@title}
\newcommand{\newrepclaim}[2]{
\newenvironment{rep#1}[1]{
 \def\rep@title{#2 \ref{##1}}
 \begin{rep@claim}\itshape}
 {\end{rep@claim}}}
\theoremstyle{plain}
\newtheorem{Alg}{Algorithm}
\begin{document}

\definecolor{myblue}{rgb}{0.15, 0.1, 0.75}
\definecolor{mypink}{rgb}{0.15, 0.55, 0.1}

\hypersetup{
    linkcolor = mypink,
    citecolor = myblue,
}

\title{
Breaking the Metric Voting Distortion Barrier \vspace{8pt}
}

\author{
\anonymize{
Moses Charikar\\\hspace{0pt}{{\sl Stanford}}
\and Prasanna Ramakrishnan \\\hspace{0pt}{{\sl Stanford}}
\and Kangning Wang\\\hspace{0pt}{{\sl Rutgers}}
\and Hongxun Wu\\\hspace{0pt}{{\sl UC Berkeley}}
}
}
\anonymize{
{\let\thefootnote\relax\footnotetext{Emails: \texttt{moses@cs.stanford.edu}, ~\texttt{pras1712@stanford.edu}, ~\texttt{kn.w@rutgers.edu}, ~\texttt{wuhx@berkeley.edu}.}}
}

\date{}

\pagenumbering{gobble}
\thispagestyle{empty}
\maketitle

\anonymize{
\vspace{0pt}
\hrule
\vspace{0pt}
}

\begin{abstract} 
We consider the following well-studied problem of metric distortion in social choice. Suppose we have an election with $n$ voters and $m$ candidates located in a shared metric space. We would like to design a voting rule that chooses a candidate whose average distance to the voters is small. However, instead of having direct access to the distances in the metric space, the voting rule obtains, from each voter, a ranked list of the candidates in order of distance. Can we design a rule that regardless of the election instance and underlying metric space, chooses a candidate whose cost differs from the true optimum by only a small factor (known as the \emph{distortion})?

A long line of work culminated in finding optimal deterministic voting rules with metric distortion $3$. However, for randomized voting rules, there is still a significant gap in our understanding: Even though the best lower bound is substantially lower at $2.112$, the best upper bound is still $3$, which is attained even by simple rules such as Random Dictatorship. Finding a randomized rule that guarantees distortion $3 - \eps$ for some constant $\eps$ has been a major challenge in computational social choice, as prevalent approaches to designing voting rules are known to be insufficient. In particular, such a voting rule must use information beyond aggregate comparisons between pairs of candidates, and cannot only assign positive probability to candidates that are voters' top choices.

In this work, we give a rule that guarantees distortion less than $2.753$. To do so we study a handful of voting rules that are new to the problem. One is \emph{Maximal Lotteries}, a rule based on the Nash equilibrium of a natural zero-sum game which dates back to the 60's. The others are novel rules that can be thought of as hybrids of Random Dictatorship and the Copeland rule. Though none of these rules can beat distortion $3$ alone, a careful randomization between Maximal Lotteries and any of the novel rules can.

\end{abstract}

\anonymize{
\vspace{10pt}
\hrule
\vspace{0pt}

\renewcommand\contentsname{\vspace{-18pt}} %

\begingroup
\let\clearpage\relax
{
\scriptsize
\begin{multicols}{2}
  \tableofcontents
\end{multicols}
}
\endgroup
}

\clearpage
\pagenumbering{arabic}


\section{Introduction}

\newcommand{\citetstar}[1]{\cite{#1}}
\newcommand{\citepstar}[1]{\cite{#1}}
\newcommand{\citetnostar}[1]{\cite{#1}}
\newcommand{\citepnostar}[1]{\cite{#1}}

Elections are a fundamental primitive in societal decision making. Through votes, people express their preferences and make collective decisions for social good. A common example of voting is single-winner elections, {in which} voters select one winner from a pool of candidates. These candidates can be persons that represent the voters, or broader social options such as potential locations to build a public facility. Voting is also applicable to {everyday} situations, such as choosing one from many lunch options for a group of colleagues, or picking a game to play among a group of friends. A \emph{voting rule} (or \emph{social choice rule}) maps the voters' preferences to a winning candidate.

A standard approach to evaluate the outcomes is to adopt the notion of utilitarian social efficiency. We assume that each voter has a cardinal utility function that maps each possible outcome to a real number quantitatively representing their preference for that outcome. From this utilitarian point of view, the optimal voting rule selects the outcome that optimizes the sum of the utilities.

In what has been classically studied and practically implemented, voting rules are often based on ordinal rankings{---}these rules make decisions based only on each voter's preference ordering, not the cardinal utilities, on the candidates. There are several {reasons why ordinal voting rules are appealing and widespread}. First, the restriction to ordinal rules simplifies the processes and the infrastructures needed for voting. Moreover, even though voters are assumed to have cardinal utilities, they may not be able to articulate them accurately, especially when these numbers represent differences in political stances in an abstract way. Finally, in a sense, this restriction to a common ordinal format gives each voter equal voting power.

Ordinal voting rules cannot always perfectly optimize social efficiency. Aiming to quantify the drawback of this format restriction{---}or this information loss from the perspective of social optimization{---}researchers have proposed the powerful notion of \emph{distortion} \citepnostar{DBLP:conf/cia/ProcacciaR06,DBLP:journals/ai/BoutilierCHLPS15,boutilier2016incomplete,DBLP:journals/ai/AnshelevichBEPS18}: It represents the worst-case ratio between the optimal efficiency and the efficiency of a particular ordinal voting rule (or in some contexts, the distortion-optimal ordinal voting rule). The worst-case distortion is generally not bounded by any constant, even after imposing normalization constraints on the cardinal utilities. This naturally calls for structural restrictions for the model.

In {their seminal work}, \bname{Anshelevich, Bhardwaj, Elkind, Postl, and Skowron} \citetstar{DBLP:conf/aaai/AnshelevichBP15,DBLP:journals/ai/AnshelevichBEPS18} proposed the influential framework of \emph{metric distortion}{---}in particular, they imposed the assumption that the voters and the candidates lie in a shared (unknown) metric space, and a voter's cardinal cost for a candidate is the distance between them in the metric space. {The assumption of metric preferences is a clean mathematical framework that stems from the large body of work on spatial models of voting \cite{enelow1984spatial,enelow1990advances,merrill1999unified,jessee2012ideology,armstrong2020analyzing}, in which voters and candidates have positions in a political spectrum, and each voter prefers nearer candidates in the spectrum. The metric assumption also captures situations when candidates are public facilities and voters' costs are their travel costs to the selected facility.}

More formally, the metric distortion of a social choice rule is defined as the supremum of the ratio between the social cost (i.e., sum of costs of voters) of this rule and the optimal social cost, over all possible metric spaces and all induced ordinal preference profiles. {While spatial models and in particular the metric assumption have their limitations and cannot perfectly capture most real-world scenarios, it has been observed that spatial models can be more accurate than other preference models \cite{tideman2011modeling}.
Aside from the credibility of the metric assumption in practice, it is still a strong theoretical guarantee that a low-distortion voting rule must perform well with respect to \emph{any} metric space. A priori, it might seem that such a strong guarantee could be impossible, but in fact, the metric structure reduces the distortion of many social choice rules to \emph{constants}, leading to the intriguing search for optimal social choice rules through the lens of metric distortion.}

\begin{figure}[!ht]
\centering
\includegraphics[scale=1]{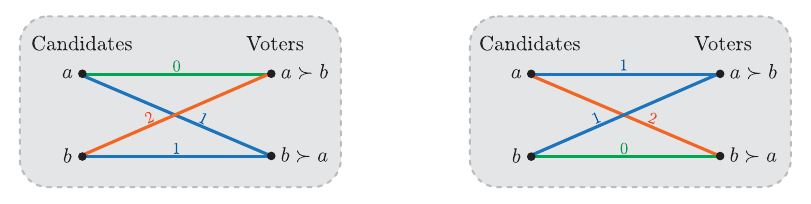}
\caption{In an election with two candidates and two disagreeing voters, both of the above metric spaces are possible. No deterministic rule can have distortion less than 3, and no randomized rule can have distortion less than 2.%
}
\label{fig:2v2c}
\end{figure}

\paragraph{The journey to distortion 3 for deterministic rules.} One fruitful line of work on metric distortion, including the original one of \citetstar{DBLP:conf/aaai/AnshelevichBP15,DBLP:journals/ai/AnshelevichBEPS18}, focuses on deterministic social choice rules. An immediate lower bound for deterministic rules is $3$ (see \Cref{fig:2v2c}): There are two candidates $\{a, b\}$ and two voters {such that} one voter prefers $a$ to $b$ and the other prefers $b$ to $a$. Choosing either candidate gives distortion of $3$. \citetstar{DBLP:conf/aaai/AnshelevichBP15,DBLP:journals/ai/AnshelevichBEPS18} also showed that any voting rule selecting from the \emph{uncovered set} (which we will introduce later along with its generalizations), such as the Copeland rule, guarantees an upper bound of $5$. This gap between $3$ and $5$ was a tantalizing one and resisted researchers' efforts (e.g.{,} \citepstar{DBLP:conf/sigecom/GoelKM17}) for a few years. \bname{Munagala and Wang} \citetnostar{DBLP:conf/ec/MunagalaW19} were the first to reduce the gap: They proposed a novel weighted variant of the uncovered set to improve the upper bound to $2 + \sqrt{5} \approx 4.236$, and also showed that selecting from a novel \emph{matching uncovered set} guarantees a distortion upper bound of $3$. However, they did not manage to show that the set is always non-empty, and left it as a conjecture. This conjecture had been studied by many researchers (e.g.{,} \bname{Kempe} \citepnostar{DBLP:conf/aaai/Kempe20a} who identified alternative and additional formulations) since then, until \bname{Gkatzelis, Halpern, and Shah} \citetstar{DBLP:conf/focs/GkatzelisHS20}, in their breakthrough result, proved it true. \citetstar{DBLP:conf/focs/GkatzelisHS20} identified the crux of the conjecture and proved the existence of a candidate that satisfies their new simplified conditions, hence showing a distortion upper bound of $3$ for both their novel rule of Plurality Matching and the one of \citetnostar{DBLP:conf/ec/MunagalaW19}. This {result} closes the gap for deterministic social choice rules. In the impressive work of \bname{Kizilkaya and Kempe} \citetnostar{DBLP:conf/ijcai/KizilkayaK22}, they proposed the novel, elegant, and practical voting rule Plurality Veto, which is also by itself a one-paragraph constructive proof of the conjecture of \citetnostar{DBLP:conf/ec/MunagalaW19}. \bname{Kizilkaya and Kempe} \citetnostar{DBLP:conf/sigecom/Kizilkaya023} further identified other related practical voting rules with distortion $3$ {that have connections to} the \emph{proportional veto core} \citepnostar{moulin1981proportional}, a classical notion in social choice.

\paragraph{The need for randomization and the barrier of 3.} The above canonical lower bound of $3$ only involves two candidates and two disagreeing voters, and one factor that induces this ``large'' distortion is the limit to \emph{deterministic} social choice rules. This limit is traditionally motivated by people's aversion to randomization for important social issues{, for which ``bad luck'' could be calamitous. 
Though this is a valid reason for sticking to deterministic rules, we point out that randomization is acceptable in some situations. In a tied election such as \Cref{fig:2v2c}, it might be more natural to randomize over the candidates, than to break the symmetry arbitrarily to pick one candidate deterministically. In fact, in real-world elections, it is common to use randomization to break ties. Beyond tie-breaking, randomization has also been used for other purposes in both practical and theoretical social choice settings. Some prominent examples include choosing citizens' assemblies via sortition (dating back to ancient Athenian democracy \cite{headlam1891election}) and fair allocation of goods using a lottery \cite{abdulkadirouglu1998random}. Another perspective is that from our utilitarian point of view, randomized voting rules can be considered equivalent to fractional voting rules, which are acceptable in some situations.
One example is allocating funding to the candidate projects in participatory budgeting. Another example is when rotation of authority is involved, such as the consulship in ancient Rome where two consuls were elected annually and switched their roles and responsibilities every month \cite{abbott1901history}. We refer the reader to \cite{brandt2017rolling} and the references therein for a more in-depth discussion of randomization in social choice.}

Soon after the first work on metric distortion, \bname{Feldman, Fiat, and Golomb} \citetstar{DBLP:conf/sigecom/FeldmanFG16} and \bname{Anshelevich and Postl} \citetnostar{DBLP:journals/jair/AnshelevichP17} independently studied metric distortion without the restriction to deterministic rules. They both showed an upper bound of $3$ for the simple rule of Random Dictatorship (which outputs the favorite candidate of a uniformly random voter), and gave a lower bound of $2$ using {the} same two-voter-two-candidate example, leaving open the possibility that much better metric distortion could be achieved by randomized voting rules. This gap between $2$ and $3$ {has been} a very (if not the most) intriguing question in the field of distortion.

The lack of progress on this question motivated researchers to look at fine-grained distortion analysis within the instance classes of a fixed number of voters or candidates. For example, \citetnostar{DBLP:journals/jair/AnshelevichP17} showed that Random Dictatorship has distortion $3 - 2/n$ within the instance class of $n$ voters, for any $n$. Several other works provided upper bounds for the instance class of $m$ candidates, for any $m$: \citetstar{DBLP:conf/aaai/FainGMP19} proposed a rule called Random Oligarchy that can achieve an upper bound slightly worse than $3 - 2/m$;\footnote{Their upper bound is $3 - 2 \cdot \min_{p \in [0, 1]} \left(p^2 (2 - p) + (1 - p)^3 / (m - 1)\right)$, which is $3 - 2/m + O(1/m^2)$ as $m \to \infty$.} \citetnostar{DBLP:conf/aaai/Kempe20b} first showed an upper bound of $3 - 2/m$ by mixing Random Dictatorship with a rule named Proportional to Squares; \citetstar{DBLP:conf/focs/GkatzelisHS20} proposed Smart Dictatorship, a variant of Random Dictatorship, which gives the same guarantee of $3 - 2/m$. These improvements over $3$ vanish when we consider the supremum over all instances.

The first constant improvement for this gap is on the lower bound. \bname{Charikar and Ramakrishnan} \citetnostar{DBLP:conf/soda/CharikarR22} improved the lower bound to $2.1126$, while \bname{Pulyassary and Swamy} \citetnostar{pulyassary2021randomized} independently showed a lower bound of $2.0631$. For the upper bound, there have been many different rules with distortion $3$, but also many classes of rules which are known to be \emph{unable} to beat $3$: deterministic rules \citepstar{DBLP:journals/ai/AnshelevichBEPS18}, rules that only {consider} the top choices of the voters \citepstar{DBLP:conf/aaai/GrossAX17}, and rules that only look at pairwise comparisons of candidates (i.e.{,} weighted tournament rules) \citepstar{DBLP:conf/sigecom/GoelKM17}. These results rule out a large swath of voting rules that have been studied in the metric distortion literature, which raises the following pressing question.

\begin{question}
Is there a voting rule with metric distortion better than $3$? What might such a voting rule look like?
\end{question}

In this work, we answer this question %
by showing that a randomization over simple rules can achieve distortion less than $2.753$.

\subsection{Our Techniques and Voting Rules}

\paragraph{The biased metric framework.} Our work uses a new analysis framework that refines the linear programming approach introduced by \citetnostar{DBLP:conf/soda/CharikarR22}. Each metric can be viewed as a linear constraint on a potential voting rule, and {the insight of} \citetnostar{DBLP:conf/soda/CharikarR22}  was to show that a relatively simple class of metrics, called \emph{biased metrics}, characterizes the {strictest} constraints.
However, they were only able to analyze biased metrics with some relaxations, and could only prove upper bounds for elections with three candidates. Our approach, on the other hand, allows us to precisely characterize the constraints imposed by biased metrics. The resulting framework gives us more analysis power while retaining most of the simplicity, and gives us the intuition that {enables us to} break of the barrier of $3$.

In the main body of the paper, we consider three randomized rules and analyze them and their mixtures using this framework. In \Cref{sec:revisit-known}, we also use this framework to revisit a variety of results proved in the metric distortion literature \cite{DBLP:conf/aaai/AnshelevichBP15, DBLP:conf/sigecom/FeldmanFG16, DBLP:journals/jair/AnshelevichP17, DBLP:conf/ec/MunagalaW19,
DBLP:conf/focs/GkatzelisHS20,
DBLP:conf/ijcai/KizilkayaK22} 
and show that they have short, simple proofs once the biased metric framework has been established. {These proofs suggest} that the framework may be a helpful primitive for future work in the area.

\paragraph{A note on weighted tournament rules.} 

\emph{Weighted tournament rules} (or \emph{C2 rules} \citepnostar{fishburn1977condorcet}) are a special class of voting rules that only consider pairwise comparisons (for each pair of candidates $i, j$, the proportion of voters that prefer $i$ over $j$). Weighted tournament rules are desirable in many settings since they are often simple, interpretable, and efficiently implementable by sampling voters. 

The Maximal Lotteries rule discussed in \Cref{sec:ml} is a weighted tournament rule, and is in fact optimal among such rules. Our other rules, including Random Consensus Builder (\Cref{ssec:rcb}), Random Dictatorship on the (Weighted) Uncovered Set (\Cref{ssec:radius}), and Random Dictatorship on the {Quasi-Kernel} (\Cref{sec:indep-set}), are ``almost'' weighted tournament rules: They only consider pairwise comparisons of the candidates and the ordering of a uniformly random voter. Note that this modification still allows the rules to be efficiently implemented with sampling.

\paragraph{Maximal Lotteries.} The first rule we study is \emph{Maximal Lotteries}. According to \citetstar{brandt2017rolling}, it (along with its variants) was first considered by \bname{Kreweras} \citetstar{kreweras1965aggregation}, independently rediscovered and studied in detail by \bname{Fishburn} \citetstar{fishburn1984probabilistic}, and later also independently rediscovered by \citetstar{laffond1993bipartisan, fisher1995tournament, felsenthal1992after, rivest2010optimal}. This rule has not been studied in the context of distortion.

Maximal Lotteries formulates the following zero-sum game: Two players 1 and 2 each {choose} a distribution over the candidates, and then {nature} independently draws a candidate $c_1$ from Player 1's distribution, a candidate $c_2$ from Player 2's distribution, and a uniformly random voter $v$. Player $1$ wins if $v$ prefers $c_1$ to $c_2$ and vice versa, breaking ties uniformly in case $c_1 = c_2$. Each player aims to maximize their winning probability. Maximal Lotteries outputs a Nash equilibrium of this zero-sum game, which can be computed in polynomial time.

We show that Maximal Lotteries has distortion $3$. This {result} on its own resolves an interesting question on the optimal distortion of weighted tournament rules. \bname{The work of} \cite{DBLP:conf/sigecom/GoelKM17} showed that no such rule can have distortion better than $3$, and the best previously known upper bound was $2 + \sqrt{5}$ \cite{DBLP:conf/ec/MunagalaW19}. Additionally, our framework admits a finer characterization on the worst-case instances: {Intuitively, we can show that in any metric space where Maximal Lotteries has distortion close to 3, nearly half of the voters must be tightly clustered around the optimal candidate. But then, if much more than half the voters prefer $c_1$ over $c_2$, then $c_1$ cannot be much farther from the optimal candidate than $c_2$, since some voter in the tight cluster prefers $c_1$ over $c_2$. Our strategy is to design rules that can leverage this structure, and perform particularly well on the instances for which Maximal Lotteries performs poorly.}

\paragraph{Random Consensus Builder.}

Motivated by the discussion above, we conceptually build a directed graph {in which} the vertices are the candidates. We draw an edge from $c_1$ to $c_2$ if $c_1$ beats $c_2$ with a large margin in their pairwise comparison. We are inspired by the following graph theory fact: In any directed graph, there exists an independent set, so that any vertex in the graph can be reached from a vertex in the independent set in at most two steps {\cite{chvatal1974every}}. {(Such an independent set is called a quasi-kernel.)} When Maximal Lotteries is ``bad'', a candidate in this independent set must be close to the true optimal candidate. Additionally, candidates in this independent set must be relatively even in their pairwise comparisons due to our construction of the graph. These additional structures make Random Dictatorship on the independent set perform much better than distortion $3$ in these cases.

Our Random Consensus Builder rule utilizes this intuition but only implicitly picks this independent set.\footnote{For the interested reader, another voting rule, Random Dictatorship on the {Quasi-Kernel}, which more directly uses this idea, is discussed in \Cref{sec:indep-set}.} Random Consensus Builder picks a uniformly random voter and looks at the candidates from her least preferred one to her most preferred one. When a candidate $c$ is encountered, all candidates that $c$ can pairwise beat with a large margin are removed. The last candidate encountered during the process is the winner. Conceptually, Random Consensus Builder naturally balances the opinion of a random voter with the general consensus.

Using our framework, we show that a randomization between Maximal Lotteries and Random Consensus Builder with proper parameters has distortion at most $2 \sqrt{2} \approx 2.82843$.

\paragraph{RaDiUS: \underline{Ra}ndom \underline{Di}ctatorship on the (Weighted) \underline{U}ncovered \underline{S}et.}

Our analysis of Random Consensus Builder uses properties that are reminiscent of the weighted uncovered set, proposed by \citetnostar{DBLP:conf/ec/MunagalaW19} who showed that an arbitrary selection from this set (with a proper parameter) gives distortion $2 + \sqrt{5} \approx 4.23607$. This {connection} motivates us to propose RaDiUS (\underline{Ra}ndom \underline{Di}ctatorship on the (Weighted) \underline{U}ncovered \underline{S}et) that outputs a uniformly random voter's favorite candidate within the weighted uncovered set.

It turns out RaDiUS can give better guarantees than Random Consensus Builder. Using our framework, we show that a randomization between Maximal Lotteries and RaDiUS with proper parameters has distortion at most $2.75271$.

\subsection{Further Related Work}

There has been a large body of work on distortion in social choice. We refer the reader to the survey of \citetstar{DBLP:conf/ijcai/AnshelevichF0V21} for a more detailed overview of the field; below we briefly discuss some of them.

The first works on distortion did not impose the metric-space condition, {and instead assumed that the} utilities are non-negative, and defined distortion of a rule as the worst-case ratio between the optimal sum of utilities and the sum of utilities attained by the rule \citepnostar{DBLP:conf/cia/ProcacciaR06}. Many works made the unit-sum utility assumption, {in which every voter's total utility over all candidates} equals $1$, to avoid uninteresting worst cases. Under this assumption, {a variety of upper and lower bounds have been proved, parameterized by the number of candidates in an instance, $m$.} \bname{The work of} \citetstar{DBLP:journals/ai/CaragiannisP11} showed that the Plurality rule has distortion $O(m^2)$. A matching $\Omega(m^2)$ lower bound for deterministic rules was later given by \citetstar{DBLP:journals/jair/CaragiannisNPS17}. Under the same assumption, \bname{the work of} \citetstar{DBLP:journals/ai/BoutilierCHLPS15} proposed a randomized rule with distortion $O(\sqrt{m} \log^* m)$ and gave a lower bound of $\Omega(\sqrt{m})$ for any rule. This gap was closed by \citetstar{DBLP:journals/teco/EbadianKPS24} who proposed a Stable Lottery (and Stable Committee) rule, which was inspired by fair committee selection literature, with distortion $O(\sqrt{m})$.

\bname{The work of} \citetstar{DBLP:conf/sigecom/GkatzelisL023} aimed to provide best-of-both-worlds guarantees for both the metric setting and the non-metric setting. They proposed novel deterministic and randomized social choice rules which guarantee constant metric distortion and almost optimal (for deterministic and randomized rules correspondingly) non-metric distortion.

Researchers have also looked beyond single-winner elections {which can only select} one winner from the candidates. \bname{The work of} \citetstar{DBLP:journals/ai/CaragiannisSV22} considered a model of multi-winner elections in the metric distortion setting, {for which} they give complete characterizations for the optimal metric distortion. Graph problems such as selecting a perfect bipartite matching (e.g.{,} \citepnostar{DBLP:conf/aaai/AnshelevichS16,DBLP:journals/teco/AnshelevichZ21,DBLP:journals/jair/AmanatidisBFV22,DBLP:conf/sigecom/AnariCR23,caragiannis2024truthful}) in both metric and non-metric distortion settings have also received great attention.

Most works in this field aim to optimize the utilitarian aggregation of preferences, i.e., the sum or average of the utilities. Other works consider ``fair'' ways to aggregate preferences: \bname{The work of} \citetstar{DBLP:conf/sigecom/GoelKM17} proposed the \emph{fairness ratio} in the metric setting, which is inspired by the mathematical idea of majorization and replaces the utilitarian aggregation by the worst-case symmetric monotonic norm. \citetstar{DBLP:conf/sigecom/GoelKM17} showed a lower bound of $3$ and an upper bound of $5$ {on the optimal fairness ratio, and} \citetstar{DBLP:conf/focs/GkatzelisHS20} closed this gap by showing their Plurality Matching rule has a fairness ratio of $3$. \bname{The work of} \citetstar{DBLP:journals/teco/EbadianKPS24} studied proportional fairness, Nash welfare, and the core in the non-metric setting, and gave {tight} distortion bounds of $O(\log m)$ for all these objectives.

The distortion framework serves as a valuable tool to quantify the efficiency of voting rules, and therefore has been adopted in the study of various aspects of voting, such as the tradeoff between the amount of communication and the efficiency performance of voting rules \citepstar{DBLP:conf/aaai/GrossAX17,DBLP:conf/aaai/FainGMP19,DBLP:conf/nips/MandalPSW19,DBLP:conf/aaai/Kempe20b,DBLP:conf/sigecom/MandalSW20}. The framework is also used to quantify the effect of certain social structures: \bname{The work of} \citetstar{DBLP:conf/sigecom/ChengDK17,DBLP:conf/aaai/ChengDK18} studied the representativeness of candidates on the population of voters. In particular, they showed that when the candidates are drawn independently from the voter population, the metric distortion of social choice rules becomes much better. \bname{The work of} \citetstar{DBLP:conf/sigecom/FlaniganPW23} studied the effect of public spirit in the non-metric distortion framework. They showed that if every voter altruistically ranks the candidate according to a mixture of her own preference and the average preference of the voters, then the distortion of many social choice rules will drastically improve.

\section{Preliminaries and Notation}\label{sec:prelims}

\paragraph{Elections.} An \emph{election instance} is defined by a tuple $\mathcal{E} = (V, C, \cg_V)$, where $V$ is a set of $n$ voters, $C$ is a set of $m$ candidates, and $\cg_V$ {maps each voter $v$ to a linear order over the candidates $\cg_v$,} {such that} $i \cg_v j$ if voter $v$ prefers candidate $i$ over candidate $j$. Throughout the paper, we will use $i, j, k, a, b, c$ to refer to candidates and $u,v$ to refer to voters. We will have $i^*$ denote the true best candidate when the metric space is fixed.

For a condition $\mathcal{P}$, we let $S_{\mathcal{P}}$ denote the subset of voters whose preferences satisfy $\mathcal{P}$. We also let $s_{\mathcal{P}} = |S_{\mathcal{P}}|/n$ be the proportion of these voters overall, or equivalently, the probability that a uniformly random voter's preference list satisfies property $\mathcal{P}$. For example, $S_{i \cg j}$ is the set of voters that prefer $i$ over $j$, $S_{i, j \cg k}$ is the set of voters that prefer $i$ and $j$ over $k$, and $S_{I \cg j}$ is the set of voters that prefer all the candidates in $I$ over $j$. Note that if $j \in I$ then $S_{I \cg j} = \varnothing$ and $s_{I\cg j} = 0$. We also let $\plu(i) = s_{i\cg C\setminus \{i\}}$ be the proportion of voters whose first choice is $i$. 

In \Cref{sec:ml}, we will also allow the property $\mathcal{P}$ to be randomized, in which case {$s_{\mathcal{P}}$ is treated as the expected fraction of voters in the set-valued random variable $S_{\mathcal{P}}$.} 
For example, if $D$ is a distribution over candidates, then $s_{D \cg j}$ denotes the probability that a uniformly random voter prefers a candidate $i \sim D$ over $j$. That is, %
$$s_{D \cg j} =\Prx_{i \sim D, v\sim V} [i\cg_v j] = \Ev_{i \sim D, v\sim V}[\textbf{1}[i\cg_v j]].$$
Note that in the case that $i = j$, it will be natural to treat $\textbf{1}[i\cg_v j]$ as $\frac12$. To this end, in \Cref{sec:ml} we will let $s_{i\cg i}$ be $\frac12$ instead of 0{---}this makes it so that the \emph{Condorcet Game} is well defined, and it makes the proof of \Cref{thm:ML-ub} read much more smoothly.

We use this notation extensively and flexibly in the paper, and we may reiterate what certain instances mean in natural language to be clear.

\paragraph{Metric spaces.} A metric space is a pair $(\mcal{M}, d)$ of a set $\mcal{M}$ and a distance metric $d: \mcal{M} \times \mcal{M} \to \mathbb{R}_{\geq 0}$ with the following three properties:
\begin{enumerate}[label=(\arabic*)]
	\item Positive definiteness: $d(x, y) \geq 0$ with equality if and only if $x = y$, 

	\item Symmetry: $d(x, y) = d(y, x)$,

	\item Triangle inequality: $d(x, y) \leq d(x, z) + d(z, y)$.

\end{enumerate}

We will extend the notation of $d$ to operate directly on the voters and candidates rather than the points they occupy in the metric space. Note that for simplicity we allow voters and candidates to be co-located in the space, so their distance may be zero. (That is to say, technically, we consider \emph{pseudometric spaces}. This simplification does not change the distortion of any voting rule.)

When defining biased metrics in \Cref{sec:biased} and proving lower bounds in \Cref{sec:lbs}, we will specify metric spaces {defining only} the distances between candidates and voters {explicitly}, and {leaving} the other distances implicit. We make no use of the implicit distances, but to fully specify the metric space one can use the \emph{graph distance closure} of the explicitly defined distances. i.e., if the distance between two points is not explicitly defined, it should be taken to be the shortest path between those two points using the explicitly defined distances.

Given an election instance $\mathcal{E} = (V, C, \cg_V)$, we say that a distance metric $d$ is \emph{consistent} with $\mathcal{E}$ (denoted $d \rhd \mathcal{E}$) if for all $v \in V$, $i \cg_v j$ implies $d(i, v) \leq d(j, v)$.

\paragraph{Voting rules, social cost, and distortion.} For an election with underlying distance metric $d$, we {define} the social cost of a candidate $i$ to be their average distance to the voters. i.e., 
$$\SC(i, d) := \frac1n \sum_{v \in V} d(i, v).$$
(In the literature, the social cost of a candidate is usually the \emph{sum}, rather than the average, of distances. Since we are concerned {with} the ratio between costs, we can equivalently use the average-distance version, which we find easier to work with.) We will often just write $\SC(i)$ when the relevant distance metric has been fixed, or is clear from context.

A \emph{voting rule} (or \emph{social choice rule}) $f$ is a function that maps every election instance $\mcal{E}$ to a distribution over its candidates. Given this, the \emph{distortion} of $f$ is given by
$$\mathrm{distortion}(f) = \sup_{\mcal{E}} \sup_{d: d\rhd \mcal{E}} \frac{\displaystyle\Ev_{j \sim f(\mcal{E})}[\SC(j, d)]}{\displaystyle\min_{i\in C} \SC(i, d)}. $$
We will also often refer to the distortion of $f$ on a particular metric $d$, which is just the operand of the suprema above. When the election instance and voting rule are fixed, we will use $p_j$ to denote the probability that the rule chooses candidate $j$ on the instance.

{In some discussions, we reference the particular class of \emph{weighted tournament rules}. Given an election instance, one can construct a weighted tournament graph whose vertices are the candidates $C$, and the weight of the edge $(i, j)$ is $s_{i\cg j}$. A weighted tournament rule is a voting rule that only uses the weights in this graph $\langle s_{i \cg j} \rangle_{i, j \in C}$ rather than the full ordinal rankings of the voters $\cg_V$. More precisely, a voting rule $f$ is a weighted tournament rule if, for any two election instances $\mcal{E} = (V, C, \cg_V)$ and $\mcal{E}' = (V, C, \cg_V')$ for which $\langle s_{i \cg j} \rangle_{i, j \in C}$ is the same, $f(\mcal{E}) = f(\mcal{E}')$. Some prominent examples include the \emph{Copeland rule}, which deterministically chooses the candidate $i$ that maximizes $|\{j: s_{i\cg j} \geq \frac12\}|$, the \emph{Borda count}, which deterministically chooses the candidate $i$ that maximizes $\sum_{j\neq i} s_{i\cg j}$, and \emph{Maximal Lotteries}, which is formally defined in \Cref{sec:ml}.}

\section{The Biased Metrics}\label{sec:biased}

The key tool that we use to understand the metric distortion of the social choice rules in \Cref{sec:mechs} is a refinement of the linear programming framework introduced by \cite{DBLP:conf/soda/CharikarR22}. 

Suppose that we have an election instance. If we can design a rule such that for any metric consistent with the instance we have
\begin{equation}\label{eq:linear-con}
\sum_{j\in C} (\SC(j) - \SC(i^*))p_j \leq \lambda \cdot 2\SC(i^*),
\end{equation}
then the rule has distortion at most $1 + 2\lambda$ on this instance. {(Although the factor of 2 can be absorbed into $\lambda$, it is convenient to keep it to cancel with the factor of $1/2$ in \Cref{def:biased}.)} %
In this view, for a rule to have low distortion it has to satisfy a set of linear constraints, with one constraint imposed by each metric. As one might expect, some constraints may be redundant, so it is helpful to try and find a small set of metrics whose constraints imply those for all of the metrics. Then, one can show that a rule has low distortion just by showing that it has low distortion on the small set of metrics. 

\cite{DBLP:conf/soda/CharikarR22} defined the set of \emph{biased metrics}, and showed that they {satisfy} this property.

\begin{definition}\label{def:biased}
Let $(x_1, \ldots, x_m)$ be a vector of nonnegative real numbers such that $x_{i^*} = 0$ for some $i^*$. Given an election instance, the \emph{biased metric} for the vector $(x_1, \ldots, x_m)$ is defined as follows. For a voter $v$ and candidate $i$, let
\begin{align*}
d(i^*, v) &= \frac12 \max_{i, j: i \cgeq_v j} (x_i - x_j),\\
d(j, v) - d(i^*, v) &= \min_{k: j \cgeq_v k} x_k.
\end{align*}
\end{definition}

The rough idea of the biased metrics is the following. Suppose we were given some fixed metric such that the distance from candidate $j$ to the optimal candidate is $x_j$ (so $x_{i^*} = 0$). Then we could imagine throwing out all of the other distances, and remaking them so that the distances from $i^*$ to the voters {are} as small as possible, and the distances from the other candidates {are} as large as possible (compared to the distances from $i^*$). The former is to make the right side of \cref{eq:linear-con} smaller and the latter is to make the left side larger, which will tighten the constraint. It turns out that {implementing this approach while} respecting the triangle inequality and the preferences {leads to} the above definition. 

\cite{DBLP:conf/soda/CharikarR22} gave proofs that biased metrics are indeed valid distance metrics, and that they tighten the constraints in \cref{eq:linear-con}. For completeness, these proofs are included in \Cref{sec:biased-pfs}.

Now that it has been established that we only need to consider the constraints imposed by biased metrics, let us see how to express these constraints. Suppose that we have a fixed biased metric given by a vector $(x_1, \ldots, x_m)$. Let $I_t = \{k \in C: x_k \leq t\}$. Notice then that $d(j, v) - d(i^*, v) > t$ if and only if $v \in S_{I_t \cg j}$. To be clear, note that if $j \in I_t$ then $S_{I_t \cg j} = \varnothing$ and $s_{I_t \cg j} = 0$. It follows that
$$\SC(j) - \SC(i^*) = \Ev_{v\sim V}[d(j, v) - d(i^*, v)] = \int_0^\infty\Prx_{v\sim V}[d(j, v) - d(i^*, v) > t] \dif t = \int_0^\infty s_{I_t \cg j} \dif t$$
and so 
$$\sum_{j\in C} (\SC(j) - \SC(i^*))p_j = \int_0^\infty \sum_{j\notin I_t}  s_{I_t \cg j} p_j \dif t.$$

We can use a similar approach to express $2\SC(i^*)$. We have that $2d(i^*, v) \leq t$ if and only if $v \in S_{\forall i\cg j,  x_i - x_j \leq t}$. This is the set of voters $v$ such that whenever $i\cg_v j$, we have $x_i - x_j \leq t$. It follows that
$$2\SC(i^*) = \Ev_{v \sim V}[2d(i^*, v)] = \int_0^\infty(1 - \Prx_{v \sim V}[2d(i^*, v) \leq t]) \dif t = \int_0^\infty(1 - s_{\forall i\cg j,  x_i - x_j \leq t}) \dif t.$$
Therefore, the constraint imposed by the biased metric is
\begin{equation}\label{eq:precise-biased}
\int_0^\infty \sum_{j\notin I_t}  s_{I_t \cg j} p_j \dif t \leq  \lambda\int_0^\infty(1 - s_{\forall i\cg j,  x_i - x_j \leq t}) \dif t.  
\end{equation}
{To show that a voting rule achieves distortion at most $1 + 2\lambda$ on a given election instance, it suffices to show that it satisfies the above constraint for each biased metric.}

{For} a sense of how the right side of this expression behaves, {we can show} that $s_{\forall i\cg j,  x_i - x_j \leq t} \leq s_{i^* \cg I_t^c}$. {It suffices to show that $S_{\forall i\cg j,  x_i - x_j \leq t} \subseteq S_{i^* \cg I_t^c}$. Suppose that $v \in S_{\forall i\cg j,  x_i - x_j \leq t}$, meaning that whenever $i \cg_v j$ we have $x_i - x_j \leq t$. It follows that if $k \cg_v i^*$, then $x_k \leq t$ (since $x_{i^*} = 0$), and so $v$ can only prefer candidates in $I_t$ over $i^*$. This {conclusion implies} that $v \in S_{i^*\cg I_t^c}$, as claimed.}

In a lot of situations, using $s_{i^* \cg I_t^c}$ in place of the more complicated expression is sufficient. To this end, {we have} the following constraint {which} implies \cref{eq:precise-biased}.

\begin{equation}\label{eq:weaker-biased}
\int_0^\infty \sum_{j\notin I_t}  s_{I_t \cg j} p_j \dif t \leq  \lambda\int_0^\infty(1 - s_{i^* \cg I_t^c}) \dif t.  
\end{equation}

\cite{DBLP:conf/soda/CharikarR22} derived \cref{eq:weaker-biased} (though written in a more discrete form), and then noted that one could consider an even stricter collection of constraints, which we will use to analyze Maximal Lotteries in \Cref{sec:ml}.
\begin{equation}\label{eq:fancy}
\sum_{j\notin I}  s_{I \cg j} p_j \leq  \lambda (1 - s_{i^*\cg I^c}).
\end{equation}
In particular, to show \cref{eq:precise-biased} for all possible metrics, it suffices to show the above inequality for all sets {$I \notin \{\varnothing, C\}$}, and all $i^* \in I$. The convenience of this is that there are finitely many possible constraints rather than the infinitely many constraints {one would have to consider if one used} \cref{eq:precise-biased} or \cref{eq:weaker-biased} (since each choice of the vector $(x_1, \ldots, x_m)$ may give a different constraint). %
Moreover, the metric space has been completely abstracted away{---}these constraints only involve terms that come from the election instance alone. However, it should be noted that \cref{eq:precise-biased} loses no generality (a rule has distortion $1 + 2\lambda$ if and only if it satisfies \cref{eq:precise-biased} for all biased metrics), but \cref{eq:weaker-biased} and \cref{eq:fancy} may lose generality (the implication only goes one way).

To conclude this section, we introduce some notation that makes \cref{eq:precise-biased} easier to discuss. %

\begin{figure}[!h]
\centering
\includegraphics[scale=0.8]{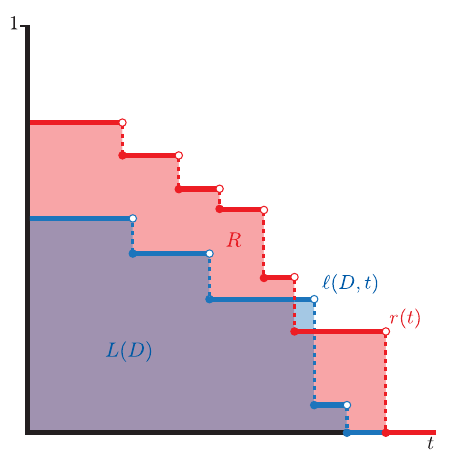}
\caption{An example of the functions $r(t)$, $\ell(D, t)$, and the areas $R$ and $L(D)$.}
\end{figure}

Once an election instance is fixed, we let
$$r(t) = 1 - s_{\forall i\cg j,  x_i - x_j \leq t} \qquad \text{and} \qquad R = \int_0^\infty r(t) \dif t.$$
Given a distribution $D$ over the candidates which chooses candidate $j$ with probability $p_j$, we let 
$$\ell(D, t) = \sum_{j\notin I_t} s_{I_t\cg j} p_j \qquad \text{and} \qquad L(D) = \int_0^\infty \ell(D, t) \dif t.$$

With this notation, we would like to design a rule which outputs a distribution $D$ such that for all biased metrics, $L(D)/R \leq \lambda$ for a small fixed $\lambda${. (T}o get distortion less than 3 we need $\lambda < 1${.)} Though the expression for $r(t)$ may seem complicated and difficult to work with, ultimately we only use it in two simple ways. As mentioned, in \Cref{sec:ml} we only use $r(t) \geq 1 - s_{i^* \cg I_t^c}$. In \cref{sec:mechs} it is only needed for \Cref{prop:small-r-consistent-metric}, which roughly says that if $r(t)$ is small, then the metric admits a nice structure that can be leveraged to get better distortion bounds.

\section{Maximal Lotteries}\label{sec:ml}

In this section, we will study the distortion of Maximal Lotteries \citepnostar{kreweras1965aggregation}. The voting rule is based on a zero-sum game (in our formulation, a constant-sum game) 
between two players Alice and Bob. The details of the game and the voting rule are below.

\begin{framed}
    \begin{center}
        \textbf{The Condorcet Game}
    \end{center}

    \begin{itemize}
        \item Simultaneously, Alice picks a distribution $D_A$ and Bob picks a distribution $D_B$ over the candidates. 

        \item {Sample} $a \sim D_A$ and $b \sim D_B$.

        \item Alice and Bob's payoffs are $s_{a\cg b}$ and $s_{b\cg a}$ respectively. If $a = b$ then each player gets $\frac12$.
         
    \end{itemize}

    \begin{center}
        \textbf{Maximal Lotteries (ML)}
    \end{center}

    \begin{itemize}

    	\item Choose a candidate from any Nash equilibrium distribution of the Condorcet game.

    \end{itemize}
\end{framed}

For the Condorcet game, note that under the notation we introduced in \Cref{sec:prelims}, once $D_A$ and $D_B$ are picked the payoffs for Alice and Bob are $s_{D_A \cg D_B}$ and $s_{D_B \cg D_A}$ respectively. We remind the reader that in this section we treat $s_{i \cg i}$ as $\frac12$ so that if both players choose the same candidate, their payoffs are equal.

Suppose that we fix an election instance. Let $D$ be the distribution output by ML, which chooses candidate $i$ with probability $p_i$. Let $P(I) = \sum_{i\in I} p_i$ be the probability that a candidate in $I$ is chosen. Let $D(I)$ be the distribution conditioned on the chosen candidate coming from $I$. i.e., a candidate $i \in I$ is chosen with probability $p_i/P(I)$. Note that $D(I)$ is not well-defined if $P(I) = 0$, so we will deal with cases {which lead to $P(I) = 0$} separately.

We will prove the following theorem. 

\begin{theorem}\label{thm:ML-ub}
For any fixed biased metric and any $t \geq 0$, we have
$$\ell(D, t) \leq  \frac{P(I_t^c)}{2} \leq r(t).$$
In particular, this implies that ML has distortion at most $3$.
\end{theorem}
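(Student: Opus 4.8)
\medskip

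\noindent\textbf{Proof plan.}
The plan is to prove the two inequalities $\ell(D,t)\le P(I_t^c)/2$ and $P(I_t^c)/2\le r(t)$ separately, and then integrate over $t$. Throughout I use only two consequences of $D$ being a Nash equilibrium of the Condorcet game: since that game is symmetric and constant-sum (payoffs summing to $1$, with the $\tfrac12$ tie convention) its value is $\tfrac12$, so playing $D$ guarantees Alice the value against any pure response, giving $s_{D\cg b}\ge\tfrac12$ for every candidate $b$; and by complementary slackness $s_{D\cg b}=\tfrac12$ whenever $p_b>0$ (if $s_{D\cg b}>\tfrac12$ then $b$ is not a best response for Bob, so $p_b=0$). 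In particular $s_{D\cg D(I)}=\tfrac12$ for any $I$ with $P(I)>0$, since $D(I)$ is supported inside $\operatorname{supp}(D)$.

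First I would handle $\ell(D,t)\le P(I_t^c)/2$. Fix a biased metric and $t$, write $U=I_t^c$, and assume $P(I_t),P(U)>0$ (the degenerate cases are easy). For any $j\notin I_t$ one has $s_{I_t\cg j}\le s_{D(I_t)\cg j}$, since a voter ranking all of $I_t$ above $j$ ranks above $j$ whichever candidate is drawn from $D(I_t)$; summing against $p_j$ and factoring out $P(U)$,
$$\ell(D,t)=\sum_{j\notin I_t}s_{I_t\cg j}\,p_j\ \le\ \sum_{j\notin I_t}s_{D(I_t)\cg j}\,p_j\ =\ P(U)\cdot s_{D(I_t)\cg D(U)}.$$
Writing $D=P(I_t)D(I_t)+P(U)D(U)$ and using $s_{D\cg D(U)}=\tfrac12$ and $s_{D(U)\cg D(U)}=\tfrac12$ gives $P(I_t)\,s_{D(I_t)\cg D(U)}+P(U)\cdot\tfrac12=\tfrac12$, hence $s_{D(I_t)\cg D(U)}=\tfrac12$ and $\ell(D,t)\le P(U)/2$. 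If $P(U)=0$ this is trivial; if $P(I_t)=0$ then (as $i^*\in I_t$) every candidate in $\operatorname{supp}(D)$ lies in $U$ and is $\ne i^*$, so $\ell(D,t)\le\sum_j p_j s_{i^*\cg j}=1-s_{D\cg i^*}\le\tfrac12=P(U)/2$.

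The second inequality $P(I_t^c)/2\le r(t)$ is where I expect the real difficulty. The structural input from the biased metric is: for a far candidate $k$ (one with $x_k>t$), any voter with $k\cg_v i^*$ has the pair $(k,i^*)$ witnessing $x_k-x_{i^*}=x_k>t$ and so is not good; hence $s_{k\cg i^*}\le r(t)$ for every far $k$, and more generally $s_{k\cg j}\le r(t)$ whenever $x_j<x_k-t$. One then wants to combine this with the Nash inequalities $s_{D\cg b}\ge\tfrac12$ — taken for $b=i^*$, for the far candidates, and as equalities on $\operatorname{supp}(D)$ — to force $\sum_{a\in I_t^c}p_a\le 2r(t)$. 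The obstacle is the constant: the most direct accounting, plugging $s_{k\cg i^*}\le r(t)$ into $\tfrac12\le s_{D\cg i^*}=\sum_j p_j s_{j\cg i^*}\le r(t)\,P(I_t^c)+P(I_t)$, yields only $r(t)\ge P(I_t^c)-\tfrac12$, which is strictly weaker than the claim unless $P(I_t^c)=1$. To recover the factor $2$ I would instead regard the target as a linear consequence over the polytope $\{p\ge 0,\ \sum_a p_a=1,\ s_{D\cg a}\ge\tfrac12\ \forall a\}$ and hunt for the right dual certificate — presumably a weighted combination of the constraint for $i^*$ with those of the ``very close'' candidates $a$ satisfying $x_a<x_\kappa-t$ (where $\kappa$ is the closest far candidate in $\operatorname{supp}(D)$), for which $s_{k\cg a}\le r(t)$ holds simultaneously for all far $k$; arranging this certificate so its slack is exactly $2r(t)$ is the crux.

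Finally, granting both inequalities, $\ell(D,t)\le r(t)$ for all $t\ge 0$, so $L(D)=\int_0^\infty\ell(D,t)\,\dif t\le\int_0^\infty r(t)\,\dif t=R$. Since this holds for every biased metric, the framework of \Cref{sec:biased} with $\lambda=1$ (i.e.\ \cref{eq:precise-biased}) gives $\mathrm{distortion}(\mathrm{ML})\le 1+2\lambda=3$.
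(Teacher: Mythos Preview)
Your treatment of the first inequality $\ell(D,t)\le P(I_t^c)/2$ is correct and essentially identical to the paper's: bound $s_{I_t\cg j}$ by $s_{D(I_t)\cg j}$, then use the equilibrium to get $s_{D(I_t)\cg D(I_t^c)}=\tfrac12$. Your degenerate-case arguments are fine too.

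The second inequality is where you have a genuine gap. You correctly diagnose that the naive use of $s_{D\cg i^*}\ge\tfrac12$ only yields $r(t)\ge P(I_t^c)-\tfrac12$, which is too weak. But your proposed fix --- searching for a dual certificate by combining the equilibrium constraints $s_{D\cg a}\ge\tfrac12$ over several ``very close'' candidates $a$ --- is headed in the wrong direction. The paper does not squeeze more out of the equilibrium polytope; instead it brings in a separate combinatorial fact, a \emph{triangle inequality} for the payoffs:
\[
s_{A\cg B}\ \le\ s_{A\cg C}+s_{C\cg B}
\]
for any three strategies $A,B,C$ (proved by checking pointwise that $\mathbf{1}[i\cg_v k]+\mathbf{1}[k\cg_v j]-\mathbf{1}[i\cg_v j]\ge 0$, with the $\tfrac12$ tie convention handling coincidences). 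Applying this with $A=D(I_t)$, $C=D(I_t^c)$, $B=i^*$ gives
\[
s_{D(I_t)\cg i^*}\ \le\ s_{D(I_t)\cg D(I_t^c)}+s_{D(I_t^c)\cg i^*}\ =\ \tfrac12+(1-q),
\]
where $q:=s_{i^*\cg D(I_t^c)}$. Plugging this bound on $r_0:=s_{D(I_t)\cg i^*}$ into the equilibrium inequality $\tfrac12\le s_{D\cg i^*}=P(I_t^c)(1-q)+P(I_t)\,r_0$ and simplifying gives exactly $P(I_t^c)/2\le 1-q$. Since $r(t)\ge 1-s_{i^*\cg I_t^c}\ge 1-s_{i^*\cg D(I_t^c)}=1-q$, this closes the argument. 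Your approach using more equilibrium constraints would not recover the missing factor of $2$; the triangle inequality is the key lemma you are missing.
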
  

By a theorem due to \cite{DBLP:conf/sigecom/GoelKM17}, no randomized or deterministic weighted tournament rule can have distortion better than 3, so ML is optimal among weighted tournament rules. 

\begin{proof}[Proof of \Cref{thm:ML-ub}]

For ease of notation, let us fix $t$ and {write} $I = I_t$. {We will frequently use the following fact: if $j \notin I$, then $$s_{I \cg j} \leq \min_{i \in I} s_{i \cg j} \leq s_{D(I) \cg j}.$$ Intuitively, if we consider the sets $S_{i \cg j}$ for $i \in I$, then from left to right, the terms above correspond to the proportion of voters in the intersection of the sets, the smallest set, and a randomly chosen set (in expectation) respectively.} \\

Let us first prove the theorem in the cases where $P(I)$ or $P(I^c)$ are zero, so that afterwards we can assume that $D(I)$ and $D(I^c)$ are well-defined.

If $P(I^c) = 0$, then we simply have that $\ell(D, t) = \frac{P(I^c)}{2} = 0$, and the theorem easily follows. If $P(I) = 0$, we need to show that $\ell(D, t) \leq \frac12 \leq r(t).$ Then {using the fact that $i^*\in I$, we have} that 
$$\ell(D, t) = \sum_{j \notin I} s_{I \cg j} p_j\leq \sum_{j \notin I} s_{i^* \cg j} p_j = s_{i^* \cg D}.$$
On the other hand, we have
$$r(t) \geq 1 - s_{i^* \cg I^c} \geq 1 - \min _{j \notin I}s_{i^* \cg j} \geq 1 - \sum_{j\in I^c}  s_{i^* \cg j} p_j = 1 - s_{i^* \cg D}.$$
We have that $s_{i^* \cg D} \leq \frac12$ since $D$ weakly beats 
the strategy of deterministically picking $i^*$, so $\ell(D, t) \leq \frac12 \leq r(t)$. {This proves the theorem for the cases in which one of $P(I)$ or $P(I^c)$ is zero}.\\

Henceforth, {let us} assume that $D(I)$ and $D(I^c)$ are well-defined. First, using the fact that $s_{I\cg j} \leq \displaystyle\min_{i\in I} s_{i \cg j} \leq s_{D(I) \cg j}$, we have
$$\ell(D, t) = \sum_{j \notin I} s_{I \cg j} p_j\leq \sum_{j \notin I} s_{D(I) \cg j} p_j = s_{D(I) \cg D(I^c)}\cdot P(I^c).$$
Similarly, using  $s_{i^*\cg I^c} \leq \displaystyle\min_{j\notin I} s_{i^*\cg j} \leq s_{i^*\cg D(I^c)}$, we have 
$$r(t) \geq 1 - s_{i^*\cg I^c} \geq 1 - s_{i^*\cg D(I^c)}.$$
Therefore, it suffices to show that 
\begin{equation}\label{eq:ML-suffices}
s_{D(I) \cg D(I^c)}\cdot P(I^c) \leq \frac{P(I^c)}{2} \leq 1 - s_{i^*\cg D(I^c)}.
\end{equation}

To prove this, we will rely on two somewhat general properties of any equilibrium. The first claim, below, is equivalent to the first inequality in \cref{eq:ML-suffices}.

\begin{claim} $s_{D(I) \cg D(I^c)} \leq \frac12$. \end{claim}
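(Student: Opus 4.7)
The plan is to establish the claim by a one-deviation argument against the Nash equilibrium property of $D$. The key preliminary observation is that the Condorcet game is symmetric and constant-sum: for all $i, j \in C$ we have $s_{i \cg j} + s_{j \cg i} = 1$ (using the convention $s_{i\cg i} = \tfrac12$), so Alice's and Bob's payoffs always sum to $1$. In a symmetric constant-sum game, any maximin strategy for one player is also maximin for the other, so we may assume without loss of generality that the ML output $D$ is played by both Alice and Bob in equilibrium, and that the value to each is $\tfrac12$. This justification is really the only conceptual step; everything after it is an elementary decomposition.

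Next I would record a self-play identity that will be used twice: for any distribution $D'$ over candidates,
\[
s_{D' \cg D'} = \Prx_{a, b \sim D',\, v \sim V}[a \cg_v b] = \tfrac12.
\]
This follows because swapping $a$ and $b$ is a symmetry of the joint distribution, while $\mathbf{1}[a \cg_v b] + \mathbf{1}[b \cg_v a] = 1$ (with our diagonal convention), so the two equal probabilities must each be $\tfrac12$.

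Finally, consider Alice deviating from $D$ to $D(I)$ (we are in the subcase where $P(I), P(I^c) > 0$, so $D(I)$ is well-defined). By the equilibrium property, $s_{D(I) \cg D} \leq s_{D \cg D} = \tfrac12$. On the other hand, conditioning on whether the sample from $D$ falls in $I$ or in $I^c$, and applying the self-play identity to $D(I)$,
\[
s_{D(I) \cg D} \;=\; P(I) \cdot s_{D(I) \cg D(I)} + P(I^c) \cdot s_{D(I) \cg D(I^c)} \;=\; \tfrac{P(I)}{2} + P(I^c) \cdot s_{D(I) \cg D(I^c)}.
\]
Combining these two and writing $\tfrac12 = \tfrac{P(I) + P(I^c)}{2}$ yields $P(I^c)\cdot s_{D(I) \cg D(I^c)} \leq \tfrac{P(I^c)}{2}$, and dividing by $P(I^c) > 0$ gives $s_{D(I)\cg D(I^c)} \leq \tfrac12$, as claimed.

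The only nontrivial part of this plan is the first paragraph: one must take care to argue that the ML output $D$ may be treated simultaneously as Alice's and as Bob's equilibrium strategy, and that the equilibrium value is $\tfrac12$. Both facts are standard consequences of the symmetry of the Condorcet game and the interchangeability of equilibria in constant-sum games, but they deserve to be stated explicitly. The remaining calculation is a two-line conditional decomposition, so I would not expect any further obstacles.
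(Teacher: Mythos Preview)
Your proposal is correct and is essentially identical to the paper's proof: both use that $s_{D(I)\cg D} \le \tfrac12$ (from the equilibrium property) together with the self-play identity $s_{D(I)\cg D(I)}=\tfrac12$, then decompose $s_{D(I)\cg D}$ by conditioning on $I$ versus $I^c$. The only difference is that you spell out the symmetry justification for why the equilibrium value is $\tfrac12$ and why $D$ may be taken as both players' strategy, whereas the paper simply asserts $s_{X\cg D}\le\tfrac12$ ``by definition of $D$ being an equilibrium.''
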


\begin{proof}
Intuitively, this is true because if it were the case that $s_{D(I) \cg D(I^c)} > \frac12$ then $D(I)$ could strictly beat $D$, %
contradicting the fact that it is an equilibrium. 

The proof of the claim relies on two facts. For any distribution $X$ over the candidates, we have
\begin{enumerate}
\item  $s_{X \cg D} \leq \frac12$, and 
\item $s_{X \cg X} = \frac12$.
\end{enumerate}
The first follows by definition of $D$ being an equilibrium, and the second follows by symmetry. Then by the law of conditional expectation,
\begin{align*}
\frac12 \geq s_{D(I) \cg D} &= P(I^c)s_{D(I)\cg D(I^c)} + P(I)s_{D(I)\cg D(I)}\\
&= P(I^c)s_{D(I)\cg D(I^c)} + (1 - P(I^c)) \cdot \frac12
\end{align*}
which means that $s_{D(I) \cg D(I^c)} \leq \frac12$ as claimed.
\end{proof}

Note that, applying the claim with $I$ replaced with $I^c$ we can in fact conclude that $s_{D(I) \cg D(I^c)} = s_{D(I^c) \cg D(I)} = \frac12$. 

Now it remains to show that 
$$\frac{P(I^c)}{2} \leq 1 - s_{i^* \cg D(I^c)}.$$
For brevity, let $p = P(I^c)$ and $q = s_{i^* \cg D(I^c)}$. We want to show that $\frac{p}{2} \leq 1 - q$. We will also introduce $r = s_{D(I) \cg i^*}$.
\begin{figure}[h!]
\centering
\includegraphics[scale=1]{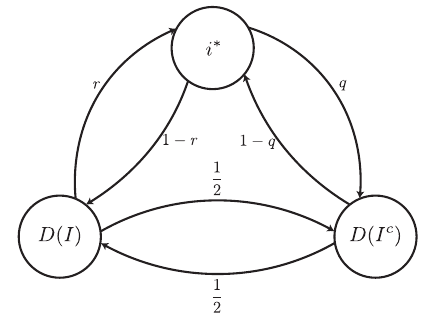}
\caption{Three strategies $D(I), D(I^c), i^*$. The edge $(A, B)$ is labeled with $s_{A\cg B}$.}
\end{figure}
Consider three different strategies for the game: $D(I)$, $D(I^c)$, and just deterministically choosing $i^*$. Then by conditional expectation and the assumption that $D$ is an equilibrium, 
$$\frac12 \leq s_{D \cg i^*} = P(I^c)s_{D(I^c)\cg i^*} + P(I)s_{ D(I) \cg i^*} = p(1 - q) + (1 - p)r.$$
On the other hand, the payoffs for these strategies satisfy a kind of triangle inequality, by the following claim.

\begin{claim}
For three strategies $A, B, C$, we have $s_{A\cg B} \leq s_{A\cg C} + s_{C\cg B}$.
\end{claim}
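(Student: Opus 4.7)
The plan is to prove the inequality pointwise and then average. Concretely, I claim that for any three candidates $a, b, c$ (not necessarily distinct) and any voter $v$,
\[
\mathbf{1}[a\cg_v b] \;\leq\; \mathbf{1}[a\cg_v c] + \mathbf{1}[c\cg_v b],
\]
using the convention $\mathbf{1}[x\cg_v x] = \tfrac{1}{2}$ that corresponds to the $s_{i\cg i} = \tfrac{1}{2}$ convention from \Cref{sec:prelims}. Once this is established, I can take expectations over independent draws $a \sim A$, $b\sim B$, $c \sim C$, and $v \sim V$. Since $s_{A\cg B}$, $s_{A\cg C}$, and $s_{C\cg B}$ each equal the expectation of the corresponding indicator under this joint distribution (the extra variables do not affect anything by independence), linearity of expectation gives the claim directly.

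All the work is in the pointwise statement, which I would prove by a short case analysis based on which of $a, b, c$ coincide. The main case is when all three are distinct: then $\cg_v$ induces a total order on $\{a,b,c\}$, and if $a\cg_v b$, examining the three possible positions for $c$ (above $a$, between $a$ and $b$, or below $b$) shows that at least one of $a\cg_v c$ and $c\cg_v b$ must hold, so the right-hand side is at least $1$. The degenerate cases are routine: if $a=b$ and $c$ differs, the left side is $\tfrac{1}{2}$ and the right side is $\mathbf{1}[a\cg_v c] + \mathbf{1}[c\cg_v a] = 1$; if $a = c \neq b$ (or symmetrically $b=c\neq a$) the right side equals $\tfrac{1}{2} + \mathbf{1}[a\cg_v b]$, which is at least $\mathbf{1}[a\cg_v b]$; and if $a=b=c$ both sides use the $\tfrac{1}{2}$ convention and the inequality becomes $\tfrac{1}{2} \leq 1$.

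The only subtle point, and the one I would be most careful to get right, is the interaction between the tie-breaking convention and the inequality. The convention was introduced precisely so that each player has payoff $\tfrac{1}{2}$ when they play identical pure strategies, and the case analysis above is really just checking that this convention is consistent with the triangle-type inequality. Once that is verified, integrating against the product distribution $A \times B \times C \times V$ completes the proof, and plugging the inequality into the earlier chain $\tfrac{1}{2} \leq s_{D\cg i^*} \leq p(1-q) + (1-p)r$ together with the bound $r = s_{D(I)\cg i^*} \leq s_{D(I)\cg D(I^c)} + s_{D(I^c)\cg i^*} = \tfrac{1}{2} + (1-q)$ will yield $\tfrac{p}{2} \leq 1-q$, finishing the proof of \Cref{thm:ML-ub}.
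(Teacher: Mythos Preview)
Your proof is correct and essentially identical to the paper's: both establish the pointwise inequality $\mathbf{1}[a\cg_v b] \leq \mathbf{1}[a\cg_v c] + \mathbf{1}[c\cg_v b]$ by the same case analysis on coincidences among $a,b,c$ (with the $\tfrac12$ convention), and then take the expectation over $a\sim A$, $b\sim B$, $c\sim C$, $v\sim V$. The final paragraph goes beyond the claim itself, but it correctly anticipates how the paper applies the inequality to finish \Cref{thm:ML-ub}.
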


\begin{proof}
We have that $s_{A \cg B} = \displaystyle\Ev_{i\sim A, j\sim B, v\sim V}[\textbf{1}[i\cg_v j]]$, so $$- s_{A\cg B} + s_{A\cg C} + s_{C\cg B} = \Ev_{i\sim A, j\sim B, k\sim C, v\sim V}[-\textbf{1}[i\cg_v j] + \textbf{1}[i\cg_v k] + \textbf{1}[k\cg_v j]].$$
We claim that the term in the expectation is always nonnegative. If $i, j, k$ are all different then this is clear because it is not possible that $i\cg_v j$ but then $k \cg_v i$ and $j \cg_v k$. If  $j = k$ or $i = k$ then the first term cancels with the second or third terms. If $i = j$ then this term is
$$ -\textbf{1}[i\cg_v i] + \textbf{1}[i\cg_v k] + \textbf{1}[k\cg_v i]] = \tfrac12.$$
This means that the term in the expectation is always nonnegative, which proves the claim. \end{proof}

Now, applying the claim to our three strategies,
$$s_{D(I) \cg i^*} \leq s_{D(I) \cg D(I^c) } + s_{D(I^c) \cg i^*},$$
{which implies}
$$r \leq \tfrac12 + (1 - q).$$
Combining this with our previous inequality relating $p,q,r$ we have
$$\frac12 \leq p(1 - q) + (1 - p)r \leq p(1 - q) + (1 - p)(\tfrac32 - q) = (1 - q) + \frac{1 - p}{2}$$
which implies that $\frac{p}{2} \leq 1 - q$ as desired. {With this, we complete} the proof of \Cref{thm:ML-ub}.
\end{proof}

\section{Two Rules that Complement Maximal Lotteries}\label{sec:mechs}

In this section, we introduce two novel social choice rules, each of which can be mixed with Maximal Lotteries to get distortion better than 3: Random Consensus Builder (RCB) and Random Dictatorship on the Uncovered Set (RaDiUS). The two rules are very much cousins of one another, and their analyses have many commonalities. RCB has a slightly worse distortion guarantee, but we include it for several reasons: First, its analysis mirrors that of RaDiUS but is simpler, making it a natural warm-up for the tighter bound given by RaDiUS. Moreover, the final specification of the mixed rule using RCB is easier to state, and the distortion bound ends up being a clean algebraic number, $2\sqrt{2}$. These nice properties, along with the fact that RCB itself has a clean interpretation, make us believe that RCB, as a voting rule, can be of independent interest.

We will analyze the rules using our biased metric framework. Intuitively, the constraints will be harder to satisfy when $r(t)$ is often small, which makes $R$ small. However, when $r(t)$ is small, we can show that the election instance and the metric admit a certain structure which can be leveraged in the analysis. This structure is characterized by the following definition. {It} interplays with the function $r(t)$ via the subsequent proposition{, which} is actually all we need from $r(t)$ in this section. 

\begin{definition}
Given an election instance, a biased metric is $(\alpha, \beta)$-\emph{consistent} if whenever $s_{k \cg i^*} \geq \beta$, we have $x_k \leq \alpha R$.
\end{definition}

{The motivation behind the naming is as follows. If $i^*$ is the optimal candidate, and another candidate $k$ is preferred over $i^*$ by a large margin, then it is reasonable to expect that $k$ is a nearly optimal candidate as well. The $(\alpha, \beta)$-consistency of a metric measures how closely it adheres to this expectation, with $\alpha$ and $\beta$ specifying what ``nearly optimal'' and ``large margin'' entail.}

\begin{proposition}\label{prop:small-r-consistent-metric}
If $r(\alpha R) < \beta$ then the metric is $(\alpha, \beta)$-consistent.
\end{proposition}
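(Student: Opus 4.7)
The plan is to prove the contrapositive: I will show that if the metric fails to be $(\alpha,\beta)$-consistent, then $r(\alpha R) \geq \beta$. So suppose there is a candidate $k$ witnessing failure of $(\alpha,\beta)$-consistency, i.e.\ $s_{k \cg i^*} \geq \beta$ and $x_k > \alpha R$.

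The key observation is that every voter in $S_{k \cg i^*}$ automatically witnesses failure of the condition defining $r(\alpha R)$. Concretely, recall from \Cref{def:biased} that $x_{i^*} = 0$, so for any $v \in S_{k \cg i^*}$ we can take $i = k$ and $j = i^*$: then $i \cg_v j$ but $x_i - x_j = x_k - 0 = x_k > \alpha R$. Hence such a voter $v$ fails the condition ``$\forall\, i \cg_v j,\ x_i - x_j \leq \alpha R$'' appearing in the definition of $r$.

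Therefore $S_{k \cg i^*} \subseteq V \setminus S_{\forall i\cg j,\, x_i - x_j \leq \alpha R}$, and hence
\[
r(\alpha R) \;=\; 1 - s_{\forall i\cg j,\, x_i - x_j \leq \alpha R} \;\geq\; s_{k \cg i^*} \;\geq\; \beta,
\]
contradicting the hypothesis $r(\alpha R) < \beta$. Taking the contrapositive gives the proposition.

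There is no real obstacle here; the statement is essentially a direct unpacking of the definitions of $r(t)$ and of $(\alpha,\beta)$-consistency, together with the fact that $x_{i^*}=0$ so that any voter who prefers a ``far'' candidate $k$ to $i^*$ singlehandedly violates the slack-$\alpha R$ ordering condition. The only thing to be slightly careful about is not to confuse the direction of the inequality $s_{k \cg i^*} \geq \beta$ with a statement about $s_{i^* \cg k}$; the argument uses the former directly.
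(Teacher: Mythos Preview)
Your proof is correct and is essentially the same argument as the paper's, just phrased via the contrapositive. The paper argues directly: from $r(\alpha R) < \beta$ and $s_{k \cg i^*} \geq \beta$ it deduces that the two voter sets must intersect, and any voter in the intersection witnesses $x_k - x_{i^*} = x_k \leq \alpha R$; you simply run the same disjointness observation in the other direction.
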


\begin{proof}
$r(\alpha R) < \beta$ means $s_{\forall i\cg j,  x_i - x_j \leq \alpha R} > 1 - \beta$. Then, whenever $s_{k \cg i^*} \geq \beta$, we have $s_{k \cg i^*} + s_{\forall i\cg j,  x_i - x_j \leq \alpha R} > 1$, which means that there exists a voter $v$ such that $k \cg_v i^*$ and whenever $i \cg_v j$, we have $x_i - x_j \leq \alpha R$. {With $i=k$ and $j=i^*$, this implies that $x_k \leq x_{i^*} + \alpha R = \alpha R$}.
\end{proof}

Once the election instance and metric space are fixed, we will analyze both rules under the assumption that the metric is $(\alpha, \beta)$-consistent. This assumption will crucially come into play in \Cref{sec:mix} where we want these rules to perform particularly well when $\alpha$ is close to 0 and $\beta$ is close to $\frac12$. The results in this section will still extend to the general case by the following proposition.

\begin{proposition}\label{prop:all-biased-consistent}
All biased metrics are $(\frac1\beta, \beta)$-consistent for all $\beta \in (0, 1)$.
\end{proposition}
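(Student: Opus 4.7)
The plan is to prove the contrapositive directly via integration: show that $s_{k \cg i^*} \geq \beta$ forces $r(t)$ to stay at least $\beta$ on the entire interval $[0, x_k)$, and then the defining integral of $R$ gives $R \geq \beta x_k$.

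First I would recall from \Cref{def:biased} that in a biased metric we have $x_{i^*} = 0$. Now fix a candidate $k$ with $s_{k \cg i^*} \geq \beta$. For any voter $v \in S_{k \cg i^*}$, taking the pair $i = k$ and $j = i^*$ witnesses $i \cg_v j$ with $x_i - x_j = x_k - 0 = x_k$. Consequently, whenever $t < x_k$, such a voter $v$ fails the condition ``$\forall\, i \cg_v j,\ x_i - x_j \leq t$'', which means $S_{k \cg i^*}$ is disjoint from $S_{\forall i \cg j,\ x_i - x_j \leq t}$, and hence
\[
r(t) \;=\; 1 - s_{\forall i \cg j,\, x_i - x_j \leq t} \;\geq\; s_{k \cg i^*} \;\geq\; \beta.
\]

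Integrating this lower bound over $t \in [0, x_k)$ yields
\[
R \;=\; \int_0^\infty r(t)\,\dif t \;\geq\; \int_0^{x_k} \beta \,\dif t \;=\; \beta\, x_k,
\]
so $x_k \leq R/\beta$. Since $k$ was arbitrary among candidates with $s_{k \cg i^*} \geq \beta$, the biased metric is $(1/\beta, \beta)$-consistent.

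There is really no obstacle here: the only nontrivial observation is that the single pair $(k, i^*)$ already provides a ``certificate'' ensuring every voter who ranks $k$ above $i^*$ contributes to $r(t)$ for all $t < x_k$. Once this is noted, the bound follows from a one-line integration, and the result holds for any $\beta \in (0,1)$ without further case analysis. Note that this is strictly stronger than \Cref{prop:small-r-consistent-metric} applied with $\alpha = 1/\beta$, since the latter would only yield consistency when $r(R/\beta) < \beta$; here we get consistency unconditionally.
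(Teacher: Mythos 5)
Your proof is correct and is essentially the paper's argument: the paper writes $R = 2\SC(i^*) \geq (x_k - x_{i^*})\, s_{k \cg i^*} \geq \beta x_k$, using that every voter $v$ with $k \cg_v i^*$ has $2d(i^*, v) \geq x_k$, while you express the exact same certificate (the pair $(k, i^*)$) through the layer-cake form $R = \int_0^\infty r(t)\,\dif t$ with $r(t) \geq \beta$ on $[0, x_k)$. Since the identity $2\SC(i^*) = \int_0^\infty r(t)\,\dif t$ is already established in the biased-metric section, the two presentations are interchangeable and equally short.
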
 

\begin{proof}
{Suppose that} $s_{k \cg i^*} \geq \beta$. {If $v\in S_{k \cg i^*}$ then by \Cref{def:biased}, $2d(i^*, v) \geq x_k - x_{i^*} = x_k$. It follows that}
$$R =  2\SC(i^*) \geq {x_k}\cdot s_{k\cg i^*} \geq x_k \cdot\beta.$$
This means that $x_k \leq \frac1\beta R$, and the proposition follows.
\end{proof}

Both of our rules, RCB and RaDiUS, are parameterized by a tunable value $\beta \in (\frac12 , 1).$ Very roughly speaking, they both construct the graph on candidates {such that} $(i, j)$ is an edge whenever $s_{i \cg j} \geq \beta$, with the goal of choosing candidates that can always reach the low-distortion candidates in few hops. Note that the rules have natural interpretations when $\beta = \frac12$ and $\beta = 1$ which we will briefly discuss, but to avoid dealing with these (ultimately irrelevant) edge cases in the proofs, we will restrict $\beta$ to the open interval. %

\subsection{Random Consensus Builder}\label{ssec:rcb}

Below is the description of our first rule. {Let us first introduce the algorithm informally, along with some terminology (in italics) that will facilitate exposition. The algorithm starts by choosing a voter $v$ uniformly at random, whom we call the \emph{consensus builder}. The candidates are \emph{processed} in increasing order of $v$'s preference $\cg_v$. When a candidate $i$ is processed, the algorithm \emph{eliminates} each candidate $j$ that $v$ prefers over $i$, but a large fraction of voters has the opposite preference. Finally, the last processed candidate is the winner. Elimination effectively deletes candidates from the election---they are not processed and cannot win. When a candidate $j$ is eliminated in the same step that $i$ is processed, we say that $i$ \emph{eliminates} $j$.}

\begin{framed}
    \begin{center}
        \textbf{$\beta$-Random Consensus Builder (RCB)}
    \end{center}

    \begin{itemize}
        \item {Choose a voter $v$ uniformly at random}.

        \item Until {all candidates are either processed or eliminated}:

        \begin{itemize}

            \item {Let $i$ be $v$'s least favorite candidate that is neither processed nor eliminated}. 

        	\item Eliminate {each candidate} $j$ such that $j \cg_v i$ and $s_{i \cg j} \geq  \beta$. {Process $i$}.

       	\end{itemize}

        \item {Output the last candidate that is processed.}
         
    \end{itemize}

\end{framed}

One interpretation of this rule is that {the randomly chosen consensus builder $v$'s} preferences are the main guide of the rule {(and it is inclined to choose candidates that $v$ prefers)}, but if the voters as a whole have consensus disagreements with $v$'s preferences, then their view overrules. %
The threshold for how strong the consensus needs to be is tuned by {the} parameter $\beta$.

{Another perspective on this interpretation comes from} considering how {the rule operates} at the extremes {at which} $\beta = \frac12$ and $\beta = 1$. If $\beta = 1$, the chosen candidate is always $v$'s top choice, and so the rule is exactly Random Dictatorship. On the other hand, if $\beta = \frac12$, then it is not hard to see that if the chosen candidate is $a$, for every other candidate $b$, either $a$ defeats $b$ (i.e.{,} $s_{a \cg b} \geq \frac12$) or $a$ defeats a candidate who defeats $b$. The candidates that satisfy this property are called the \emph{uncovered set}. Some {well-known} voting rules always output a candidate in the uncovered set, including the Copeland rule. In this sense, $\beta$ {is also} a measure of interpolation {between rules based on the uncovered set and Random Dictatorship}.

We will show the following theorem.

\begin{theorem}\label{thm:rcb-ub}
{Consider} an election instance with an $(\alpha,\beta)$-consistent underlying metric. Then{,} if $D$ is the distribution output {of} $\beta$-RCB, we have
$$L(D) \leq (\alpha + \beta)R.$$
\end{theorem}

\begin{proof}[Proof of \Cref{thm:rcb-ub}]
Suppose $v$ is the consensus builder {randomly chosen at the start of the algorithm}. Let $j_v$ be the candidate that $\beta$-RCB picks. Note that each candidate is either {processed}, or is eliminated by some other {(processed)} candidate. {Also, since the algorithm processes the candidates in order of $v$'s preference with $j_v$ last, $j_v$ is $v$'s favorite among processed candidates.}

If $i^*$ is {processed} then let $k_v = i^*$, and otherwise let $k_v$ be the candidate that eliminates $i^*$. %
In order to prove the theorem, we will use the following three critical properties of $k_v$:

\begin{enumerate}[label=(\Roman*)]

\item $x_{k_v} \leq \alpha R$, \label{enum:1}

\item $j_v \cgeq_v k_v$, \label{enum:2}

\item $s_{k_v \cg j_v} < \beta$. \label{enum:3}

\end{enumerate}
\ref{enum:1} follows because either $k_v = i^*$ in which case $x_{k_v} = 0$, or $k_v$ eliminates $i$ which means that $s_{k_v \cg i^*}\geq \beta$ and so by the fact that the metric is $(\alpha, \beta)$-consistent, we have $x_{k_v} \leq \alpha R$. \ref{enum:2} follows because both $k_v$ and $j_v$ are {processed}, {and $j_v$ is $v$'s favorite processed candidate}. \ref{enum:3} follows because {when} $k_v$ is {processed}, either $k_v = j_v$ in which case $s_{k_v \cg j_v} = 0 < \beta$, or $k_v$ did not eliminate $j_v$ which means $s_{k_v \cg j_v} < \beta$.

We will use \ref{enum:1} and \ref{enum:3} to get a good upper bound on $L(D)$, and \ref{enum:1} and \ref{enum:2} to get a good lower bound on $R$. {In particular, we can show the following sequence of inequalities bounding the cost of $j_v$ compared to the cost of $i^*$.}
\begin{align*}
\SC(j_v) - \SC(i^*) &\leq s_{j_v \cgeq k_v} \min(x_{k_v}, x_{j_v}) + s_{k_v \cg j_v} x_{j_v}\\
&\leq (1 -\beta) \min(x_{k_v}, x_{j_v}) + \beta x_{j_v}\\
&\leq (1 -\beta) \alpha R + \beta x_{j_v}.
\end{align*}
{The first line comes from applying \Cref{def:biased}: for any voter $u$ we have $d(j_v, u) - d(i^*, u) \leq x_{j_v}$, and if $u \in S_{j_v \cgeq k_v}$ then $d(j_v, u) - d(i^*, u) \leq x_{k_v}$.}
The second line follows from the first because $x_{j_v} \geq \min(x_{k_v}, x_{j_v})$ and so the expression is maximized when $s_{k_v \cg j_v}$ is as large as possible {(which is $\beta$ by \ref{enum:3}). The final line uses \ref{enum:1}}. It follows that 
$$L(D) = \frac1n \sum_{v\in V}(\SC(j_v) - \SC(i^*)) \leq \alpha(1 - \beta) R + \beta \cdot \frac1n \sum_{v\in V}x_{j_v}.$$

On the other hand, since $j_v \cgeq_v k_v$, we have $2d(v, i^*) \geq x_{j_v} - x_{k_v} \geq x_{j_v} - \alpha R $. It follows that
$$R = 2\SC(i^*) \geq -\alpha R + \frac1n\sum_{v\in V}x_{j_v},$$ {which is equivalent to}  $$\frac1n\sum_{v\in V}x_{j_v} \leq (1 + \alpha)R.$$
Plugging this into our upper bound on $L(D)$, we get  
$$L(D) \leq  \alpha(1 - \beta) R + \beta(1 + \alpha)R = (\alpha + \beta) R$$
as desired.
\end{proof}

{The following is an immediate corollary of \Cref{thm:rcb-ub} after applying \Cref{prop:all-biased-consistent}.}

\begin{corollary}\label{cor:rcb-ub}
$\beta$-RCB guarantees distortion at most $1 + 2(\beta + \frac1\beta)$. 
\end{corollary}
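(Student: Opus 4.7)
The plan is to combine Theorem \ref{thm:rcb-ub} with the earlier proposition that every biased metric is $(\tfrac1\beta, \beta)$-consistent, and then feed the result through the biased metric framework from Section \ref{sec:biased}.

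First, I would fix an election instance and a distribution $D$ output by $\beta$-RCB. By the proposition in this section, every biased metric is $(\tfrac1\beta, \beta)$-consistent for our chosen $\beta \in (\tfrac12, 1)$. Applying Theorem \ref{thm:rcb-ub} with $\alpha = \tfrac1\beta$ therefore yields
$$L(D) \leq \left(\tfrac1\beta + \beta\right) R$$
for every biased metric of the instance.

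Next, I would unpack what this means via the framework of Section \ref{sec:biased}. Recall that for a fixed biased metric, $L(D)/R$ is precisely the ratio appearing in \cref{eq:precise-biased}, and that \cref{eq:precise-biased} across all biased metrics captures all of the linear constraints \cref{eq:linear-con} over all consistent metrics. So the bound above shows \cref{eq:precise-biased} holds with $\lambda = \beta + \tfrac1\beta$ uniformly over all biased metrics of the instance. Since the instance was arbitrary, translating \cref{eq:linear-con} back into a distortion bound gives distortion at most $1 + 2\lambda = 1 + 2(\beta + \tfrac1\beta)$, as claimed.

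There is no real obstacle here: the corollary is essentially a two-line consequence once one invokes the $(\tfrac1\beta, \beta)$-consistency proposition and the biased metric reduction. All the difficulty is packaged inside Theorem \ref{thm:rcb-ub}, which provides the per-metric bound $L(D) \leq (\alpha+\beta)R$; the corollary is just the worst-case instantiation at $\alpha = \tfrac1\beta$.
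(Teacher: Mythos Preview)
Your proposal is correct and matches the paper's intended derivation: the corollary is stated without proof precisely because it follows by plugging the proposition that every biased metric is $(\tfrac1\beta,\beta)$-consistent into Theorem~\ref{thm:rcb-ub} and reading off $\lambda = \beta + \tfrac1\beta$ via the biased metric framework. There is nothing to add.
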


\subsection{Random Dictatorship on the (Weighted) Uncovered Set} \label{ssec:radius}

Next, we consider a rule similar in spirit to RCB, but with better distortion guarantees.

\begin{framed}
    \begin{center}
        \textbf{$\beta$-Random Dictatorship on the (Weighted) Uncovered Set (RaDiUS)}
    \end{center}
    
    \begin{itemize}

    	\item Say that $a$ \emph{covers} $b$ if $s_{a\cg b} \geq  \beta$ and for any $c$ {such that $s_{c \succ a} \geq \beta$, $s_{c \succ b} \geq \beta$}.

        \item Let $U$ be the set of candidates that are not covered by any other candidate. 

        \item Choose a {voter uniformly at random} and output their favorite candidate in $U$.

     \end{itemize}

\end{framed}

The set $U$ was previously considered by \cite{DBLP:conf/ec/MunagalaW19} in the context of deterministic rules. They showed that there exists a $\beta$ such that any candidate from the set $U$ (which they called the \emph{weighted uncovered set}) has distortion at most $2 + \sqrt{5}$.

To see how this rule is similar to $\beta$-RCB, consider the following proposition. It also conveniently gives a proof that the set $U$ is always non-empty (which was proved in a different way in \cite{DBLP:conf/ec/MunagalaW19}).

\begin{proposition}\label{prop:rcb-from-U}
Suppose that $U$ is the weighted uncovered set constructed by $\beta$-RaDiUS. Then $\beta$-RCB always outputs a member from $U$.
\end{proposition}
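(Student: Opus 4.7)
The plan is to prove the contrapositive: if $j_v$ denotes the candidate that $\beta$-RCB outputs when the chosen voter is $v$, then no candidate $a$ covers $j_v$. I will argue by contradiction, assuming some $a$ covers $j_v$ and deriving a contradiction by tracing what must happen to $a$ during the execution of $\beta$-RCB on voter $v$. The key structural fact to exploit is that, as the algorithm runs, every candidate either becomes the candidate under consideration at some iteration or gets eliminated in an iteration where some other candidate is under consideration; moreover, the successive candidates under consideration strictly ascend in $v$'s preference order, with $j_v$ being the last. Note also that $s_{a\cg a}=0<\beta$, so no candidate covers itself; thus automatically $a\neq j_v$.

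First I will handle the case where $a$ is itself the candidate under consideration at some iteration. Since $a\neq j_v$ and $j_v$ is the final candidate under consideration, monotonicity of the ascending sequence forces $j_v\cg_v a$. But $a$ covering $j_v$ gives $s_{a\cg j_v}\geq \beta$, so the elimination step at $a$'s iteration would remove $j_v$, contradicting the fact that $j_v$ is ultimately output.

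Otherwise $a$ is eliminated by some candidate $a'$ while $a'$ is under consideration, which by the elimination rule forces $s_{a'\cg a}\geq \beta$. Applying the covering condition with $c=a'$ then yields $s_{a'\cg j_v}\geq \beta$. I would now split on the $\cg_v$-relationship between $a'$ and $j_v$. If $j_v\cg_v a'$, then when $a'$ is under consideration $j_v$ itself would be eliminated, a contradiction. If $a'=j_v$, then $s_{j_v\cg a}\geq \beta$; combined with $s_{a\cg j_v}\geq \beta$ and the identity $s_{a\cg j_v}+s_{j_v\cg a}=1$ from strict preferences, this forces $\beta\leq \tfrac12$, contradicting $\beta\in(\tfrac12,1)$. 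Finally if $a'\cg_v j_v$, then $a'$ would have to appear strictly after $j_v$ in the ascending sequence of candidates under consideration, but $j_v$ is last, so $a'$ is never under consideration, contradicting how $a$ was eliminated.

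There is no real conceptual obstacle; the main subtlety is simply to make sure the case split uses both halves of the covering condition (the pairwise strength of $a$ over $j_v$, and the closure property with respect to $\beta$-strong beats of $a$) together with the monotone processing order of RCB. Once these ingredients are in place the contradictions in each case are immediate, and the proposition follows.
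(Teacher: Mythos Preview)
Your proof is correct and follows essentially the same approach as the paper: assume some $a$ covers $j_v$, then use both halves of the covering definition together with the monotone processing order of RCB to reach a contradiction. The only difference is organizational---the paper splits first on whether $j_v \cg_v a$ or $a \cg_v j_v$ (arguing in each case that $a$ must have been eliminated by some $c$ with $j_v \cg_v c$, then concluding that $c$ would have eliminated $j_v$), whereas you split first on whether $a$ is ever under consideration or is eliminated by some $a'$, and then in the latter case on the $\cg_v$-relation between $a'$ and $j_v$; the underlying deductions are the same.
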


\begin{proof}
{Suppose} towards a contradiction that for some voter $v$, the candidate $j_v$ chosen by RCB is covered by some other candidate $a$. {Candidate $a$ is either processed, or is eliminated by some (processed) candidate $c$.}

{If $a$ is processed, then since $j_v$ is $v$'s favorite processed candidate, we must have that $j_v \cg_v a$. But since $a$ covers $j_v$ we also have that $s_{a \cg j_v} \geq \beta$. But then $a$ would have eliminated $j_v$, which is a contradiction.}

{On the other hand if $a$ is eliminated by $c$, then since $c$ is processed we have that $j_v \cg_v c$. Since $c$ eliminates $a$ we have that $s_{c \cg a} \geq \beta$, but then the fact that $a$ covers $j_v$ implies that $s_{c \cg j_v} \geq \beta$. Together, these conditions imply that $c$ must eliminate $j_v$, which is also a contradiction.}
\end{proof}

Before we get into the distortion guarantee {of} $\beta$-RaDiUS, we prove two more facts that will be helpful.

\begin{proposition}
The covering relation is transitive.
\end{proposition}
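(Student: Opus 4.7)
The plan is to unfold the definition of covering and chain the implications. Suppose $a$ covers $b$ and $b$ covers $c$; we want to show $a$ covers $c$. By definition, we need to verify two things: (1) $s_{a \cg c} \geq \beta$, and (2) for every candidate $d$, if $s_{d \cg a} \geq \beta$ then $s_{d \cg c} \geq \beta$.

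For part (1), I would use the second clause of ``$b$ covers $c$'' applied with witness $d = a$. Since $a$ covers $b$, we have $s_{a \cg b} \geq \beta$. Then by the ``if $s_{d \cg b} \geq \beta$ then $s_{d \cg c} \geq \beta$'' part of $b$ covering $c$, with $d = a$, we conclude $s_{a \cg c} \geq \beta$.

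For part (2), I would chain the second clauses of both covering relations. Fix any $d$ with $s_{d \cg a} \geq \beta$. Since $a$ covers $b$, this implies $s_{d \cg b} \geq \beta$. Since $b$ covers $c$, this in turn implies $s_{d \cg c} \geq \beta$, as required.

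There is no real obstacle here: the statement is essentially a direct consequence of the quantifier-over-$d$ structure in the definition, and the only (mildly) nontrivial observation is that the first clause ``$s_{a \cg c} \geq \beta$'' can itself be derived by instantiating the second clause of ``$b$ covers $c$'' at the candidate $a$, which is what lets the chain close up. The whole argument is a few lines long.
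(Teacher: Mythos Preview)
Your proposal is correct and follows essentially the same argument as the paper: derive $s_{a\cg c}\geq\beta$ by instantiating the second clause of ``$b$ covers $c$'' at $a$ (using $s_{a\cg b}\geq\beta$ from ``$a$ covers $b$''), and then chain the two second clauses to get the implication for arbitrary $d$.
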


\begin{proof}

Suppose $a$ covers $b$ and $b$ covers $c$. We claim that $a$ covers $c$. Since $b$ covers $c$ and $s_{a\cg b} \geq  \beta$ we have $s_{a \cg c} \geq  \beta$. Now suppose that for some $d$, $s_{d\cg a} \geq  \beta$. Then since $a$ covers $b$ we have $s_{d \cg b} \geq  \beta$ {and because} $b$ covers $c$ we have $s_{d\cg c} \geq \beta$. So indeed, $a$ covers $c$.
\end{proof}

\begin{proposition}\label{prop:covered-by-U}
If a candidate is not in $U$ then it is covered by a candidate in $U$.
\end{proposition}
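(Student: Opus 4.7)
The plan is to induct (or rather, pick a maximal element) on the covering relation restricted to the set of candidates that cover the given candidate $a$. Concretely, suppose $a \notin U$ and let $B = \{b \in C : b \text{ covers } a\}$. By assumption $B$ is nonempty and finite, so it suffices to exhibit an element $b^* \in B$ that is not covered by any candidate in $C$; such a $b^*$ lies in $U$ and covers $a$, proving the proposition.

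To select such $b^*$, I would first check that the covering relation is a strict partial order. Transitivity is already established by the previous proposition. For irreflexivity, note that $a$ covering $a$ would require $s_{a \cg a} \geq \beta$, but in this section $s_{a \cg a} = 0$ (we are outside \Cref{sec:ml}, where the $\tfrac12$ convention applies), and moreover $\beta > \tfrac12$, so no candidate can cover itself. Together, transitivity and irreflexivity make covering a strict partial order on the finite set $B$, so $B$ admits a maximal element $b^*$ (one with no $b' \in B$ covering it).

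Finally, I would upgrade maximality in $B$ to maximality in all of $C$: if some $c \in C$ covered $b^*$, then since $b^*$ covers $a$, transitivity would give that $c$ covers $a$, placing $c \in B$ and contradicting the choice of $b^*$. Hence no candidate covers $b^*$, so $b^* \in U$, and by construction $b^*$ covers $a$, completing the proof.

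The only substantive content is the maximal-element argument; the possible subtlety is the self-covering check, which I would just dispatch with one sentence using $\beta > \tfrac12$.
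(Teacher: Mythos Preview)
Your proof is correct and takes essentially the same approach as the paper: both use transitivity together with irreflexivity (from $\beta > \tfrac12$) to conclude the covering relation is acyclic, then find a source/maximal element that must lie in $U$ and cover $a$. The only difference is cosmetic---the paper phrases it as walking backwards in a DAG to a vertex of in-degree zero, while you phrase it as picking a maximal element of the set $B$ under the strict partial order.
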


\begin{proof}
Suppose that we build a graph on the candidates {such that} $(a, b)$ is an edge if $a$ covers $b$. We cannot have a cycle in this graph, because by transitivity this would imply that some candidate $i$ covers itself, which would imply the impossible $s_{i\cg i} \geq \beta > \frac12 > 0$.

If a candidate $i$ is not in $U$, it must have positive in-degree. Since the graph is acyclic, by arbitrarily following edges backwards from $i$, we must eventually reach a candidate $j$ with in-degree zero. {It follows} that $j \in U$ and there is a path from $j$ to $i$, which by transitivity means that $j$ covers $i$. 
\end{proof}

Now we prove the following distortion guarantee.

\begin{theorem}\label{thm:radius-ub}
{Consider} an election instance with an $(\alpha,\beta)$-consistent underlying metric. Then{,} if $D$ is the distribution output {of} $\beta$-RaDiUS, we have
$$L(D) \leq (\alpha(1 - \beta^2) + \beta)R.$$
\end{theorem}

The proof is similar in structure to the proof of \Cref{thm:rcb-ub}. {In that proof, $j_v$ was the winning candidate, and we defined a candidate $k_v$ satisfying the following three properties:
\begin{enumerate}[label=(\Roman*)]

\item  $x_{k_v} \leq \alpha R$, \label{enum':1}

\item $j_v \cgeq_v k_v$, \label{enum':2}

\item $s_{k_v \cg j_v} < \beta$. \label{enum':3}

\end{enumerate}}

The key difference is that rather than using the same candidate $k_v$ which satisfies the properties \ref{enum':1}, \ref{enum':2}, \ref{enum':3}, we will have one candidate $k_v$ which satisfies properties \ref{enum':1} and \ref{enum':3} and another candidate $k^*$ that satisfies properties \ref{enum':1} and \ref{enum':2}. The advantage is that in the {latter} case, we have the same candidate $k^*$ for \emph{all} voters $v$, which will allow us to get a stronger lower bound on $R$. However, having different candidates for the two cases makes the argument a little more complicated.

\begin{proof}[Proof of \Cref{thm:radius-ub}]
Once again, let $j_v$ be the candidate that is output when $v$ is the randomly chosen voter. {Let us} assume that $j_v \neq i^*${;} otherwise the rule picks the best candidate and all of the bounds will only be improved. Then since $j_v \in U${,} it must be the case that $i^*$ \emph{does not} cover $j_v$. Unpacking the definition, this means that either 
\begin{enumerate}[label=(\alph*)]
\item $s_{i^*\cg j_v} < \beta$, or 

\item there exists some $k$ such that $s_{k \cg i^*} \geq \beta$ but $s_{k \cg j_v} < \beta$.
\end{enumerate} 

Define $k_v$ so that $k_v = i^*$ if (a) occurs and $k_v = k$ if (b) occurs. In either case we have once again that $x_{k_v} \leq \alpha R$ and $s_{k_v \cg j_v} < \beta$. These are the properties \ref{enum':1} and \ref{enum':3}, and so by an identical argument we can show that
$$L(D) = \frac1n \sum_{v\in V}(\SC(j_v) - \SC(i^*)) \leq \alpha(1 - \beta) R + \beta \cdot \frac1n \sum_{v\in V}x_{j_v}.$$

Now define $k^*$ so that $k^* = i^*$ if $i^* \in U$ and otherwise, $k^*$ is some member of $U$ which covers $i^*$ (which exists by \Cref{prop:covered-by-U}). Once again, either $k^* = i^*$ and so $x_{k^*} = 0$, or $s_{k^*\cg i^*}\geq \beta$ and since the metric is $(\alpha, \beta)$-consistent we have $x_{k^*} \leq \alpha R$. In addition,  $j_v \cgeq_v k^*$, since $k^* \in U$ and $j_v$ is $v$'s favorite candidate in $U$. Thus, $k^*$ satisfies properties \ref{enum':1} and \ref{enum':2}.

It follows that for every voter $v$, 
$$2d(v, i^*) \geq  x_{j_v} - x_{k^*}.$$
Moreover, if $v$ satisfies $k^* \cgeq_v i^*$, the inequality can be stronger. In this case, $j_v \cgeq_v k^* \cgeq_v i^*$ and so
$$2d(v, i^*) \geq x_{j_v} = x_{k^*} + (x_{j_v} - x_{k^*}).$$
Since $k^* \cgeq_v i^*$ for at least a $\beta$ fraction of voters $v$, we have 
$$R = 2\SC(i^*) \geq \beta x_{k^*} + \frac{1}{n} \sum_{v\in V} (x_{j_v} - x_{k^*}) = -(1 - \beta)x_{k^*} +  \frac{1}{n} \sum_{v\in V} x_{j_v},$$
where we crucially use the fact that all voters share the same $k^*$.
It follows that 
$$\frac{1}{n} \sum_{v\in V} x_{j_v} \leq R + (1 - \beta)x_{k^*} \leq (1 + (1 - \beta)\alpha)R.$$
Plugging this into our upper bound on $L(D)$, we have
\begin{align*}
L(D) &\leq  \alpha(1 - \beta) R + \beta (1 + (1 - \beta)\alpha)R\\
&= (\alpha(1 - \beta^2) + \beta)R
\end{align*}
as claimed.
\end{proof}

{Like \Cref{cor:rcb-ub}, the following is an immediate corollary of \Cref{thm:radius-ub} after applying \Cref{prop:all-biased-consistent}.}

\begin{corollary}\label{cor:radius-ub}
$\beta$-RaDiUS guarantees distortion at most $1 + 2/\beta$. 
\end{corollary}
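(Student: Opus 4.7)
The plan is to derive the corollary by combining Theorem \ref{thm:radius-ub} with the universal consistency guarantee of biased metrics and the biased metric framework from \Cref{sec:biased}. By the biased metric framework, it suffices to verify the constraint $L(D) \leq \lambda R$ for all biased metrics in order to conclude distortion at most $1 + 2\lambda$.

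First I would apply the proposition stating that every biased metric is $(1/\beta, \beta)$-consistent for the given parameter $\beta \in (\tfrac12, 1)$. This is the critical step that allows us to specialize the $\alpha$ appearing in Theorem \ref{thm:radius-ub} to $\alpha = 1/\beta$, matching the consistency parameter of the rule with the uniform consistency enjoyed by all biased metrics. Because the rule's parameter and the consistency parameter share the same $\beta$, no further tuning is needed.

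Next, I would substitute $\alpha = 1/\beta$ into the conclusion of Theorem \ref{thm:radius-ub}. The algebra collapses nicely:
\[
L(D) \;\leq\; \left(\tfrac{1}{\beta}(1 - \beta^2) + \beta\right) R \;=\; \left(\tfrac{1}{\beta} - \beta + \beta\right) R \;=\; \tfrac{1}{\beta} R.
\]
Thus for every biased metric we have $L(D)/R \leq 1/\beta$.

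Finally, invoking the biased metric framework (in particular the reduction showing that \cref{eq:precise-biased} for all biased metrics suffices to control distortion), the inequality $L(D) \leq \tfrac{1}{\beta} R$ on every biased metric yields the claimed distortion bound of $1 + 2/\beta$ on arbitrary election instances. There is no real obstacle here — the proof is a direct specialization of the theorem combined with the universal consistency lemma; the only thing to be careful about is citing the correct proposition for biased-metric consistency so that the same $\beta$ is used both as the rule's threshold and as the second consistency parameter.
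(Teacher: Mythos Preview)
Your proposal is correct and matches the paper's (implicit) argument exactly: plug the universal $(1/\beta,\beta)$-consistency of biased metrics into \Cref{thm:radius-ub}, observe that $\tfrac{1}{\beta}(1-\beta^2)+\beta = \tfrac{1}{\beta}$, and conclude distortion $1+2/\beta$ via the biased metric framework.
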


\section{Mixing Rules}\label{sec:mix}

Even though none of the three social choice rules we introduced can beat distortion 3 (see \Cref{sec:lbs}), it turns out that mixing them in a careful way can. 

Let us introduce the general technique for analyzing the mixture of these rules. Suppose that we have a given election instance and a biased metric. We would like to show that a particular rule achieves low distortion on this instance and metric. Consider the {plot} of $r(t)$ that is fixed by the metric. Let $\alpha(\cdot)$ be the function {over the domain $(\frac12, 1)$ that maps $\beta \mapsto {\frac{1}{R}}\min\{t: r(t) < \beta \}$}. Informally, if we draw the horizontal line $y = \beta$, then this line intersects the {curve} of $r(t)$ at the point $(\alpha(\beta)R, \beta)$. (If the intersection is a line segment, we take the rightmost point on the segment.)

Unsurprisingly, the function $\alpha(\cdot)$ is directly related to the $\alpha$ we were considering in \Cref{sec:mechs}: \Cref{prop:small-r-consistent-metric} tells us that if we have $\alpha(\cdot)$ corresponding to a biased metric, then the metric is $(\alpha(\beta), \beta)$-consistent for all $\beta\in(\frac12,1)$.

Moreover, we can use this function $\alpha(\beta)$ to get a tighter bound on the distortion of ML. Let $D_{\text{ML}}$ be the distribution output by ML. Then \Cref{thm:ML-ub} tells us that 
$$\ell(D_{\text{ML}}, t) \leq \frac{P(I_t^c)}{2} \leq r(t)$$
and since $P(I_t^c) \leq 1$, we have $\ell(D_{\text{ML}}, t) \leq \min(\frac12, r(t))$. On the other hand, the area that is below $r(t)$ but above the horizontal line $\frac12$ is exactly $R\int_\frac12^1 \alpha(\beta) \dif \beta$, and so it follows that
$$L(D_{\text{ML}}) + R\int_\frac12^1 \alpha(\beta) \dif \beta \leq R,$$
{or equivalently,}
\begin{equation}\label{eq:rcw-stronger} 
 L(D_{\text{ML}}) \leq \left(1 - \int_\frac12^1 \alpha(\beta) \dif \beta\right)R.
\end{equation}

\begin{figure}[!ht]
\centering
\includegraphics[scale=0.7]{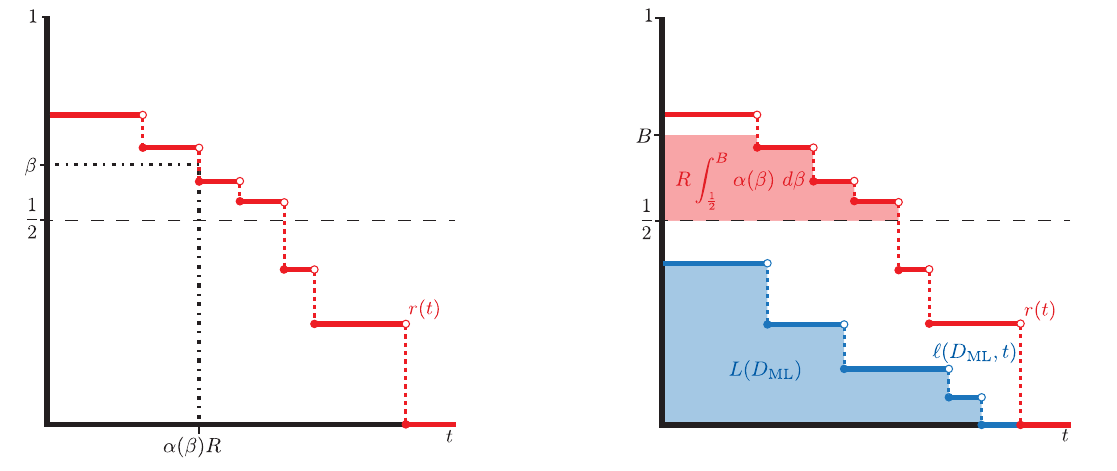}
\caption{Left: For each $\beta$, the horizontal line at $\beta$ intersects $r(t)$ at $(\alpha(\beta)R, \beta)$. Right: If the area above $\frac12$ and below $r(t)$ is large, we can get a better bound on $L(D_{\text{ML}})/R$. }
\end{figure}

This mixture of rules does well because, {in a sense}, ML and $\beta$-RCB/$\beta$-RaDiUS are complementary. For the analysis of ML not to %
have much wiggle room, the curve $r(t)$ should be above the line $\frac12$ very little. But then $\alpha(\beta)$ is small for values of $\beta$ that are slightly larger than $\frac12$, %
and with smaller $\alpha$ and $\beta$, we get much better guarantees in \Cref{thm:rcb-ub} and \Cref{thm:radius-ub}.

The rules will have three parameters: $p$, $B$, and $\rho(\cdot)$. Both rules run ML with probability $p$, and otherwise run $\beta$-RCB or $\beta$-RaDiUS where $\beta \in (\frac12, B)$ is drawn from a distribution with probability density function $\rho$. It will turn out that in the analysis, we will fix $p$ and $\rho$ as a function of $B$, and then to get the best distortion we just need to optimize a single variable function. In the warm-up, the right $\rho$ is just uniform, so we will omit that parameter.

{\Cref{thm:root-8} and \Cref{thm:2753} give the distortion guarantees that we can obtain by mixing ML with $\beta$-RCB and $\beta$-RaDiUS respectively. The mechanics of the two proofs are nearly identical, leveraging the distortion guarantees proved in \Cref{thm:rcb-ub} and \Cref{thm:radius-ub} for $\beta$-RCB and $\beta$-RaDiUS under the assumption that the metric is $(\alpha, \beta)$-consistent. The key difference is that \Cref{thm:radius-ub} gives a stronger bound, so this will directly lead to a stronger distortion guarantee in the end.}

\subsection{Warm-up: ML and RCB Get Distortion \texorpdfstring{$2\sqrt{2}$}{2√2}}

We note that the rule and proof might read more smoothly with the right values for $p$ and $B$ baked in, but we will keep these as variables to illustrate the mechanics of the proof technique.

\begin{framed}
    \begin{center}
        \textbf{ML mixed with RCB}
    \end{center}

    \begin{itemize}
        \item With probability $p$, run Maximal Lotteries.

        \item With probability $1 - p$, choose a \emph{uniformly} random $\beta \in (\frac12, B)$ and run $\beta$-Random Consensus Builder.
         
    \end{itemize}

\end{framed}

\begin{theorem}\label{thm:root-8}
With $p = \frac1{\sqrt{2}}$ and $B = \sqrt{2} - \frac12$, the rule ML mixed with RCB has distortion at most $2\sqrt{2} \approx 2.828$.
\end{theorem}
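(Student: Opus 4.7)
The plan is to apply the biased metric framework of \Cref{sec:biased}. For the mixed rule with output distribution $D$, it suffices to show that on every biased metric we have $L(D)/R \leq \sqrt{2} - \tfrac12$, since this gives distortion $1 + 2(\sqrt{2} - \tfrac12) = 2\sqrt{2}$. Fix such a biased metric and let $\alpha(\cdot)$ be the function associated with its $r(t)$ as described at the start of \Cref{sec:mix}.

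By linearity of $L$, writing $D_\beta$ for the distribution output by $\beta$-RCB,
\[
\frac{L(D)}{R} = p\cdot \frac{L(D_{\text{ML}})}{R} + \frac{1-p}{B - \tfrac12}\int_{1/2}^{B}\frac{L(D_\beta)}{R}\, d\beta.
\]
For the first term I would invoke the ML bound \eqref{eq:rcw-stronger}, giving $L(D_{\text{ML}})/R \leq 1 - \int_{1/2}^{1}\alpha(\beta)\,d\beta$. For the second, \Cref{prop:small-r-consistent-metric} makes the metric $(\alpha(\beta),\beta)$-consistent, so \Cref{thm:rcb-ub} yields $L(D_\beta)/R \leq \alpha(\beta) + \beta$. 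Letting $A(B) := \int_{1/2}^{B}\alpha(\beta)\,d\beta$ and using $\int_{1/2}^{1}\alpha(\beta)\,d\beta \geq A(B)$ (since $\alpha \geq 0$), together with $\int_{1/2}^{B}\beta\,d\beta = (B-\tfrac12)(B+\tfrac12)/2$, one obtains
\[
\frac{L(D)}{R} \leq p + A(B)\left(\frac{1-p}{B-\tfrac12} - p\right) + \frac{(1-p)(B+\tfrac12)}{2}.
\]

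The crucial step is then to choose $p$ and $B$ so that the coefficient of $A(B)$ vanishes, eliminating any dependence on the unknown quantity $A(B)$. Setting $(1-p)/(B-\tfrac12) = p$, i.e.\ $B = \tfrac1p - \tfrac12$, reduces the bound to a one-parameter optimization in $p$, namely $p + (1-p)/(2p)$, which is minimized at $p = 1/\sqrt{2}$, giving $B = \sqrt{2} - \tfrac12$. A direct check confirms $(1 - 1/\sqrt{2})/(\sqrt{2}-1) = 1/\sqrt{2}$, so the $A(B)$ coefficient is indeed zero at these parameters, and a short computation collapses the remaining expression to $1/\sqrt{2} + (1 - 1/\sqrt{2})\cdot \sqrt{2}/2 = \sqrt{2} - \tfrac12$.

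The main conceptual obstacle is precisely the mismatch between ML's guarantee, which benefits from the area above the line $y=\tfrac12$ all the way up to $\beta = 1$, and RCB's guarantee, which we only apply for $\beta \in (\tfrac12, B)$ with $B < 1$. The balancing choice of $p$ is what lets us trade these off cleanly: the mixture exploits ML's slack via the $\alpha$-integral \emph{just enough} to cancel the corresponding term coming from RCB's use of consistency. Everything else is essentially bookkeeping.
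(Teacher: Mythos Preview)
Your proposal is correct and follows essentially the same route as the paper: you invoke \eqref{eq:rcw-stronger} for ML, \Cref{thm:rcb-ub} (via \Cref{prop:small-r-consistent-metric}) for RCB, weaken the ML integral from $[\tfrac12,1]$ to $[\tfrac12,B]$, and then choose $p$ so that the coefficient of the unknown $\alpha$-integral vanishes before optimizing. The only cosmetic difference is that you parameterize the final one-variable optimization in $p$ rather than in $B$, which is equivalent since you enforce $p = 1/(B+\tfrac12)$.
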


\begin{proof}
Suppose that we have a fixed election instance and metric space. Let $D_{\text{ML}}$ be the distribution output by ML, let $D_\beta$ be the distribution output by $\beta$-RCB, and let $D$ be the overall distribution of the rule. Note that $\beta$ is chosen according to the probability density function $\frac{1}{B - \frac12}$, so 
\begin{align*}
L(D) &= p L(D_{\text{ML}}) + (1 - p)\int_{\frac12}^{B}\frac{1}{B - \frac12}\cdot L(D_\beta) \dif \beta \\
&= p L(D_{\text{ML}}) + \frac{1 - p}{B - \frac12}\int_{\frac12}^{B} L(D_\beta) \dif \beta.
\end{align*}
Applying \cref{eq:rcw-stronger} {(changing the upper bound of the integral from 1 to $B$, which increases the right side)} and \Cref{thm:rcb-ub}, we get
\begin{align*}
\frac{L(D)}{R} &\leq p \left(1 - \int_\frac12^B \alpha(\beta) \dif \beta\right) + \frac{1 - p}{B - \frac12} \int_{\frac12}^B  (\alpha(\beta) + \beta) \dif \beta\\
&= p + \frac{1 - p}{B - \frac12}\int_{\frac12}^B \beta \dif \beta  + \left(-p + \frac{1 - p}{B - \frac12} \right)\int_{\frac12}^B \alpha(\beta) \dif \beta\\
&= p + \tfrac12(1 - p)(B  + \tfrac12)  + \left( \frac{1 - p(B + \frac12)}{B - \frac12} \right)\int_{\frac12}^B \alpha(\beta) \dif \beta.
\end{align*}
Now, $\alpha(\beta)$ is a function which depends on the metric, which could be chosen adversarially. However, we can completely eliminate this ``dangerous'' term by choosing $p$ and $B$ such that its coefficient is 0. This {step} is perhaps where the magic of the proof happens{---}by carefully balancing the two rules, we can get a kind of destructive interference that eliminates any dangerous terms. 

Choosing $p = \frac{1}{B + \frac12}$, we get
$$\frac{L(D)}{R} \leq \frac{1}{B + \frac12} + \tfrac12(B - \tfrac12).$$
It is not hard to check that choosing $B = \sqrt{2}-\frac12$ minimizes the above expression, at which point it is also $\sqrt{2} - \frac12$. This {choice} gives us distortion $1 + 2(\sqrt{2} - \frac12) = 2\sqrt{2}$.
\end{proof}

\subsection{ML and RaDiUS Get Distortion \texorpdfstring{$2.753$}{2.753}}

\begin{framed}
    \begin{center}
        \textbf{ML mixed with RaDiUS}
    \end{center}

    \begin{itemize}
        \item With probability $p$, run Maximal Lotteries.

        \item With probability $1 - p$, sample $\beta \in (\frac12, B)$ according to the probability density function $\rho(\beta)$ and run $\beta$-RaDiUS.
         
    \end{itemize}

\end{framed}

\begin{theorem}\label{thm:2753}
With appropriate choices for $p, B$, and $\rho(\cdot)$ the rule ML mixed with RaDiUS has distortion at most $2.753$.
\end{theorem}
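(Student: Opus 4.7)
The plan is to mimic the template of \Cref{thm:root-8}: expand $L(D)$ via linearity, apply the refined ML bound \cref{eq:rcw-stronger} and the RaDiUS bound from \Cref{thm:radius-ub}, and choose the mixing parameters so as to annihilate the adversarial $\int \alpha(\beta)\,d\beta$ term. The key new ingredient, compared to the RCB warm-up, is that the coefficient of $\alpha(\beta)$ in the RaDiUS bound is $1-\beta^2$ rather than $1$, so a uniform $\rho$ no longer produces destructive interference and $\rho$ must instead be taken proportional to $1/(1-\beta^2)$.

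Concretely, I fix a biased metric with its associated $\alpha(\cdot)$, let $D_{\text{ML}}$ and $D_\beta$ denote the ML and $\beta$-RaDiUS outputs, and write
\begin{equation*}
\frac{L(D)}{R} \le p\Bigl(1 - \int_{1/2}^{B}\alpha(\beta)\,d\beta\Bigr) + (1-p)\int_{1/2}^{B}\rho(\beta)\bigl(\alpha(\beta)(1-\beta^2) + \beta\bigr)d\beta,
\end{equation*}
where the ML integral has been weakened from $(1/2,1)$ to $(1/2,B)$ using $\alpha \ge 0$. The coefficient of $\alpha(\beta)$ is $(1-p)\rho(\beta)(1-\beta^2) - p$, which I annihilate by setting $\rho(\beta) = \frac{p}{(1-p)(1-\beta^2)}$ on $(\tfrac12,B)$. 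Requiring $\int_{1/2}^B \rho(\beta)\,d\beta = 1$ and using $\int \frac{d\beta}{1-\beta^2} = \frac{1}{2}\ln\frac{1+\beta}{1-\beta}$ determines $p$ as an explicit function of $B$, namely $\frac{p}{1-p} = 2/\ln\frac{1+B}{3(1-B)}$. Writing $a(B) := \frac{1}{2}\ln\frac{1+B}{3(1-B)}$ and $b(B) := \frac{1}{2}\ln\frac{3}{4(1-B^2)}$ (the latter from $\int_{1/2}^B \frac{\beta\,d\beta}{1-\beta^2}$), the surviving bound simplifies to
\begin{equation*}
\frac{L(D)}{R} \;\le\; \frac{1+b(B)}{1+a(B)}.
\end{equation*}

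This leaves a clean one-variable optimization over $B \in (\tfrac12,1)$. Since $a'(B) = 1/(1-B^2)$ and $b'(B) = B/(1-B^2)$, the first-order condition $(1+a)b' = (1+b)a'$ collapses to $B = (1+b(B))/(1+a(B))$: at the optimum the bound on $L(D)/R$ equals $B$ itself, so the distortion equals exactly $1+2B$. Numerically solving $B(1+a(B)) = 1+b(B)$ yields $B \approx 0.87635$, hence distortion strictly less than $2.753$. The main obstacle is guessing the correct form of $\rho$; once it is chosen to match the $(1-\beta^2)$ weighting from \Cref{thm:radius-ub}, the normalization fixes $p$ and the final optimization is routine.
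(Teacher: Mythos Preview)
Your proof is correct and follows essentially the same approach as the paper: the same choice of $\rho(\beta)=\frac{p}{(1-p)(1-\beta^2)}$ to annihilate the $\alpha(\beta)$ term, the same normalization fixing $p$, and the same numerical optimum $B\approx 0.8764$. Your expression $\frac{1+b(B)}{1+a(B)}$ is algebraically identical to the paper's $1-\frac{\int_{1/2}^B\frac{d\beta}{1+\beta}}{1+\int_{1/2}^B\frac{d\beta}{1-\beta^2}}$, and your first-order observation that the optimal bound equals $B$ itself (so the distortion is $1+2B$) is a nice addition the paper does not spell out.
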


\begin{proof}
Let $D_{\text{ML}}$ and $D_\beta$ be defined as in the proof of \Cref{thm:root-8}, but with $\beta$-RaDiUS in place of $\beta$-RCB. {Following the same line of reasoning, but now with \Cref{thm:radius-ub} in place of \Cref{thm:rcb-ub}},
\begin{align*}
\frac{L(D)}{R} &\leq p \left(1 - \int_\frac12^B \alpha(\beta) \dif \beta\right) + (1 - p)\int_{\frac12}^B \rho(\beta)\left(\alpha(\beta)(1 - \beta^2) + \beta\right) \dif \beta\\
&= p + (1 - p)\int_{\frac12}^B \rho(\beta)\beta \dif \beta  + \int_{\frac12}^B \alpha(\beta)\left(-p + \rho(\beta)(1 - p)(1 - \beta^2) \right) \dif \beta\\
&= 1- (1 - p)\int_{\frac12}^B \rho(\beta)(1 - \beta) \dif \beta + \int_{\frac12}^B \alpha(\beta)\left(-p + \rho(\beta)(1 - p)(1 - \beta^2) \right) \dif \beta.
\end{align*}
The last line uses the fact that $\displaystyle\int_{\frac12}^B \rho(\beta)\dif \beta = 1.$ In order to make the coefficient of $\alpha(\beta)$ equal to 0, we set 
$$\rho(\beta) = \frac{p}{(1 - p)(1 - \beta^2)}.$$
{We can use this to determine $p$ as a function of $B$. In particular, we have}
$$1 = \int_{\frac12}^B \rho(\beta) \dif \beta =\frac{p}{1- p}\int_{\frac12}^B \frac{\dif \beta}{1 - \beta^2},$$
{and so}
$$p = \frac{1}{1 + \displaystyle \int_{\frac12}^B \frac{\dif \beta}{1 - \beta^2}}.$$

With these choices, we have
\begin{align*}
\frac{L(D)}{R} &\leq 1 - p\int_{\frac12}^B \frac{\dif \beta}{1 + \beta}\\
&= 1 - \frac{\displaystyle\int_{\frac12}^B \frac{\dif \beta}{1 + \beta}}{1 + \displaystyle \int_{\frac12}^B \frac{\dif \beta}{1 - \beta^2}}\\
&= 1 - \frac{\ln\tfrac23 + \ln(1 + B)}{1 - \tfrac12\ln 3 + \tfrac12(\ln(1 + B) - \ln(1 - B))}.
\end{align*}
Using numerical optimization methods, we find that the best choice is $B \approx 0.876353$, which gives distortion $2.75271$. 
\end{proof}

\section{Discussion}\label{sec:discussion}

In this work, we studied {the Maximal Lotteries voting rule in the distortion framework,} and proposed {two novel simple rules,} Random Consensus Builder and RaDiUS. Using our biased metric framework, we show that a mix between ML and RCB has metric distortion at most $2 \sqrt{2}$, and a mix between ML and RaDiUS has distortion at most $2.753$. %

An immediate future direction is to further close the gap $(2.112, 2.753)$ of optimal metric distortion. Towards this, we propose the following ideas:
\begin{itemize}

\item Our RaDiUS rule is a hybrid of Random Dictatorship and a deterministic weighted tournament rule of \citetnostar{DBLP:conf/ec/MunagalaW19}. Is there a deterministic weighted tournament rule with {distortion better} than $2 + \sqrt{5} \approx 4.236$? Such a result is very interesting on its own, and can potentially serve as an ingredient for a rule with {distortion better} than $2.753$. (Note that our analysis for ML pinned down the optimal metric distortion for weighted tournament rules at $3$. For deterministic weighted tournament rules, the gap is $[3, 2 + \sqrt{5}]$.)
\item Our RaDiUS rule uses the notion of weighted uncovered set, which was designed to show a deterministic rule with good metric distortion. Would ideas that lead to distortion-optimal deterministic rules be useful, such as the matching uncovered set \citetnostar{DBLP:conf/ec/MunagalaW19} and related ideas \citetnostar{DBLP:conf/aaai/Kempe20a}, Plurality Matching \citetstar{DBLP:conf/focs/GkatzelisHS20}, Plurality Veto \citetnostar{DBLP:conf/ijcai/KizilkayaK22} and its variants \citetnostar{DBLP:conf/ijcai/KizilkayaK22,DBLP:conf/sigecom/Kizilkaya023}?
\item The biased metric framework potentially has more power than we have utilized in our proofs. \Cref{thm:rcb-ub,thm:radius-ub} show that for some function $f(\cdot,\cdot)$, their respective rules have distortion $1 + 2f(\alpha, \beta)$ under this assumption. If one can show a similar theorem for a new rule, but with a smaller function  $f(\cdot,\cdot)$, then this would improve the distortion upper bound. One could also attempt to go beyond our proof structure. For example, \cref{eq:weaker-biased} is a set of simpler and stricter constraints (derived by \citetnostar{DBLP:conf/soda/CharikarR22}) than what we use, but we do not know what distortion bounds we can achieve after this simplification. Further understanding the structures of biased metrics can be helpful in improving the metric distortion bounds.
\end{itemize}
Another intriguing direction is to find ``simpler'' rules that have good metric distortion:
\begin{itemize}
\item We managed to break the barrier of $3$ by mixing simple rules. Can we do this using an even simpler rule, e.g., one which does not look like a randomization between simple rules? A similar question can be asked for some non-metric distortion settings, where the Stable Lottery (or Stable Committee) rule, which looks like a randomization between two simple rules, gives optimal $\Theta(\sqrt{m})$ distortion \citepstar{DBLP:journals/teco/EbadianKPS24}.
\item Can we break the barrier of $3$ by using a minimal amount of randomness{---}for example, randomizing between at most two candidates, or only using randomness to sample a single voter (as RCB, RaDiUS, and Random Dictatorship do)?
\end{itemize}

\anonymize{
\section*{Acknowledgments}
The authors would like to thank Ashish Goel, Anupam Gupta, Kamesh Munagala, Aviad Rubinstein, and Michael Zhang for insightful discussions, and the anonymous reviewers for their helpful comments and suggestions.

Moses Charikar is supported by a Simons Investigator Award. Prasanna Ramakrishnan is supported by Moses Charikar's Simons Investigator Award and Li-Yang Tan's NSF awards 1942123, 2211237, and 2224246. Kangning Wang is supported by Aviad Rubinstein's NSF award CCF-2112824. Hongxun Wu is supported by ONR grant N00014-18-1-2562. This work was done while Kangning Wang was at Stanford University and the Simons Institute for the Theory of Computing, UC Berkeley.
}

\bibliography{ref}
\bibliographystyle{alpha}

\appendix

\section{Lower Bounds for RCB and RaDiUS}\label{sec:lbs}

In this section, we will show that $\beta$-RCB and $\beta$-RaDiUS always have worst-case distortion at least 3. In particular, we will show that \Cref{cor:radius-ub} is tight and \Cref{cor:rcb-ub} is almost tight. The reason for the gap between the upper and lower bounds for $\beta$-RCB is that \Cref{thm:rcb-ub} can be improved in some parameter regimes (though not in a way that improves the final distortion). Details on this are included in \Cref{apx:wuc-ub}.

\begin{theorem}\label{thm:radius-lb}
$\beta$-RaDiUS has distortion at least $1 + 2/\beta$.
\end{theorem}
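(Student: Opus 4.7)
The plan is to show that for every $\eta > 0$ there is an election instance paired with a consistent metric on which $\beta$-RaDiUS has distortion at least $1 + 2/\beta - \eta$; letting $\eta \to 0$ then yields the theorem and shows that Corollary~\ref{cor:radius-ub} is tight.

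First I would extract the equality conditions from the proof of Theorem~\ref{thm:radius-ub}. To saturate the final bound $L(D) \leq R/\beta$, we need, for each voter $v$, the witness $k_v$ to $i^*$'s non-coverage of $j_v$ to satisfy $x_{k_v} = R/\beta$ (the extremum of $(1/\beta,\beta)$-consistency) \emph{and} $s_{k_v \cg j_v}$ to lie just below $\beta$ (the extremum of $j_v \in U$); and the voter-level inequalities $2d(v,i^*) \geq x_{j_v} - x_{k^*}$ (strengthened to $\geq x_{j_v}$ when $k^* \cgeq_v i^*$) to hold with equality. A two-candidate instance cannot meet both conditions on $k_v$, since there the witness is forced to be $k_v = i^*$ with $s_{i^* \cg j_v} = 1 - \beta$; a direct computation shows such an instance yields only distortion $2/\beta - 1$. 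The extremal construction therefore needs at least two candidates in $U$ that \emph{nearly} cover each other, with pairwise margin just shy of $\beta$.

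Based on this, I would use a symmetric instance with $k$ (large) bad candidates $c_1,\ldots,c_k$ plus $i^*$ and $n$ voters partitioned into $k$ equal-sized groups $V_1,\ldots,V_k$. The preferences are designed so that (a) $s_{c_i \cg i^*} \geq \beta$ for every $i$, so each $c_i$ covers $i^*$; (b) for each $i$ there is some $c_j$ with $s_{c_j \cg c_i}$ just below $\beta$, providing the near-cover witness; (c) no actual covers among the $c_i$'s hold (since $s_{c_j \cg c_i} < \beta$), giving $U = \{c_1,\ldots,c_k\}$; and (d) voters in $V_g$ put $c_g$ first in $U$, so by symmetry $\beta$-RaDiUS picks each $c_i$ with probability $1/k$. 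The accompanying metric places $i^*$ at the origin and each $c_i$ at distance $R/\beta$ symmetrically; within each group $V_g$, a $\beta$-fraction of voters sits essentially midway between $i^*$ and $c_g$ (realizing $2d(v,i^*) = x_{c_g}$), while the remaining $(1-\beta)$-fraction sits near $i^*$ on the opposite side of $c_g$ (realizing $2d(v,i^*) = x_{j_v} - x_{k^*}$ tightly). This forces $\SC(i^*) = R/2$ and, by a symmetry-assisted computation of social costs, $\SC(c_i) \to (1 + 2/\beta)\SC(i^*)$ as $k \to \infty$.

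The main obstacle is simultaneously realizing the delicate pairwise margins (especially the near-cover condition $s_{c_j \cg c_i} = \beta - O(1/k)$) and embedding the preferences in a valid metric. I would specify only the voter-to-candidate distances directly and invoke the graph-distance closure of Section~\ref{sec:prelims} to extend them to a valid pseudometric, verifying the triangle inequality using the symmetry. Consistency of each voter's full ordering within $U$ (the positions of the non-first $c_i$'s) is then reduced, by cyclic symmetry, to checking the orderings of a single representative voter. Pushing the $O(1/k)$ errors in the margins, consistency, and cost computation down to $\eta$ then completes the proof.
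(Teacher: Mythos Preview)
Your proposal has a genuine gap: condition (a), that $s_{c_i \cg i^*}\geq \beta$ for \emph{every} $c_i$ in the uncovered set, caps the distortion of each such $c_i$ at $2/\beta - 1$, not $1+2/\beta$. Indeed, if $s_{c\cg i^*}\geq \beta$ then at least a $\beta$ fraction of voters $v$ satisfy $d(v,c)\leq d(v,i^*)$, and for those voters $2d(v,i^*)\geq d(v,c)+d(v,i^*)\geq d(c,i^*)=x_c$, so $\SC(i^*)\geq \beta x_c/2$. On the other hand $d(v,c)-d(v,i^*)\leq 0$ on that same $\beta$ fraction and $\leq x_c$ elsewhere, giving $\SC(c)-\SC(i^*)\leq (1-\beta)x_c$. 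Hence $\SC(c)/\SC(i^*)\leq 1+2(1-\beta)/\beta = 2/\beta-1$. Since your rule selects only among the $c_i$, the resulting distortion can never exceed $2/\beta-1$; this is precisely the two-candidate ceiling you yourself computed, so adding more such candidates does not help.

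The paper's construction avoids this by separating roles. A single auxiliary candidate $k^*$ (with $s_{k^*\cg i^*}=\beta$) covers $i^*$ and kicks it out of $U$, but $k^*$ is never anyone's favourite in $U$ and so is never selected. The candidates that \emph{are} selected---a large set $U$---satisfy the opposite extreme $s_{i^*\cg j}\to 1$; they sit in the uncovered set not because they beat $i^*$, but because $i^*$ fails to cover them (the witness being $k^*$, since $s_{k^*\cg j}<\beta$). This decoupling is exactly what lets $\SC(j)-\SC(i^*)$ approach $x_j$ while $\SC(i^*)$ stays at $\beta x_j/2$, yielding $1+2/\beta$. In your tightness analysis the mistake is conflating the witness $k_v$ with the selected candidates: the witness must satisfy $s_{k_v\cg i^*}\geq\beta$, but $j_v$ need not---and for the bound to be tight, it must not.
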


\begin{proof}

Consider a family of instances in which the candidates are $i^*$, $k^*$, and a large set $U$. A $1 - \beta$ fraction of voters ranks $i^*\cg U \cg k^*$. When we write the set $U$ in this fashion, it means that these voters order the candidates of $U$ every way in equal proportion. The remaining $\beta$ fraction of voters has rankings of the form $j \cg k^* \cg i^* \cg U\setminus j$ for some $j \in U$. Each $j \in U$ is equally likely to be  $j$. 

\begin{figure}[!ht]
\centering
\includegraphics[scale=1]{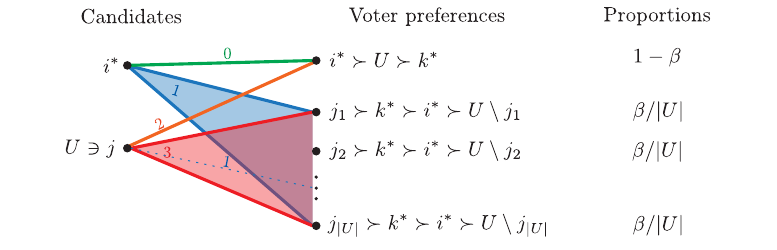}
\caption{The lower bound instance for $\beta$-RaDiUS.}
\end{figure}

Now we have that $s_{k^* \cg i^*} = \beta$, and for each $j \in U$, $s_{i^* \cg j} = 1 - \beta/|U|$ and $s_{k^* \cg j} = \beta(1 - \frac{1}{|U|})$. So in the graph in which $(a, b)$ is an edge if $s_{a \cg b}\geq \beta$, the only edges are $(k^*, i^*)$ and $(i^*, j)$ for each $j \in U$ (eventually we will take $|U|\to \infty$, so we will treat $|U|$ as large enough that $ 1 - \beta/|U| > \beta$). Then $k^*$ covers $i^*$, since there is an edge from $k^*$ to $i^*$ and $k^*$ has no in-edges. On the other hand, $i^*$ does not cover any $j \in U$, because there is no edge from $k^*$ to $j$. Thus, the weighted uncovered set is $U \cup \{k^*\}$. 

But then $\beta$-RaDiUS will always choose a candidate in $U$. Considering the biased metric for which $x_i = 2$ for all $i \neq i^*$ we have
\begin{align*}
\SC(i^*) &= \beta,\\
\SC(j) - \SC(i^*) &= 2\left(1 - \frac{\beta}{|U|}\right),
\end{align*}
for all $j \in U$. Taking $|U| \to \infty$, we get that 
$$\frac{\SC(j) - \SC(i^*)}{2\SC(i^*)} = \frac{1}{\beta}$$
which exactly corresponds to distortion $1 + 2/\beta$. 
\end{proof}

\begin{theorem}\label{thm:rcb-lb}
$\beta$-RCB has distortion at least $1 + 2/\beta$.
\end{theorem}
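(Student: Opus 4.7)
The plan is to recycle the very instance used in the proof of \Cref{thm:radius-lb}: the candidate set $\{i^*, k^*\} \cup U$ with $|U|$ large, where a $(1-\beta)$ fraction of voters has the order $i^* \cg U \cg k^*$ (each permutation of $U$ equally likely) and a $\beta$ fraction has the order $j \cg k^* \cg i^* \cg U\setminus\{j\}$, with the special $j\in U$ distributed uniformly. The same pairwise margins hold as before: $s_{k^* \cg i^*} = \beta$, $s_{i^* \cg j} = 1 - \beta/|U|$ for $j \in U$, $s_{k^* \cg j} = \beta(1 - 1/|U|) < \beta$, and $s_{j \cg j'} = \tfrac12$ for distinct $j, j' \in U$. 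So the only pairwise comparisons reaching the threshold $\beta$ are $(k^*, i^*)$ and (for $|U|$ large enough) $(j, i^*)$ for each $j \in U$.

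The main task is to verify that on every voter, $\beta$-RCB outputs a member of $U$, and that the overall output distribution is uniform over $U$. For a Case 1 voter with order $i^* \cg U \cg k^*$, RCB starts at $i = k^*$; since $s_{k^* \cg i^*} = \beta$, the candidate $i^*$ is immediately eliminated, and no other candidate above $k^*$ is eliminated because the only other margins $s_{k^* \cg j}$ with $j \in U$ are strictly less than $\beta$. Walking up through $U$, each subsequent $i = u$ only sees margins of $\tfrac12$ against candidates above it, so nothing more is eliminated; RCB terminates at $v$'s top choice in $U$. For a Case 2 voter with special $j$, starting from the bottom of $U\setminus\{j\}$, the first considered candidate already eliminates $i^*$ since $s_{u \cg i^*} = 1 - \beta/|U| \geq \beta$ for large $|U|$; nothing else gets eliminated because $s_{u \cg k^*} \to 1 - \beta < \beta$ (using $\beta > \tfrac12$), $s_{u \cg j} = \tfrac12$, and pairwise margins within $U$ are $\tfrac12$. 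Walking up, $k^*$ fails to eliminate $j$ since $s_{k^* \cg j} < \beta$, so the rule outputs the special $j$. Averaging over voters, every $j \in U$ is output with equal probability $1/|U|$.

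With the output distribution established, apply the biased metric given by $x_{i^*} = 0$ and $x_k = 2$ for all other $k$. Then $d(i^*, v) = 0$ on Case 1 voters (the maximum in the definition of $d(i^*, v)$ is $0$ because no candidate ranked above $i^*$ exists) and $d(i^*, v) = 1$ on Case 2 voters (achieved by the pair $(j, i^*)$ or $(k^*, i^*)$), so $\SC(i^*) = \beta$. A direct calculation of $d(j', v) = d(i^*, v) + \min_{k: j' \cgeq_v k} x_k$ gives $d(j', v) = 2$ on Case 1 voters, $d(j, v) = 1$ on the Case 2 voter whose special is $j' = j$, and $d(j', v) = 3$ on the remaining Case 2 voters. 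Averaging and taking $|U| \to \infty$ yields $\SC(j') \to 2 + \beta$ for every $j' \in U$, so the expected social cost of RCB's output approaches $2 + \beta$ and the distortion approaches $(2+\beta)/\beta = 1 + 2/\beta$.

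The main delicate point will be the case analysis for the elimination dynamics of $\beta$-RCB, since a single stray elimination at the wrong step would change the output distribution. The crucial inequalities to verify are that $s_{u \cg k^*} < \beta$ (requiring $\beta > \tfrac12$, consistent with the assumption $\beta \in (\tfrac12, 1)$) and that $s_{k^* \cg j} < \beta$ for $j \in U$; both hold strictly, so the two irrelevant candidates $k^*$ and the non-special elements of $U$ are never eliminated. The remainder is a straightforward computation analogous to the RaDiUS lower bound.
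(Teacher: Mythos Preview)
Your proposal contains a decisive error in the analysis of Case~2 voters. You write that the bottom candidate $u\in U\setminus\{j\}$ eliminates $i^*$ because ``$s_{u\cg i^*}=1-\beta/|U|$'', but that quantity is $s_{i^*\cg u}$, not $s_{u\cg i^*}$. In this instance $i^*$ beats every $u\in U$ by a \emph{huge} margin ($s_{i^*\cg u}=1-\beta/|U|$), so $s_{u\cg i^*}=\beta/|U|<\beta$ and $u$ eliminates nothing. Tracing $\beta$-RCB on a Case~2 voter $j\cg k^*\cg i^*\cg U\setminus\{j\}$ correctly: the walk climbs through all of $U\setminus\{j\}$ without any eliminations, then reaches $i^*$, and now $s_{i^*\cg j}=1-\beta/|U|\geq\beta$ so $i^*$ eliminates $j$; since $s_{i^*\cg k^*}=1-\beta<\beta$, the rule terminates at $k^*$. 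Thus $\beta$-RCB outputs $k^*$ with probability~$\beta$ and a uniform member of $U$ with probability~$1-\beta$. Plugging into your biased metric gives $L(D)/R=(1-\beta^2)/\beta$ and distortion only $1+2(1-\beta^2)/\beta$, which is strictly weaker than $1+2/\beta$ for every $\beta\in(\tfrac12,1)$.

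The failure mode is exactly the obstacle the paper's construction is designed around. In the RaDiUS instance, $i^*$ dominates each $j\in U$ pairwise, and RCB (unlike RaDiUS) will exploit this when $i^*$ comes under consideration. The paper therefore builds a layered instance with large sets $C_1,\ldots,C_{T+1}$ so that, for every voter, $i^*$ is eliminated \emph{before} it can be considered: $C_1$ knocks out $i^*$, $C_2$ knocks out the relevant part of $C_1$, and so on, forcing the output into the farthest layer $C_{T+1}$. A single layer cannot achieve this because once RCB reaches $i^*$ it wipes out $U$; recycling the RaDiUS instance does not work here.
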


\begin{proof}
For $T$ sufficiently large, let $C_1, \ldots, C_{T+1}$ be disjoint sets of $T$ candidates each. In addition, we have one candidate $i^*$, and we will treat $\{i^*\}$ as $C_0$.

We will design the voters' preferences so that the election instance has the following properties:
\begin{enumerate}[label=(\arabic*)]
    \item $s_{c_{i + 1} \cg c_{i}} \geq \beta$ for each $c_i \in C_i$ and $c_{i + 1} \in C_{i + 1}$
    \item As $i$ increases, $s_{c_i \cg i^*}$ decreases until $s_{c_{T + 1} \cg i^*} \approx 0$.
\end{enumerate}
We will want that when any given voter is the chosen voter, some candidate in $C_1$ will eliminate $i^*$, and then some candidate in $C_2$ will eliminate everyone in $C_1$ above $i^*$ 
and so on, so that all the candidates above and including $i^*$ will get eliminated. In the end, we will get stuck with a candidate in $C_{T + 1}$ with the highest cost.

Consider the following voter preferences. A $1 - \beta - 1/T$ fraction of voters has a ranking of the form
$$ i^* \cg C_{T + 1} \cg C_{T} \cg \cdots \cg C_1.$$
(These sets should always be thought of as representing candidates ordered in every way equally often.) Another $1/T$ fraction of voters will have rankings of the form
$$ C_{T + 1} \cg C_{T} \cg \cdots \cg C_1 \cg i^*.$$
Then for each $1 \leq t \leq T$, a $\beta/T$ fraction of voters has a ranking of the form
$$(C_t \setminus c_t) \cg \cdots \cg (C_1 \setminus c_1)\cg i^* \cg C_{T + 1} \cg \cdots \cg C_{t + 1} \cg c_t \cg \cdots \cg c_1.$$
For each $i \leq t$, each member of $C_i$ is the candidate $c_i$ equally often over the preference lists of these voters. For brevity, we call the set of voters with rankings of the above form $S_{t}$. 

First, we will show that for $c_i \in C_i$ and $c_{i + 1} \in C_{i + 1}$, we have $s_{c_{i + 1} \cg c_i} \geq \beta$. First let us see this for $i = 0$. We have that $c_{1} \cg i^*$ for the $1/T$ fraction of voters in $S_{C_1 \cg i^*}$, and for all but a $1/T$ fraction of voters in $S_1 \cup \cdots \cup S_{T}$. It follows that 
$$s_{c_1 \cg i^*} = \frac1T + \beta\left(1 - \frac1T\right) = \beta + \frac{1 - \beta}{T} > \beta.$$
Now for $i > 0$, we have that $c_{i + 1} \cg c_i$ for the $1 - \beta$ fraction of voters outside of $S_1 \cup \cdots \cup S_{T}$. For the remaining voters, if $c_{i} \cg_v c_{i + 1}$, then one of two things must be true of $v$: Either $v \in S_{i}$, or $v \in S_{i'}$ for some $i' > i$ and the candidate $c_{i + 1}$ is the candidate from $C_{i + 1}$ who is below $i^*$. Each of these two events is true for at most a $1/T$ fraction of the voters in $S_1 \cup \cdots \cup S_{T}$, so it follows that 
$$s_{c_{i + 1} \cg c_i} \geq 1 - \beta + \beta(1 - \tfrac2T) = 1 - \tfrac2T \beta$$
and taking $T$ sufficiently large, this is at least $\beta$. 

For $1\leq i < j$ we have $s_{c_i \cg c_j} < \beta$ since $c_j \cg c_i$ for the $1 - \beta$ fraction of voters outside of  $S_1 \cup \cdots \cup S_{T}$, at at least one voter in $S_1$. These conclusions imply that when a voter outside of $S_{C_1 \cg i^*}$ is chosen by the rule, the candidates in $C_{T + 1}$ cannot be eliminated, and the candidates above and including $i^*$  will all get eliminated. It follows that with probability at least $1 - \frac1T$, the rule chooses a candidate in $C_{T + 1}$.

The biased metric we consider is the same as before, where $x_{i} = 2$ for $i \neq i^*$. With this we have:
\begin{align*}
\SC(i^*) &= \beta  + \frac1T,\\
\SC(j) - \SC(i^*) &= 2\left(1 - \frac1T\right),
\end{align*}
for $j \in C_{T + 1}$. Since the rule chooses a candidate in $C_{T+1}$ with probability at least $1 - \frac1T$, we get
$$\frac{\Ev[\SC(j) - \SC({i^*})]}{2\SC(i^*)} \geq \frac{(1 - \frac1T)^2}{\beta + \frac1T}$$
and taking $T \to \infty$, this corresponds to distortion $1 + 2/\beta$.
\end{proof}

\section{A Third Complementary Rule}\label{sec:indep-set}

In this section, we will give a brief discussion on another rule which is similar in flavor to $\beta$-RCB and $\beta$-RaDiUS. We chose not to include it in the main body of the paper because the guarantees are worse and the analysis is more complicated, but the rule itself is interesting and could be of independent interest. It may also give some insight into what led to $\beta$-RCB and $\beta$-RaDiUS. 

The rule is based on the following well-known combinatorial fact.

\begin{lemma}[\cite{chvatal1974every}]
For any directed graph $G = (V, E)$, there exists an independent set $U$ with the property that for all $v \in V$ there exists $u \in U$ such that there is a path of length at most $2$ from $u$ to $v$.
\end{lemma}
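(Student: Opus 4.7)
The plan is to prove this classical lemma by induction on $|V|$, using the fact that a suitable apex vertex either fits into the independent set or can be replaced by one of its in-neighbors already present.

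First, I would handle the base case $|V| = 1$ trivially by taking $U = V$. For the inductive step, pick any vertex $v \in V$ (any choice works, though a vertex of small in-degree would be a natural one) and let $N^+(v) = \{w \in V : (v, w) \in E\}$ denote its out-neighborhood. Form $V' = V \setminus (\{v\} \cup N^+(v))$ and apply the inductive hypothesis to the induced subgraph on $V'$ to obtain an independent set $U' \subseteq V'$ such that every vertex of $V'$ is reachable from $U'$ by a path of length at most 2 inside the induced subgraph (and hence in $G$).

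Then I would split into two cases. Case 1: some $u \in U'$ satisfies $(u,v) \in E$. In this case I claim $U := U'$ already works. Independence is inherited from $U'$. For coverage, $U'$ reaches $V'$ by induction; $u$ reaches $v$ in one step; and for each $w \in N^+(v)$, the path $u \to v \to w$ has length $2$ (and $u \ne w$ since $u \in V'$ while $w \notin V'$). Case 2: no $u \in U'$ has an edge into $v$. Here I set $U := U' \cup \{v\}$. The key observation is that $v$ also has no edges to $U'$, because any $u \in U'$ lies in $V'$ and therefore is not in $N^+(v)$ by construction. So $U$ is independent. Coverage of $V'$ is inherited from $U'$, coverage of $v$ is by the zero-length path, and coverage of $N^+(v)$ is by one-step paths from $v$.

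The only subtlety, and the place to be most careful, is the meaning of \emph{independent set} in a digraph (no edges in either direction among the chosen vertices) and checking that in Case 2 neither orientation of an edge between $v$ and $U'$ is possible. Once the construction of $V'$ rules out edges from $v$ into $U'$, and the case hypothesis rules out edges from $U'$ into $v$, independence of $U' \cup \{v\}$ is immediate. All other steps are routine bookkeeping, so I do not anticipate any further obstacles.
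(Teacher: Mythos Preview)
Your inductive proof is correct and is essentially the same approach the paper takes: the paper explicitly attributes the induction to Bondy and Murty and then unrolls it into a two-pass elimination algorithm, where the forward pass (remove out-neighbors of each surviving vertex) mirrors your deletion of $\{v\}\cup N^+(v)$ at each recursive level, and the backward pass (decide which survivors to keep) mirrors your Case~1/Case~2 split. The underlying idea is identical; only the packaging differs.
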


Such an independent set $U$ is called a \emph{quasi-kernel} (or sometimes \emph{semi-kernel} \cite{chvatal1974every}) of the directed graph $G$. %

\begin{proof}
The proof of \cite{chvatal1974every} is by induction, and it can be reworked into a nice algorithm to construct the set $U$. A similar observation has been made by \cite{croitoru2015note}. The algorithm goes as follows.

\begin{itemize}

	\item Arbitrarily order the vertices $v_1, v_2, \ldots, v_n$.

	\item For $i = 1, \ldots, n$:

	\begin{itemize}

		\item If $v_i$ is not eliminated, eliminate all $v_j$ such that $j > i$ and $(v_i, v_j)\in E$.

	\end{itemize}

	\item For $i = n, \ldots, 1$:

	\begin{itemize}

		\item If $v_i$ is not eliminated, eliminate all $v_j$ such that $j < i$ and $(v_i, v_j)\in E$.

	\end{itemize}

	\item The vertices that are not eliminated form the set $U$.

\end{itemize}

First, let us see why $U$ is an independent set. Suppose that there existed $v_i, v_j \in U$ such that there is an edge from $v_i$ to $v_j$. Since both are never eliminated, $v_i$ would have eliminated $v_j$ in the first loop if $i < j$ and in the second loop if $i > j$. This is a contradiction.

Next, suppose we have some $v \notin U$. Then either $v$ was eliminated by some $u \in U$, or by some $v' \notin U$ who was eliminated by some $u \in U$ ($v'$ eliminated $v$ in the first loop, then $u$ eliminated $v'$ in the second loop). In the first case, there is a path of length $1$ from $u$ to $v$, and in the second case there is a path of length 2. Thus, $U$ indeed satisfies the requirements.
\end{proof}

The key idea of the rule is to apply this lemma to the graph on candidates {that has} an edge from $a$ to $b$ {whenever} $s_{a \cg b} \geq \beta$, for some fixed $\beta \in (\frac12, 1)$. Then, even though we do not know $i^*$, we know there is some $j^* \in U$ such that $s_{j^* \cgeq k} \geq \beta$ and $s_{k \cgeq i^*} \geq \beta$ {for some $k$}.  On the other hand, for any $j \in U$, we have that $s_{j \cgeq j^*} \geq 1 - \beta$. {These conditions mimic those} we used in the proof of \Cref{thm:radius-ub} (there it is the same condition as if $j^* = k$) and so we can reason about it in a similar way. However, the extra distance between $j$ and $i^*$ (having $j^*$ and $k$ rather than just $k$) makes the analysis weaker.

Ultimately the rule is as follows.

\begin{framed}
    \begin{center}
        \textbf{$\beta$-Random Dictatorship on the {Quasi-Kernel} ($\beta$-{RDQK})}
    \end{center}
    
    \begin{itemize}

    	\item Fix an arbitrary ordering of the candidates $c_1, c_2, \ldots, c_m$.

		\item For $i = 1, \ldots, m$:

		\begin{itemize}

			\item If $c_i$ is not eliminated, eliminate all $c_j$ such that $j > i$ and $s_{c_i \cg c_j} \geq \beta$.

		\end{itemize}

		\item For $i = m, \ldots, 1$:

		\begin{itemize}

			\item If $c_i$ is not eliminated, eliminate all $c_j$ such that $j < i$ and $s_{c_i \cg c_j} \geq \beta$.

		\end{itemize}

		\item Let $U$ be the set of candidates that were never eliminated. Pick a uniformly random voter and choose their favorite candidate in $U$.

     \end{itemize}

\end{framed}

For the distortion guarantees, we need to make a slight modification to the definition of $(\alpha, \beta)$-{consistency}. Instead of the definition being whenever $s_{k \cg i^*} \geq \beta$ we have $x_k \leq \alpha R$, it will be whenever $s_{k \cg i} \geq \beta$ we have $x_k - x_i \leq \alpha R$. 

{Thanks to this change, we can use the fact that $s_{j^* \cgeq k} \geq \beta$ and $s_{k \cgeq i^*} \geq \beta$ to conclude} that $x_{j^*} \leq 2\alpha R$. Using the same approach as in the proofs of \Cref{thm:rcb-ub} and \Cref{thm:radius-ub}, we can establish the following theorem.

\begin{theorem}
Suppose that we have an election instance with an $(\alpha,\beta)$-consistent underlying metric. Then if $D$ is the distribution output by $\beta$-{RDQK}, we have
$$L(D) \leq (\alpha(1 - \beta)(2 + 3\beta) + \beta)R.$$
\end{theorem}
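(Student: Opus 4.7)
The plan is to mirror the proof of \Cref{thm:radius-ub}, using the directed-graph two-hop lemma to produce an analog of $k^*$. For each voter $v$, the output $j_v$ lies in $U$. By the lemma, there is some $j^* \in U$ from which $i^*$ is reachable in at most two steps in the directed graph with edges $(a,b)$ precisely when $s_{a \cg b} \geq \beta$: if $i^* \in U$ take $j^* = i^*$; if there is a direct edge $j^* \to i^*$ use it; otherwise there is a length-two path $j^* \to k^* \to i^*$ through some intermediate $k^*$. In the length-two case $k^* \notin U$ (else the edge $j^* \to k^*$ would violate independence of $U$), and the modified $(\alpha,\beta)$-consistency telescopes along the path to give $x_{k^*} \leq \alpha R$ and $x_{j^*} \leq x_{k^*} + \alpha R \leq 2\alpha R$. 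This is exactly why the definition had to be strengthened.

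For the upper bound on $L(D)$, I would take $k_v := j^*$ for every voter, filling the roles of properties \ref{enum:1} and \ref{enum:3} from the proof of \Cref{thm:rcb-ub}: the bound $x_{k_v} \leq 2\alpha R$ now replaces $x_{k_v} \leq \alpha R$, while $s_{k_v \cg j_v} = s_{j^* \cg j_v} < \beta$ still holds because $j_v$ and $j^*$ both lie in the independent set $U$. The same computation as in \Cref{thm:rcb-ub} then yields
\[
L(D) \;\leq\; 2\alpha(1-\beta)\,R \;+\; \beta \cdot \frac{1}{n}\sum_{v} x_{j_v}.
\]

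For the lower bound on $R$, I would take $j^*$ itself as the shared candidate (playing the role of $k^*$ in \Cref{thm:radius-ub}), so $j_v \cgeq_v j^*$ for every $v$ gives $2\,d(v,i^*) \geq x_{j_v} - x_{j^*}$, and voters with $j^* \cgeq_v i^*$ additionally satisfy $j_v \cgeq_v j^* \cgeq_v i^*$ and therefore $2\,d(v,i^*) \geq x_{j_v}$. The only place the argument genuinely weakens compared to RaDiUS is here: the fraction of such voters is only $\geq s_{j^* \cg k^*} + s_{k^* \cg i^*} - 1 \geq 2\beta-1$ (rather than $\beta$). This gives $\tfrac{1}{n}\sum_v x_{j_v} \leq R + 2(1-\beta) x_{j^*} \leq (1 + 4\alpha(1-\beta))R$.

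Combining the two halves yields a bound of the form $\alpha(1-\beta)(2 + c\beta) + \beta$ for some constant $c$, and the direct combination produces $c = 4$. The main obstacle is sharpening this to the claimed $c = 3$. I expect the saving to come from the additional fact that $s_{j_v \cg k^*} \geq s_{j^* \cg k^*} \geq \beta$ (via $j_v \cgeq_v j^* \cg_v k^*$ on voters preferring $j^*$ to $k^*$), equivalently $s_{k^* \cg j_v} \leq 1-\beta$. This lets $k^*$ serve as an alternative witness in the $L(D)$ computation, with the smaller distance bound $x_{k^*} \leq \alpha R$ in place of $x_{j^*} \leq 2\alpha R$ on a carefully chosen subset of voters, and balancing this improvement against the remaining contributions should recover exactly the missing $\alpha\beta(1-\beta)$.
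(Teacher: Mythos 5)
Your setup is sound and matches the template the paper has in mind (the paper actually omits this proof, saying only that the mechanics mirror \Cref{thm:rcb-ub} and \Cref{thm:radius-ub}): taking $j^*\in U$ within two hops of $i^*$, the strengthened consistency gives $x_{k^*}\leq \alpha R$ and $x_{j^*}\leq 2\alpha R$, independence of $U$ gives $s_{j^*\cg j_v}<\beta$, and your bound $L(D)\leq 2\alpha(1-\beta)R+\beta\cdot\frac1n\sum_v x_{j_v}$ is correct. The genuine gap is in the lower bound on $R$: your two-way split (voters with $j^*\cg_v i^*$, a fraction at least $2\beta-1$, versus the rest) only yields $\frac1n\sum_v x_{j_v}\leq(1+4\alpha(1-\beta))R$ and hence the coefficient $2+4\beta$, as you acknowledge. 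The sharpening you then sketch does not work: the inequality $s_{j_v\cg k^*}\geq s_{j^*\cg k^*}$ is unjustified, because $j_v\cgeq_u j^*$ is known only for the dictator $u=v$, not for the other voters over whom $s_{j_v\cg k^*}$ is measured; moreover one cannot guarantee $s_{k^*\cg j_v}<\beta$ at all, since the directed-MIS construction permits edges from the eliminated vertex $k^*$ into members of $U$, so $k^*$ cannot serve as a witness in the $L(D)$ computation.

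The missing idea is a finer, three-way partition in the $R$ bound, following the chain $j_v\cgeq_v j^*\cg_v k^*\cg_v i^*$ as far as each voter's own ranking allows. Voters with $j^*\cg_v k^*$ and $k^*\cg_v i^*$ give $2d(i^*,v)\geq x_{j_v}$; voters with $j^*\cg_v k^*$ but $i^*\cgeq_v k^*$ (a fraction at most $1-\beta$, since $s_{k^*\cg i^*}\geq\beta$) give $2d(i^*,v)\geq x_{j_v}-x_{k^*}$, so they are charged only $x_{k^*}\leq\alpha R$; and only the fraction at most $1-\beta$ of voters with $k^*\cgeq_v j^*$ falls back to $2d(i^*,v)\geq x_{j_v}-x_{j^*}$ with the full $x_{j^*}\leq 2\alpha R$. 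Summing,
$$R \;\geq\; \frac1n\sum_{v\in V}x_{j_v}\;-\;(1-\beta)x_{k^*}\;-\;(1-\beta)x_{j^*}\;\geq\;\frac1n\sum_{v\in V}x_{j_v}-3\alpha(1-\beta)R,$$
so $\frac1n\sum_v x_{j_v}\leq(1+3\alpha(1-\beta))R$, and plugging this into your $L(D)$ bound gives exactly $L(D)\leq(\alpha(1-\beta)(2+3\beta)+\beta)R$; the degenerate cases ($i^*\in U$, or a direct edge $j^*\to i^*$) are only easier. So: right framework, correct up to coefficient $2+4\beta$, but the final step as you proposed it would fail, and the claimed $2+3\beta$ requires this finer partition rather than the $2\beta-1$ intersection bound.
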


Note that this is always worse than \Cref{thm:radius-ub}, but when $\beta$ is close to 1, it is actually better than \Cref{thm:rcb-ub}. We will omit the proof of the theorem, since the mechanics are the same as the proofs we have seen before. 

Now, the fact that $\beta$-{RDQK} uses an arbitrary ordering of the candidates leaves open the possibility that by cleverly choosing the order of the candidates, one can get a better bound on the distortion. Indeed, all of the following modifications improve the guarantee, at least for some values of $\beta$:

\begin{enumerate}[label={(\arabic*)}]

	\item Choose a random voter, and use their preference list as the candidate order.

	\item Do the above, but then use the same random voter in the last step to select a candidate.

	\item Either of the above, but use the preference list in \emph{reverse} for the candidate order.

\end{enumerate}

In fact, (3) with the same voter in the last step is exactly $\beta$-RCB. (Since we consider candidates in reverse, the voter's favorite candidate in the set $U$ will be decided after the first loop.) As one might expect, figuring out the optimal way to randomize between all of these different options makes the rule and its analysis quite complicated. One can at least improve the upper bound on $L(D)/R$ to $\alpha(1 - \beta)(\tfrac12 + \tfrac92 \beta) + \beta$ by randomizing between (1) and (2), and randomizing between this, $\beta$-RCB, and ML one can get distortion a little less than $2 \sqrt{2}$. It turns out that $\beta$-RaDiUS gets a better guarantee than all of these for all values of $\alpha$ and $\beta$ so these do not lead to any improvement.

\section{Re-deriving Known Bounds via Biased Metrics}\label{sec:revisit-known} 

In this section, we revisit some of the rules that have been studied in the metric voting distortion literature, and show how upper bounds on their distortions can be proved using the biased metric framework introduced by \cite{DBLP:conf/soda/CharikarR22} and refined by our work. The fact that this can be done is unsurprising{---}since biased metrics {are} the hardest metrics, any distortion upper bound for a rule must somehow be arguing about them {under the hood}.  However, many of these upper bounds also essentially {redo} the work of defining the biased metrics (at least in some partial sense), and so having biased metrics as a primitive can simplify some of the proofs. We will demonstrate this for a handful of results, suggesting that biased {metrics} may be a useful primitive for future work.

\subsection{Distortion for (Weighted) Uncovered Set Rules}\label{apx:wuc-ub}

In \Cref{ssec:radius}, we showed an upper bound on the distortion of running Random Dictatorship restricted to the candidates in the weighted uncovered set $U$. In this section, we will show the following distortion bound on choosing \emph{any} candidate in $U$. 

\begin{theorem}\label{thm:wuc-ub}
Let $U$ be the $\beta$-weighted uncovered set for some $\beta \in [\frac12, 1)$. For any $(\alpha, \beta)$-consistent metric, any candidate $j\in U$ will satisfy 
$$\frac{\SC(j) - \SC(i^*)}{2\SC(i^*)} \leq \alpha\min\left(0,1 - \frac{\beta^2}{1 - \beta}\right) + \frac{\beta}{1 - \beta}.$$
\end{theorem}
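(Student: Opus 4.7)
The plan is to mimic the structure of the proofs of Theorems~\ref{thm:rcb-ub} and~\ref{thm:radius-ub}. I would work with a biased metric $(x_1, \ldots, x_m)$ that is $(\alpha,\beta)$-consistent and a fixed $j \in U$, and split into two cases based on how $j$ escapes being covered by $i^*$: either (a) $s_{i^* \cg j} < \beta$, or (b) there exists some $c$ with $s_{c \cg i^*} \geq \beta$ and $s_{c \cg j} < \beta$.

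In each case, I would designate an auxiliary candidate $k_j$ playing the role of the $k_v$ from the $\beta$-RCB analysis: take $k_j = i^*$ in case (a) and $k_j = c$ in case (b). In both cases $x_{k_j} \leq \alpha R$ (trivially in (a) since $x_{i^*} = 0$, by $(\alpha, \beta)$-consistency in (b)) and $s_{k_j \cg j} < \beta$. The same line-by-line computation as in the proof of Theorem~\ref{thm:rcb-ub} would then yield
\[
\SC(j) - \SC(i^*) \;\leq\; (1-\beta)\, x_{k_j} + \beta\, x_j.
\]
To bound $x_j$, I would lower bound $R = 2\SC(i^*)$ using the biased metric formula: for every voter $v$ with $j \cg_v k_j$, the maximum defining $2 d(i^*, v)$ is at least $x_j - x_{k_j}$, so $R \geq (1-\beta)(x_j - x_{k_j})$ and hence $x_j \leq x_{k_j} + R/(1-\beta)$. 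Substituting back gives
\[
\SC(j) - \SC(i^*) \;\leq\; x_{k_j} + \frac{\beta}{1-\beta}\, R.
\]

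Plugging $x_{k_j} = 0$ in case (a) gives the clean bound $\tfrac{\beta}{1-\beta} R$, and plugging $x_{k_j} \leq \alpha R$ in case (b) gives $\alpha R + \tfrac{\beta}{1-\beta} R$; taking the worst of the two roughly matches the theorem's structure. The main obstacle is obtaining the precise coefficient on $\alpha$ in case (b): the direct plug-in gives $\alpha R$, but the theorem's bound involves $\alpha(1 - \beta^2/(1-\beta))$, which is strictly smaller when $\beta > (\sqrt{5}-1)/2$. I anticipate this cancellation comes from using the additional inclusion--exclusion observation that in case (b) at least a $(2\beta - 1)$-fraction of voters must rank $c$ above $i^*$ above $j$, contributing an extra $x_c$-term to the lower bound on $R$ that tightens the $x_j$ bound through a joint optimization over $s_{k_j \cg j} \in [0, \beta)$. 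This mirrors the use of the shared auxiliary $k^*$ in the RaDiUS analysis, where fixing a single auxiliary candidate across all voters strengthens the $R$ lower bound.
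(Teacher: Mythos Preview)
Your overall structure matches the paper's proof: designate an auxiliary candidate $k$ (your $k_j$) with $x_k \leq \alpha R$ and $s_{k \cg j} < \beta$, bound $\SC(j)-\SC(i^*)$ from above, bound $R=2\SC(i^*)$ from below, and combine. You also correctly locate the gap: your lower bound $R \geq (1-\beta)(x_j - x_{k_j})$ is too weak to produce the coefficient $1-\tfrac{\beta^2}{1-\beta}$ on $\alpha$.

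The missing step is simpler than the inclusion--exclusion argument you sketch. In case~(b) your auxiliary candidate $k_j$ already satisfies $s_{k_j \cg i^*} \geq \beta$, so a $\beta$-fraction of voters $v$ have $k_j \cg_v i^*$ and hence $2d(i^*,v) \geq x_{k_j}$. Combining this with the $(1-\beta)$-fraction where $j \cg_v k_j$ (for whom $2d(i^*,v)\geq x_j - x_{k_j}$), and noting that voters in the overlap contribute $2d(i^*,v)\geq x_j = x_{k_j} + (x_j-x_{k_j})$, you get
\[
R \;\geq\; \beta\, x_{k_j} + (1-\beta)\max(x_j - x_{k_j},0).
\]
Now solve for $\max(x_j - x_{k_j},0) \leq \tfrac{1}{1-\beta}(R - \beta x_{k_j})$ and substitute into your upper bound $\SC(j)-\SC(i^*) \leq x_{k_j} + \beta\max(x_j - x_{k_j},0)$ to obtain
\[
\SC(j)-\SC(i^*) \;\leq\; x_{k_j}\Bigl(1 - \tfrac{\beta^2}{1-\beta}\Bigr) + \tfrac{\beta}{1-\beta}\,R,
\]
after which one uses $x_{k_j}\leq \alpha R$ or $x_{k_j}\geq 0$ depending on the sign of the coefficient. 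This is exactly the paper's proof. There is no need for a $(2\beta-1)$-fraction argument or a joint optimization over $s_{k_j \cg j}$: because you are analyzing a single fixed $j$ rather than averaging $j_v$ over voters, the auxiliary $k_j$ simultaneously plays the roles of both $k_v$ and $k^*$ from the RaDiUS analysis.
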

Note that the coefficient of $\alpha$ is nonzero as long as $\beta \leq \varphi^{-1}$ where $\varphi = \frac{1 + \sqrt{5}}{2}$ is the golden ratio. Using $\alpha = \frac1\beta$, we get \Cref{cor:wuc-ub}. {Note that \Cref{cor:wuc-ub} is also a direct corollary of the more general \cite[Corollary~5.1]{DBLP:conf/aaai/Kempe20a}.}

\begin{corollary}\label{cor:wuc-ub}
For all metrics, a candidate in $U$ has distortion at most $1 + 2/\beta$ for $\beta\in[\frac12,  \varphi^{-1}]$, and at most $1 + 2\frac{\beta}{1 - \beta}$ for $\beta \in [\varphi^{-1},1)$.
\end{corollary}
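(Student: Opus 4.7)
The plan is to derive \Cref{cor:wuc-ub} as a direct specialization of \Cref{thm:wuc-ub}, combined with the biased-metric reduction from \Cref{sec:biased}. First I would invoke the proposition proved in \Cref{sec:mechs} that every biased metric is $(1/\beta, \beta)$-consistent. This lets me apply \Cref{thm:wuc-ub} with $\alpha = 1/\beta$. Since the biased-metric framework tells us that a bound on $(\SC(j) - \SC(i^*))/(2\SC(i^*))$ over biased metrics is automatically a bound over all metrics, this specialization gives the desired distortion bound for arbitrary metrics.

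Next I would identify the crossover value $\varphi^{-1}$ by locating where the coefficient of $\alpha$ in \Cref{thm:wuc-ub} vanishes. Writing $1 - \beta^2/(1-\beta) = (1-\beta-\beta^2)/(1-\beta)$ and setting the numerator to zero gives $\beta^2 + \beta - 1 = 0$, whose positive root is $\beta = (\sqrt{5}-1)/2 = \varphi^{-1}$. Below this threshold the numerator is positive, and at $\beta = \varphi^{-1}$ it crosses zero; above it the bracketed factor contributes nothing to the bound (per the theorem's clipping at $0$).

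The two cases then follow by direct algebra. For $\beta \in [\tfrac12, \varphi^{-1}]$, substituting $\alpha = 1/\beta$ into \Cref{thm:wuc-ub} gives
$$\frac{\SC(j) - \SC(i^*)}{2\SC(i^*)} \leq \frac{1}{\beta}\cdot\frac{1-\beta-\beta^2}{1-\beta} + \frac{\beta}{1-\beta} = \frac{(1-\beta-\beta^2) + \beta^2}{\beta(1-\beta)} = \frac{1}{\beta},$$
which corresponds to distortion at most $1 + 2/\beta$. For $\beta \in [\varphi^{-1}, 1)$ the coefficient of $\alpha$ is already $0$, so only the $\beta/(1-\beta)$ term survives, yielding distortion at most $1 + 2\beta/(1-\beta)$.

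There is no substantive technical obstacle here; the proof is essentially a one-line specialization of the parent theorem. The only points worth noting are the clean telescoping in the first regime that collapses the two fractions into $1/\beta$, and the observation that the transition point $\varphi^{-1}$ is exactly where the two branches $1 + 2/\beta$ and $1 + 2\beta/(1-\beta)$ coincide (a short calculation gives both equal to $1 + 2\varphi$), which is consistent with continuity of the bound across the two regimes.
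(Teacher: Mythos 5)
Your proof is correct and takes essentially the same route as the paper: the paper obtains the corollary by plugging $\alpha = \frac1\beta$ (valid since every biased metric is $(\frac1\beta,\beta)$-consistent, and biased metrics suffice by \Cref{prop:suffbiased}) into \Cref{thm:wuc-ub}, with the algebra collapsing to $1/\beta$ below the crossover $\beta=\varphi^{-1}$ and to $\beta/(1-\beta)$ above it, exactly as you compute. You also correctly interpret the clipping of the $\alpha$-coefficient at $0$ (the theorem's $\min$ is a typo for $\max$, as its proof and the remark that the coefficient is nonzero for $\beta\le\varphi^{-1}$ make clear), and your observation that both branches meet at $1+2\varphi=2+\sqrt5$ matches the paper's recovery of the bound of Munagala--Wang.
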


Taking $\beta = \frac12$, this recovers the result of \cite{DBLP:conf/aaai/AnshelevichBP15,DBLP:journals/ai/AnshelevichBEPS18} that any rule that outputs a candidate in the (unweighted) uncovered set {(including the Copeland rule)} has distortion at most 5. Taking $\beta = \varphi^{-1}$ this also recovers the distortion $2 + \sqrt{5}$ rule due to \cite{DBLP:conf/ec/MunagalaW19}. 

We also note that by \Cref{prop:rcb-from-U}, these bounds will also apply to $\beta$-RCB. In particular, \Cref{cor:wuc-ub} improves \Cref{cor:rcb-ub} for $\beta\in [\frac12, 0.682]$, which explains why there is a gap between \Cref{cor:rcb-ub} and \Cref{thm:rcb-lb}. In fact, \Cref{thm:rcb-lb} is tight for $\beta\in[\frac12,  \varphi^{-1}]$. 

These bounds of course also apply to $\beta$-RaDiUS, but it turns out that {they do} not improve on \Cref{thm:radius-ub}. In particular, for $\alpha \leq \frac1\beta$, $\alpha(1 - \beta^2) + \beta$ is smaller than the expression in \Cref{thm:wuc-ub}.

\begin{proof}[Proof of \Cref{thm:wuc-ub}]
Let $j$ be any candidate in the $\beta$-weighted uncovered set. The fact that $j$ is not covered by $i^*$ implies that either $s_{j\cg i^*} \geq 1 - \beta$, or there exists $k$ such that $s_{k \cg i^*} \geq \beta$ and $s_{j \cg k} \geq 1 - \beta$. If the former occurs, we just let $k = i^*$ so that we have the single condition that $s_{k \cgeq i^*} \geq \beta$ and $s_{j \cg k} \geq 1 - \beta$. %

By a similar argument as in the proofs of \Cref{thm:rcb-ub} and \Cref{thm:radius-ub}, we can show that 
$$\SC(j) - \SC(i^*) \leq x_k + \beta\max(x_j - x_k, 0)$$
and 
$$2\SC(i^*) \geq \beta x_k +  (1 - \beta)\max(x_j - x_k, 0)$$
{which rearranges to}
$$\max(x_j - x_k, 0) \leq \frac1{1 -\beta} (2\SC(i^*) - \beta x_k).$$
It follows that 
$$\SC(j) - \SC(i^*) \leq x_k\left(1 - \frac{\beta^2}{1 - \beta}\right) + \frac{\beta}{1 - \beta} \cdot 2\SC(i^*).$$
If $1 - \frac{\beta^2}{1 - \beta} \geq 0$, then we can use $x_k \leq \alpha \cdot 2\SC(i^*)$. Otherwise, we just use $x_k \geq 0$. Putting these two cases together gives us the desired result. 
\end{proof}

\subsection{Matchings in Domination Graphs}

In the pursuit of a deterministic distortion 3 rule, \cite{DBLP:conf/ec/MunagalaW19} gave the following definition, and proved the subsequent theorem.

\begin{definition}
Given an election instance and a pair of candidates $a, b$, the bipartite graph $G(a, b)$ is defined as follows. Each side of the graph is a copy of the set of voters $V$. There is an edge from $v$ to $v'$ if there exists a candidate $c$ such that $a \cgeq_v c$ and $c \cgeq_{v'} b$.

The \emph{matching uncovered set} is the set of candidates $a$ such that for all $b \neq a$, $G(a, b)$ has a perfect matching.
\end{definition}

\begin{theorem}[\cite{DBLP:conf/ec/MunagalaW19}]\label{thm:mus-graph}
Every candidate in the matching uncovered set has distortion at most $3$.
\end{theorem}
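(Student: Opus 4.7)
The plan is to cast \Cref{thm:mus-graph} as a distortion bound for the degenerate distribution $D$ supported on $a$, and then verify the biased metric constraint $L(D) \leq R$ directly; by \cref{eq:precise-biased} with $\lambda = 1$ this yields distortion at most $1 + 2 = 3$. The key structural input is the perfect matching $\sigma$ in $G(a, i^*)$ guaranteed by the hypothesis that $a$ lies in the matching uncovered set. (We may assume $a \neq i^*$, else the distortion is trivially $1$.) So the whole proof reduces to verifying a single pointwise inequality in $t$.

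Fix a biased metric given by $(x_1, \ldots, x_m)$ with $x_{i^*} = 0$. Because $D$ is supported on $a$, we have $\ell(D, t) = s_{I_t \cg a}\cdot \mathbf{1}[a \notin I_t]$, so it suffices to show $s_{I_t \cg a} \leq r(t)$ for each $t \geq 0$ with $a \notin I_t$; integrating over $t$ then gives $L(D) \leq R$. To prove the pointwise bound, consider any voter $v \in S_{I_t \cg a}$. By the edge condition in $G(a, i^*)$, there is a candidate $c_v$ with $a \cgeq_v c_v$ and $c_v \cgeq_{\sigma(v)} i^*$. The assumption $v \in S_{I_t \cg a}$ combined with $a \cgeq_v c_v$ forces $c_v \notin I_t$ (otherwise $c_v \cg_v a$, contradicting $a \cgeq_v c_v$), and hence $x_{c_v} > t \geq 0$; in particular $c_v \neq i^*$, so $c_v \cg_{\sigma(v)} i^*$ strictly. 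Thus the pair $(c_v, i^*)$ witnesses that $\sigma(v) \notin S_{\forall i \cg j,\, x_i - x_j \leq t}$, so $\sigma(v)$ contributes to the count defining $r(t)$. Since $\sigma$ is injective, the images of distinct $v \in S_{I_t \cg a}$ are distinct, giving $s_{I_t \cg a} \leq r(t)$.

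The proof is genuinely short once the biased metric machinery is in place; the only mild obstacle I anticipate is cleanly handling the degenerate cases (checking $c_v \neq i^*$, and noting that when $a \in I_t$ the left side is zero automatically). The reason the framework applies so cleanly here is the ``engineering'' of the biased metric, in particular the choice $x_{i^*} = 0$ together with $d(j, v) - d(i^*, v) = \min_{k : j \cgeq_v k} x_k$, which makes $c_v \cgeq_{\sigma(v)} i^*$ translate directly into $\sigma(v)$ lying outside the ``good'' voter set. The original proof of \cite{DBLP:conf/ec/MunagalaW19} essentially re-derives this via the explicit triangle inequality $d(a, v) \leq d(v, i^*) + 2d(\sigma(v), i^*)$; the biased metric framework replaces that ad hoc manipulation with a single injection argument.
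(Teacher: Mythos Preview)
Your proof is correct and shares the same combinatorial core as the paper's: both exploit the matching in $G(a,i^*)$ to show that the voters in $S_{I\cg a}$ (or $S_{I_t\cg a}$) can be injected into a set of ``bad'' voters on the right, via a witness candidate $c$ that the edge condition forces to lie outside $I$. The packaging differs slightly. The paper verifies the stronger discrete constraint \cref{eq:fancy}, showing there is \emph{no} edge from $S_{I\cg a}$ to $S_{i^*\cg I^c}$ and then invoking Hall's theorem to convert the perfect matching into the neighborhood bound $|N(S_{I\cg a})|\geq |S_{I\cg a}|$. You instead verify the pointwise inequality $\ell(D,t)\leq r(t)$ in \cref{eq:precise-biased}, using the matching $\sigma$ itself as the injection and landing in the complement of $S_{\forall i\cg j,\,x_i-x_j\leq t}$ rather than of $S_{i^*\cg I_t^c}$. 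Your route is a touch more direct (no Hall) but proves a nominally weaker inequality, since $S_{\forall i\cg j,\,x_i-x_j\leq t}\subseteq S_{i^*\cg I_t^c}$; the paper's route yields \cref{eq:fancy} for \emph{all} $I$ and $i^*\in I$, not just the nested family $I_t$ arising from one biased metric. Either suffices for the distortion-$3$ conclusion.
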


Given this theorem, all that remains is to show that the matching uncovered set is always nonempty. \cite{DBLP:conf/focs/GkatzelisHS20} proved this by considering a more manageable definition, below.

\begin{definition}
Given an election instance and a candidate $a$, the \emph{domination graph} $G(a)$ is a bipartite graph which has a vertex on each side for each voter, and the edge $(v, v')$ exists if $a \cgeq_v \topC(v')$ where $\topC(v')$ is the favorite candidate of $v'$.
\end{definition}

It is not hard to see that $G(a)$ is a subgraph of $G(a, b)$ for all $b\neq a$, \Cref{thm:mus-graph} implies that if $G(a)$ has a perfect matching, then $a$ has distortion at most $3$. Since there is only one graph per candidate, this version can be easier to work with. \cite{DBLP:conf/focs/GkatzelisHS20} proved the upper bound of $3$ by showing that there exists $a$ such that $G(a)$ always has a perfect matching, which also implies that the matching uncovered set is always nonempty.

Here, we will give an alternate proof of \Cref{thm:mus-graph} via the biased metric framework.

\begin{proof}[Proof of \Cref{thm:mus-graph}]
Consider the constraints given by \cref{eq:fancy} with $\lambda = 1$ and setting $p_{{a}} = 1$ and $p_i = 0$ for $i \neq {a}$. {It follows that} $a$ achieves distortion 3 if for all subsets of voters $I$ with $a \notin I$, and all $i^* \in I$, we have $s_{I \cg a}  + s_{i^*\cg I^c} \leq 1$.

Suppose that $a$ is in the matching uncovered set. We will use the fact that $G(a, i^*)$ has a perfect matching to prove that $s_{I \cg a}  + s_{i^*\cg I^c}  \leq 1$. 

We claim that in $G(a, i^*)$, there is no edge $(v, v')$ such that $v \in S_{I \cg a}$ and $v' \in S_{i^*\cg I^c}$. If there were, then we have that there exists a candidate $c$ such that $a \cgeq_v c$ and $c \cgeq_{v'} i^*$. But then we have that $I \cg_v a\cgeq_v c$ which means that $c \notin I$, but also $c \cgeq_{v'} i^* \cg_{v'} I^c$ which means $c\in I$. This is a contradiction, so the claim is true. 

It follows that in $G(a, i^*)$, the neighbors of the set $S_{I \cg a}$ (on the left) are disjoint from the set $S_{i^* \cg I^c}$ (on the right), which means that $|N(S_{I\cg a})| + |S_{i^*\cg I^c}| \leq n$. Since $G(a, i^*)$ has a perfect matching, by Hall's theorem,  $|N(S_{I\cg a})| \geq |S_{I \cg a}|$. It follows that $s_{I \cg a}  + s_{i^*\cg I^c} \leq 1$ as desired.
\end{proof}

\subsection{Plurality Veto}

\cite{DBLP:conf/ijcai/KizilkayaK22} later introduced a novel voting rule, Plurality Veto, and showed that it has distortion at most 3 via \Cref{thm:mus-graph} (the domination graph version). Their rule, and the proof of its distortion{,} are very clean and simple. Here, we take their proof and translate it into one that goes through biased metrics instead of \Cref{thm:mus-graph}. This {version of the proof} shows that if one takes biased metrics as a primitive, one can prove that there exists a deterministic rule with distortion 3 within a couple of paragraphs.  

\begin{framed}
    \begin{center}
        \textbf{Plurality Veto (\cite{DBLP:conf/ijcai/KizilkayaK22})}
    \end{center}
    
    \begin{itemize}

    	\item Initially the score of candidate $i$ is $n\cdot\plu(i)$. 

    	\item One by one, each voter decrements the score of their least favorite candidate with positive score.

    	\item The last candidate with positive score wins.

     \end{itemize}

\end{framed}

\begin{theorem}[\cite{DBLP:conf/ijcai/KizilkayaK22}]
Plurality Veto guarantees distortion $3$.
\end{theorem}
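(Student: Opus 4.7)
The plan is to apply the biased-metric framework with $\lambda = 1$, which would immediately give distortion $1 + 2\lambda = 3$. Because Plurality Veto is deterministic with some winner $w$ (so $p_w = 1$ and $p_j = 0$ otherwise), the family of constraints in \eqref{eq:fancy} collapses to a single inequality,
$$s_{I \cg w} + s_{i^* \cg I^c} \leq 1 \qquad \text{for every nonempty } I \subsetneq C \text{ with } w \notin I \text{ and every } i^* \in I,$$
so the whole task reduces to verifying this using only the combinatorial structure of the rule.

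The first step is to extract that structure. For each voter $v$, let $c_v$ denote the candidate whose score $v$ decrements during the run. Since $w$ is by definition the last candidate to reach score $0$, $w$ has positive score whenever any decrement occurs, which forces $c_v \cleq_v w$ for every $v$. Moreover, each candidate $c$ receives exactly $n \cdot \plu(c)$ decrements over the whole run, since its score falls from $n \cdot \plu(c)$ down to $0$ (for the winner $w$, we may use the convention that the process continues until $w$ too reaches $0$; any remaining voters at that stage have $c_v = w$, which is still consistent with $c_v \cleq_v w$).

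The second step uses these two facts to bound the two halves of the target inequality. For any $v \in S_{I \cg w}$ we have $c_v \cleq_v w \cl_v i$ for every $i \in I$, so $c_v \in I^c$, and counting decrements into $I^c$ gives $n \cdot s_{I \cg w} \leq \sum_{c \in I^c} n \cdot \plu(c)$. For any $v \in S_{i^* \cg I^c}$, the top choice $\topC(v)$ must lie in $I$ --- otherwise $\topC(v) \in I^c$ would be ranked above $i^* \in I$, contradicting $v \in S_{i^* \cg I^c}$ --- yielding $n \cdot s_{i^* \cg I^c} \leq \sum_{c \in I} n \cdot \plu(c) = n - n \sum_{c \in I^c} \plu(c)$. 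Adding these two bounds gives the required inequality exactly. The only delicate point is confirming that $w$ genuinely has positive score at every single decrement, which is precisely what the ``last candidate with positive score wins'' clause of the rule guarantees; beyond that, the argument is purely combinatorial and never touches the metric beyond what is already packaged into \eqref{eq:fancy}.
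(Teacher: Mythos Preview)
Your proof is correct and follows essentially the same approach as the paper's: both reduce to the constraint \eqref{eq:fancy} with $\lambda=1$, and both establish $s_{I\cg w}\le \sum_{c\in I^c}\plu(c)$ by observing that every voter in $S_{I\cg w}$ must decrement a candidate outside $I$ (since $w\notin I$ retains positive score throughout), together with $s_{i^*\cg I^c}\le\sum_{i\in I}\plu(i)$ from the top-choice observation. The only cosmetic difference is that you introduce the explicit map $v\mapsto c_v$, whereas the paper phrases the same counting argument in words.
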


\begin{proof}
Suppose we have an election instance, and let $c$ be the candidate {that} Plurality Veto chooses. Like before it suffices to show that for all sets $I$ such that $c\notin I$, and all $i^*\in I$, we have $s_{I \cg c} \leq 1 - s_{i^*\cg I^c}$.

The key observation is that the voters in $S_{I \cg c}$ do not decrement the score of any candidate in $I$. This is because $c$ always has positive score, and so for any voter in this set, no candidate in $I$ can be their least favorite candidate with positive score. On the other hand, since none of the candidates in $I$ are eventually chosen, at least $n\sum_{i\in I} \plu(i)$ voters must decrement the score of some candidate in $I$. It follows that 
$$s_{I\cg c} \leq 1 - \sum_{i\in I} \plu(i) \leq 1 - s_{i^* \cg I^c}$$
where the last inequality follows because the top candidate of a voter in $S_{i^* \cg I^c}$ must be in $I$. 
\end{proof}

\cite{DBLP:conf/ijcai/KizilkayaK22} also considered a class of randomized rules that are variants of Plurality Veto, called $k$-Round Plurality Veto. Instead of every voter having the opportunity to decrement some candidate's score, only $k$ voters do so. Then, the rule randomly chooses a candidate with probability proportional to their score. Notice that if $k = 0$, this rule is exactly Random Dictatorship, so the choice of $k$ can be thought of as a measure of interpolation between Random Dictatorship and Plurality Veto. 

Using a generalization of the flow technique used in \cite{DBLP:conf/aaai/Kempe20b}, \cite{DBLP:conf/ijcai/KizilkayaK22} showed that for any $k$, $k$-Round Plurality Veto has distortion at most 3. Below, we show that this can also be done via biased metrics, which is arguably simpler.

\begin{theorem}[\cite{DBLP:conf/ijcai/KizilkayaK22}]
$k$-Round Plurality Veto has distortion $3$ for all $0\leq k \leq n$.
\end{theorem}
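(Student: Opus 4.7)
The plan is to verify the sufficient condition $\sum_{j \notin I} s_{I \cg j} p_j \leq 1 - s_{i^* \cg I^c}$ from \cref{eq:fancy} (with $\lambda = 1$) for $k$-Round Plurality Veto, extending the Plurality Veto argument by handling one extra layer of averaging. I would write $\sigma_j$ for the random final score of $j$ after the $k$ decrement steps, set $\sigma(I) = \sum_{i \in I} \sigma_i$, and note that $\sum_j \sigma_j = n - k$ deterministically while $p_j = \Ev[\sigma_j]/(n-k)$. So the target is equivalent to $\Ev[\sum_{j \notin I} s_{I \cg j} \sigma_j] \leq (n-k)(1 - s_{i^* \cg I^c})$. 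As in the Plurality Veto proof, any voter in $S_{i^* \cg I^c}$ must have top choice in $I$, so $s_{i^* \cg I^c} \leq P_I := \sum_{i \in I} \plu(i)$; hence it suffices to show $\Ev[\sum_{j \notin I} s_{I \cg j} \sigma_j] \leq (n-k)(1 - P_I)$.

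Next I would establish a per-realization bound. Fix any realization of the rule's randomness, producing final scores $\sigma$ and a count $D_I = nP_I - \sigma(I)$ of decrements that landed on $I$. For any $c \notin I$ with $\sigma_c > 0$, the candidate $c$ has positive score throughout the process, so any voter $v \in S_{I \cg c}$ who happens to be one of the $k$ decrementers must remove a candidate ranked at or below $c$ in $v$'s list, which is necessarily outside $I$---exactly the key observation from the Plurality Veto proof. Consequently the $D_I$ decrements on $I$ are performed by voters outside $S_{I \cg c}$, giving $|S_{I \cg c}| \leq (k - D_I) + (n - k) = n - D_I$, i.e., $s_{I \cg c} \leq 1 - D_I/n$. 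Multiplying by $\sigma_c$ and summing over $c \notin I$ (terms with $\sigma_c = 0$ vanish) yields the realization-level bound $\sum_{j \notin I} s_{I \cg j} \sigma_j \leq (1 - P_I + \sigma(I)/n)\,\sigma(I^c)$.

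The final step is to take expectations and handle the product $\sigma(I)\sigma(I^c)$. The key trick is that $\sigma(I) + \sigma(I^c) = n - k$ is deterministic, so $\mathrm{Cov}(\sigma(I), \sigma(I^c)) = -\mathrm{Var}(\sigma(I)) \leq 0$, giving the clean inequality $\Ev[\sigma(I)\sigma(I^c)] \leq \Ev[\sigma(I)]\,\Ev[\sigma(I^c)]$. Writing $S = \Ev[\sigma(I)]$ so that $\Ev[\sigma(I^c)] = n-k-S$, the upper bound on the expectation becomes $(n-k-S)(1 - P_I + S/n)$, and the desired inequality $(n-k-S)(1 - P_I + S/n) \leq (n-k)(1 - P_I)$ reduces by short algebra to $S(S - (nP_I - k)) \geq 0$. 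Both factors are nonnegative: $S \geq 0$ trivially, and $\sigma(I) \geq nP_I - k$ holds deterministically (at most $k$ decrements can land on $I$), so $S \geq \max(0, nP_I - k)$.

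The main obstacle I anticipate is the coupling between the random $\sigma_c$ and the fixed $s_{I \cg c}$ together with the interdependence of scores across candidates; the negative-covariance identity above is exactly what dissolves these correlations, and it is the kind of trick the biased metric framework accommodates without any flow or matching machinery of the sort used by \cite{DBLP:conf/aaai/Kempe20b,DBLP:conf/ijcai/KizilkayaK22}. The edge case $k = n$ (where dividing by $n-k$ is invalid) is already covered by the deterministic Plurality Veto proof above, so the argument handles the regime $0 \leq k \leq n - 1$.
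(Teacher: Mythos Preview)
Your proof is correct and shares the paper's core approach: reduce to \cref{eq:fancy} with $\lambda=1$, use the key observation that while $c\notin I$ has positive score no voter in $S_{I\cg c}$ can decrement a candidate in $I$, and then close with the same algebra. The one divergence is that you treat the scores $\sigma_j$ as random and invoke a negative-covariance identity, whereas the paper simply fixes the voter ordering so the scores are deterministic. In fact the covariance step is unnecessary even in your framing: your per-realization bound $(1-P_I+\sigma(I)/n)\,\sigma(I^c)\le (n-k)(1-P_I)$ already holds pointwise, since with $\sigma(I)=nP_I-D_I$ it rearranges to $(k-D_I)(nP_I-D_I)\ge 0$, both factors nonnegative. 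So the expectation is trivial and the ``key trick'' you highlight is not actually doing any work; the paper's version is the same argument stripped of that layer.
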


\begin{proof}
Let $p_j$ be the probability that $k$-Round Plurality Veto chooses candidate $j$ (for some fixed ordering of the voters). {Once again using} \cref{eq:fancy}, it suffices to show that for all sets $I$ and all $i^* \in I$, we have that
$$\sum_{j \notin I} s_{I\cg j} p_j \leq 1 - \sum_{i\in I}\plu(i) = \sum_{j\notin I}\plu(j).$$
Among the $k$ voters that decremented a candidate's score, let $V_j$ denote the set that decremented candidate $j$'s score, and let $V_I = \bigcup_{i\in I} V_i$ denote the set that decremented the score of some candidate in set $I$. We will use the lower case $v$ to denote the proportion of voters in the corresponding set. For instance, note that $\frac{k}{n} = v_C =v_I + v_{I^c}$. With this notation, we can observe that
$$p_j = \frac{\plu(j) - v_j}{1 - v_C} =\frac{\plu(j) - v_j}{1 - v_I - v_{I^c}}.$$ 
{Observe that} if $p_j > 0$ then no voter in the set $S_{I \cg j}$ decrements the score of a candidate in $I$. i.e., $s_{I \cg j} \leq 1 - v_I$. This {observation} tells us that 
$$\sum_{j \notin I} s_{I\cg j} p_j  \leq \frac{1 - v_I}{1 - v_I - v_{I^c}}\sum_{j \notin I} (\plu(j) - v_{j})= \frac{1 - v_I}{1 - v_I - v_{I^c}} \left(-v_{I^c} + \sum_{j \notin I} \plu(j)\right).$$
Therefore, it suffices to show that
$$(1 - v_I)\left(-v_{I^c} + \sum_{j \notin I} \plu(j)\right) \leq (1 - v_I - v_{I^c})\sum_{j\notin I}\plu(j).$$
Rearranging, we see that this is equivalent to
$$\sum_{j\notin I}\plu(j) \leq 1 - v_I \iff v_I \leq  \sum_{i\in I}\plu(i).$$
This is of course true because $n\sum_{i\in I}\plu(i)$ is the total initial score of the candidates in $I$, and $nv_I = |V_I|$ is the amount {by which} these scores are decremented in the rule.
\end{proof}

\subsection{Random Dictatorship and its Variants}

\cite{DBLP:journals/jair/AnshelevichP17,DBLP:conf/sigecom/FeldmanFG16} first showed that Random Dictatorship, which chooses candidate $i$ with probability $\plu(i)$, gets distortion 3. This {result} can be proved quite easily using \cref{eq:fancy}:
$$\sum_{j \notin I} s_{I \cg j} \plu(j) \leq \sum_{j \notin I} \plu(j) \leq 1 - s_{i^* \cg I^c}.$$
The last inequality follows because {the set of voters who prefer $i^*$ over $I^c$ is disjoint from the set of voters who rank $j \notin I$ first, for any $j$}.

\cite{DBLP:conf/focs/GkatzelisHS20} showed that Smart Dictatorship, which chooses candidate $i$ with probability proportional to $\frac{\plu(i)}{1 - \plu(i)}$, has distortion at most $3 - 2/m$ within the class of instances with $m$ candidates. \cite{DBLP:conf/soda/CharikarR22} showed that this can also be proved using \cref{eq:fancy}. Since $1 - \plu(j) \geq s_{I \cg j}$, we have very similarly that
$$\sum_{j \notin I} s_{I \cg j} \frac{\plu(j)}{1 - \plu(j)} \leq \sum_{j \notin I} \plu(j) \leq 1 - s_{i^* \cg I^c}.$$
{It follows} that in \cref{eq:fancy}, we can take $\frac{1}{\lambda} = \sum_{i} \frac{\plu(i)}{1 - \plu(i)}$, which is at least $\frac{1}{1 - \|\textbf{plu}\|_2^2}$ by Jensen's inequality with the function $f(x) =\frac1{1 - x}$. {The resulting distortion is at most} $3 - 2\|\textbf{plu}\|_2^2 \leq 3 - 2/m$.

Finally, we note that this approach can also give us stronger distortion guarantees in special cases of election instances. For instance, we can prove the following.

\begin{theorem}\label{thm:rd-bwc}
Within the class of instances {for which} $s_{i\cg j} \in [\frac12 -\eps, \frac12 + \eps]$ for all $i, j$, Random Dictatorship guarantees distortion at most $2 + 2\eps$.
\end{theorem}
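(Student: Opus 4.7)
The plan is to apply the constraint in \cref{eq:fancy} with $p_j = \plu(j)$ and $\lambda = \tfrac12 + \eps$, which would immediately yield distortion $1 + 2\lambda = 2 + 2\eps$. That is, for every nonempty proper subset $I \subsetneq C$ and every $i^* \in I$, I want to establish
\[
\sum_{j \notin I} s_{I \cg j}\, \plu(j) \;\le\; \left(\tfrac12 + \eps\right)\bigl(1 - s_{i^* \cg I^c}\bigr).
\]

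The key observation is that the pairwise bound $s_{i \cg j} \in [\tfrac12 - \eps, \tfrac12 + \eps]$ lets us tighten the first step of the Random Dictatorship argument from the preceding subsection. First, I would fix any $i_0 \in I$ (possible since $I \ne \varnothing$). Since $S_{I \cg j} \subseteq S_{i_0 \cg j}$ for every $j \notin I$, and since $i_0 \neq j$, the hypothesis gives
\[
s_{I \cg j} \;\le\; s_{i_0 \cg j} \;\le\; \tfrac12 + \eps.
\]
Factoring this bound out of the sum yields
\[
\sum_{j \notin I} s_{I \cg j}\, \plu(j) \;\le\; \left(\tfrac12 + \eps\right) \sum_{j \notin I} \plu(j).
\]

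Second, I would reuse the Random Dictatorship bound $\sum_{j \notin I} \plu(j) \le 1 - s_{i^* \cg I^c}$ exactly as in the earlier discussion: any voter $v \in S_{i^* \cg I^c}$ must have their top choice in $I$ (since $i^* \in I$ is ranked above every candidate in $I^c$), so the sets $S_{i^* \cg I^c}$ and $\bigcup_{j \notin I} S_{j \cg C \setminus \{j\}}$ are disjoint. Chaining the two inequalities gives precisely the constraint \cref{eq:fancy} with $\lambda = \tfrac12 + \eps$, completing the proof.

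There is no real obstacle here; the argument is a one-line strengthening of the standard proof that Random Dictatorship has distortion $3$. The only point worth noting is that the hypothesis $s_{i \cg j} \le \tfrac12 + \eps$ is exactly what forces each $s_{I \cg j}$ (for $j \notin I$, $I \ne \varnothing$) to satisfy the same bound, and this alone is responsible for replacing the factor $1$ by $\tfrac12 + \eps$ in the Random Dictatorship analysis.
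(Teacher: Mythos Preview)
Your proposal is correct and matches the paper's proof essentially line for line: bound $s_{I\cg j}\le \tfrac12+\eps$ (the paper states this directly, you justify it via $s_{I\cg j}\le s_{i_0\cg j}$), factor it out, then apply the Random Dictatorship inequality $\sum_{j\notin I}\plu(j)\le 1-s_{i^*\cg I^c}$ to verify \cref{eq:fancy} with $\lambda=\tfrac12+\eps$.
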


\begin{proof}
We have $s_{I \cg j} \leq \frac12 + \eps$, and so
$$\sum_{j \notin I} s_{I \cg j} \plu(j) \leq (\tfrac12 + \eps)\sum_{j \notin I} \plu(j) \leq (\tfrac12 + \eps)(1 - s_{i^* \cg I^c}).$$
{It follows} that in \cref{eq:fancy}, we can take $\lambda = \frac12 + \eps$, which corresponds to distortion $1 + 2(\frac12 + \eps) = 2 + 2\eps$. 
\end{proof}

\section{Proofs for Biased Metrics}\label{sec:biased-pfs}

In this section, we will prove two key properties of biased metrics: (1) that \Cref{def:biased} actually defines a valid metric space, and (2) that these metrics are the hardest for the problem. These proofs are adapted from \cite{DBLP:conf/soda/CharikarR22} with minor adjustments.

\begin{proposition}
For any vector $(x_1, \ldots, x_m)$ of nonnegative real numbers {such that} $x_{i^*} = 0$, and any election instance, the biased metric corresponding to $(x_1, \ldots, x_m)$ is indeed a valid distance metric. 
\end{proposition}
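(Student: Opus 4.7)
The plan is to verify the three (pseudo-)metric axioms. Nonnegativity of each explicit distance is straightforward: for $d(i^*, v)$, taking $i = j$ in the maximum gives $0$, so the max is at least $0$; and then $d(j, v) = d(i^*, v) + \min_{k: j \cgeq_v k} x_k$ is a sum of nonnegative terms. Symmetry is enforced on voter-candidate pairs by convention, and the implicit voter-voter and candidate-candidate distances (defined via the graph distance closure described in \Cref{sec:prelims}) are automatically symmetric.

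The nontrivial axiom is the triangle inequality. Because the implicit distances are shortest paths in the bipartite graph whose edges are the explicit voter-candidate distances, the triangle inequality on the full space holds automatically as soon as we know that no alternative path yields a shorter distance than the direct edge on a voter-candidate pair. Voter-candidate paths in this bipartite graph have odd length, and a straightforward induction on path length (applying the 3-hop inequality repeatedly, first to the initial three edges and then to the remainder) reduces the task to the single claim
\[
d(v, j) \;\leq\; d(v, i) + d(i, u) + d(u, j)
\]
for all voters $u, v$ and candidates $i, j$.

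Introducing the shorthand $\mu_w(c) := \min_{k: c \cgeq_w k} x_k$ and $\nu_w := \max_{a \cgeq_w b}(x_a - x_b)$, so that $d(w, c) = \tfrac12 \nu_w + \mu_w(c)$, the 3-hop inequality unpacks to $\mu_v(j) \leq \mu_v(i) + \nu_u + \mu_u(i) + \mu_u(j)$. Since $\mu_v(i), \mu_u(i) \geq 0$, it suffices to prove the stronger statement
\[
\mu_v(j) \;\leq\; \mu_u(j) + \nu_u,
\]
which notably does not involve $i$ at all.

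The main obstacle --- really the only nontrivial step --- is proving this stronger inequality, which I would handle by a short case split on the position, in voter $v$'s preference order, of $k^* \in \arg\min_{k: j \cgeq_u k} x_k$ (so $k^* \cleq_u j$ and $\mu_u(j) = x_{k^*}$). If $k^* \cleq_v j$, then including $k^*$ in the min defining $\mu_v(j)$ immediately gives $\mu_v(j) \leq x_{k^*} = \mu_u(j)$. Otherwise $k^* \cg_v j$, in which case we use the weaker bound $\mu_v(j) \leq x_j$ (by taking $k = j$ in the min); combining this with $x_j - x_{k^*} \leq \nu_u$, which follows from $j \cgeq_u k^*$, yields $\mu_v(j) \leq x_{k^*} + \nu_u = \mu_u(j) + \nu_u$. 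In either case the stronger inequality holds, and hence so does the triangle inequality.
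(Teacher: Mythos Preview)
Your proof is correct and follows essentially the same route as the paper: both reduce the triangle inequality to comparing an explicit voter--candidate edge against alternative paths, and both ultimately establish the key inequality $\mu_v(j) \leq \mu_u(j) + \nu_u$. Two minor differences: the paper handles arbitrary-length paths in one stroke by keeping only the first two and the last edges (rather than inducting down to three hops), and your case split is not actually needed---your Case~2 bound $\mu_v(j) \leq x_j$ holds unconditionally, which is exactly the single-line argument the paper uses.
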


\begin{proof}
Clearly, the distances we have defined are nonnegative, since the expression for $d(i^*, v)$ allows for $i = j$, which means that it is the maximum over a set that includes 0, and for any $j \neq i^*$ we have $d(j, v) \geq d(i^*, v)$. Thus, it suffices to show that the metric satisfies the triangle inequality. 

We can view the metric as a weighted graph, and in order to show that it satisfies the triangle inequality, we need to show that the weight of any edge is at most the sum of the weights of any path between the endpoints of the edge.

Suppose we have some path between a candidate $j$ and voter $v$. We will show that $d(j, v)$ is at most the total weight of the path. Recall that $d(j, v) = d(i^*, v)  + \displaystyle\min_{k: j \cgeq_v k} x_k \leq d(i^*, v) + x_j$. Now, suppose that the first two edges on the path are $(j, u)$ and $(u, k)$, and the last edge is $(i, v)$. Then the total length of the path is at least 
$$d(j, u) + d(u, k) + d(i, v) \geq d(j, u) + d(i^*, u) + d(i^*, v).$$ 
Now, $d(j, u) = d(i^*, u) + x_\ell$ for some $\ell$ for which $j \cgeq_u \ell$. By the definition of $d(i^*, u)$, we also have $2d(i^*, u) \geq x_j - x_\ell$. Putting all of these together, we have
$$d(j, u) + d(i^*, u) + d(i^*, v) \geq 2d(i^*, u) + x_\ell + d(i^*, v) \geq x_j + d(i^*, v) \geq d(j, v)$$
as desired.
\end{proof}

\begin{proposition}\label{prop:suffbiased}
Suppose we have an election instance and a distance metric $d$ that is consistent with the instance. Let $i^* = \arg\min_i\SC(i, d)$. Then there is a biased metric $\hat{d}$ such that $\SC(i^*, \hat{d}) \leq \SC(i^*, d)$ and  $\SC(j, \hat{d}) - \SC(i^*, \hat{d}) \geq \SC(j, d) - \SC(i^*, d)$ for each $j \neq i^*$.
\end{proposition}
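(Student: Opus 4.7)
The plan is to take $x_j := d(j, i^*)$ for every candidate $j$ and let $\hat{d}$ be the biased metric corresponding to this vector; this is a legal choice since $x_{i^*} = 0$ and each $x_j \geq 0$. The proposition then reduces to two pointwise inequalities, one for each voter $v$:
(i) $\hat{d}(i^*, v) \leq d(i^*, v)$, and
(ii) $\hat{d}(j, v) - \hat{d}(i^*, v) \geq d(j, v) - d(i^*, v)$ for every candidate $j$.
Averaging (i) over voters gives $\SC(i^*, \hat{d}) \leq \SC(i^*, d)$, and averaging (ii) over voters gives the second conclusion of the proposition.

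For (i), I would fix any pair with $i \cgeq_v j$. Consistency of $d$ with the instance gives $d(i, v) \leq d(j, v)$, while the triangle inequality yields $x_i = d(i, i^*) \leq d(i, v) + d(i^*, v)$ and $x_j = d(j, i^*) \geq d(j, v) - d(i^*, v)$. Subtracting these and using $d(i, v) \leq d(j, v)$ gives $x_i - x_j \leq 2 d(i^*, v)$. Taking the maximum over all such pairs $(i, j)$ and dividing by two matches the definition of $\hat{d}(i^*, v)$ in \Cref{def:biased}, yielding (i).

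For (ii), I would fix any $k$ with $j \cgeq_v k$. Consistency gives $d(j, v) \leq d(k, v)$, so a single application of the triangle inequality gives
\[
d(j, v) - d(i^*, v) \leq d(k, v) - d(i^*, v) \leq d(k, i^*) = x_k.
\]
Taking the minimum over all such $k$ (including $k = j$ itself, which is always valid) yields $d(j, v) - d(i^*, v) \leq \min_{k : j \cgeq_v k} x_k = \hat{d}(j, v) - \hat{d}(i^*, v)$, which is (ii). There is no genuine obstacle once the right choice of $x_j$ is made; the whole content of the biased-metric construction is that taking $x_j = d(j, i^*)$ simultaneously saturates the triangle inequalities around $i^*$ to shrink $d(i^*, v)$ as much as the preferences allow while stretching the gap $d(j, v) - d(i^*, v)$ as much as they allow.
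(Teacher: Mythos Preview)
Your proof is correct and follows essentially the same approach as the paper: set $x_j = d(j, i^*)$, then establish the two pointwise inequalities $\hat d(i^*,v)\le d(i^*,v)$ and $\hat d(j,v)-\hat d(i^*,v)\ge d(j,v)-d(i^*,v)$ via the triangle inequality and consistency, exactly as the paper does. The only cosmetic difference is that for (ii) you bound by every admissible $k$ and then take the minimum, whereas the paper fixes $k$ to be the argmin up front; these are equivalent.
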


In particular, this shows that for any election rule, the distortion of the rule is {weakly larger} with $\hat{d}$ than with $d$. Thus, showing that a rule has low distortion on all biased metrics is sufficient to show that it has low distortion for all metrics.

\begin{proof}
Let $x_i = d(i, i^*)$, and let $\hat{d}$ be the biased metric for $(x_1,x_2, \ldots, x_m)$. We will show that for any voter $v$, $\hat{d}(i^*, v) \leq d(i^*, v)$ and  $\hat{d}(j, v) - \hat{d}(i^*, v) \geq d(j, v) - d(i^*, v)$, which will immediately imply the proposition.

Fix $v$, and let $i$ and $j$ be such that $i \cgeq_v j$. Then we have
$$ d(i, i^*) \leq d(i, v) + d(i^*, v) \leq d(j, v) + d(i^*, v) \leq d(j, i^*) + 2d(i^*, v).$$
{It follows} that $d(i^*, v) \geq  \frac{d(i, i^*) - d(j, i^*)}{2} = \frac{x_i - x_j}{2}$. Taking the maximum over all choices of $i$ and $j$ such that $i \cgeq_v j$, we get 
$$d(i^*, v) \geq \frac{1}{2}\max_{i, j: i \cgeq_v j} (x_i - x_j) = \hat{d}(i^*, v)$$ 
as claimed. Next, fix a candidate $j \neq i^*$ and a voter $v$. Let $k = \displaystyle\arg \min_{k: j \cgeq_v k} x_k$, so that $\hat{d}(j, v) - \hat{d}(i^*, v) = x_k$. Then we have
$$d(j, v) \leq d(k, v) \leq d(k, i^*) + d(i^*, v) = x_k + d(i^*, v)$$
which means that 
$$ d({j}, v) - d(i^*, v) \leq x_j = \hat{d}({j}, v) - \hat{d}(i^*, v)$$
as claimed. {With this, we establish} the proposition.
\end{proof}

\end{document}